\newcommand{\bs}{\boldsymbol}
\DeclareMathOperator*{\argmax}{arg\,max}
\begin{document}
\title{Rank-1 Bimatrix Games: A Homeomorphism and \\a Polynomial Time Algorithm}
\author{Bharat Adsul \and Jugal Garg \and Ruta Mehta \and Milind Sohoni\\ Indian Institute of Technology, Bombay\\
adsul, jugal, ruta, sohoni@cse.iitb.ac.in}
\maketitle
\begin{abstract}
Given a rank-1 bimatrix game $(A,B)$, {\it i.e.,} where $rank(A+B)=1$, we construct a suitable linear subspace of the rank-1
game space and show that this subspace is homeomorphic to its Nash equilibrium correspondence. Using this homeomorphism, we
give the first polynomial time algorithm for computing an exact Nash equilibrium of a rank-1 bimatrix game. This settles an
open question posed in \cite{kan,the}. In addition, we give a novel algorithm to enumerate all the Nash equilibria of a
rank-1 game and show that a similar technique may also be applied for finding a Nash equilibrium of any bimatrix game. This
technique also proves the existence, oddness and the index theorem of Nash equilibria in a bimatrix game. Further, we extend
the rank-1 homeomorphism result to a fixed rank game space, and give a fixed point formulation on $[0,1]^k$ for solving a
rank-$k$ game. The homeomorphism and the fixed point formulation are piece-wise linear and considerably simpler than the
classical constructions.
\end{abstract}

\section{Introduction}
Non-cooperative game theory is a model to understand strategic interaction of selfish agents in a given organization. 
In a finite game, there are finitely many agents, each having finitely many strategies. For finite games, Nash \cite{nash}
proved that there exists a steady state where no player benefits by a unilateral deviation. Such a steady state is called a
Nash equilibrium of the game. 

Finite games with two agents are also called bimatrix games since they may be represented by two payoff matrices
$(A,B)$, one for each agent. The problem of computing a Nash equilibrium of a bimatrix game is said to be one of the most
important concrete open questions on the boundary of $\mathcal P$ \cite{pap}. The classical Lemke-Howson (LH) algorithm
\cite{lem} finds a Nash equilibrium of a bimatrix game. However, Savani and von Stengel \cite{sav} showed that it is not a
polynomial time algorithm by constructing an example, for which the LH algorithm takes an exponential number of steps. Chen
and Deng \cite{chen} showed that this problem is $\mathcal{PPAD}$-complete, a complexity class introduced by Papadimitriou
\cite{pap1}. They (together with Teng) \cite{chen1} also showed that the computation of even a
$\frac{1}{n^{\Theta(1)}}$-approximate Nash equilibrium remains $\mathcal{PPAD}$-complete. These results suggest that a
polynomial time algorithm is unlikely.

There are some results for special cases of the bimatrix games. Lipton et al. \cite{lip} considered games where both payoff
matrices are of fixed rank $k$ and for these games, they gave a polynomial time algorithm for finding a Nash equilibrium.
However, the expressive power of this restricted class of games is limited in the sense that most zero-sum games are not
contained in this class. Kannan and Theobald \cite{kan} defined a hierarchy of bimatrix games using the rank of $(A+B)$ and
gave a polynomial time algorithm to compute an approximate Nash equilibrium for games of a fixed rank $k$. The set of
rank-$k$ games consists of all the bimatrix games with rank at most $k$. Clearly, rank-$0$ games are the same as zero-sum
games and it is known that the set of Nash equilibria of a zero-sum game is a polyhedral set (hence, connected) and it may be
computed in polynomial time by solving a linear program (LP). Moreover, the problem of finding a Nash equilibrium of 
zero-sum games and solving linear programs are equivalent \cite{dan}. 

The set of rank-$1$ games is the smallest extension of zero-sum games in the hierarchy, which strictly generalizes zero-sum
games. For any given constant $c$, Kannan and Theobald \cite{kan} also construct a rank-$1$ game, for which the number of
connected components of Nash equilibria is larger than $c$. This shows that the expressive power of rank-$1$ games is 
larger than the zero-sum games. 
Rank-$1$ games may also arise in practical situations, in particular the {\em multiplicative games} between firms and workers
in \cite{bulow} are rank-$1$ games. 
A polynomial time algorithm to compute an exact Nash equilibrium for rank-$1$ games is an important open problem
\cite{kan,the}. Kontogiannis and Spirakis \cite{ks} defined the notion of mutual (quasi-) concavity of a bimatrix game and
for mutual (quasi-) concave games, they provide a polynomial time (FPTAS) computation of a Nash equilibrium, however their
classification and the games of fixed rank are incomparable.

Shapley's index theory \cite{shap} assigns a sign (also called an index) to a Nash equilibrium of a bimatrix game and shows
that the indices of the two endpoints of a Lemke-Howson path have opposite signs. The signs of the endpoints of LH paths
provide a direction and in turn a ``parity argument" that puts the Nash equilibrium problem of a bimatrix game in
$\mathcal{PPAD}$ \cite{pap1,stengel}.
The set of bimatrix games $\Omega$, where the number of strategies of the first and second players are $m$ and $n$
respectively, forms a $\mathbb R^{2mn}$ Euclidean space, {\it i.e.,} $\Omega=\{(A,B)\in \mathbb R^{mn}\times \mathbb
R^{mn}\}$. Kohlberg and Mertens \cite{koh} showed that $\Omega$ is homeomorphic to its Nash equilibrium
correspondence\footnote{The actual result is for $N$ player game space.} $E_{\Omega}=\{(A,B,x,y)\in \mathbb R^{2mn+m+n}\ |\
(x,y) \mbox{ is a Nash equilibrium of }(A,B)\}$. 

This structural result has been used extensively to understand the index,
degree and the stability of a Nash equilibrium of a bimatrix game \cite{g2,koh}. Moreover, the homeomorphism result also
validates the homotopy methods devised to compute a Nash equilibrium \cite{gov,sur}. 
The structural result has been extended for more general game spaces \cite{ak}, however, to the best of our knowledge, no such
result is known for special subspaces of the bimatrix game space. Such a result may pave a way to device a better algorithm
for the Nash equilibrium computation or to prove the hardness of computing a Nash equilibrium, for the games in the subspace.
\\ \\
\noindent {\bf Our contributions.} 
For a given rank-$1$ game $(A,B)\in\mathbb R^{mn}\times \mathbb R^{mn}$, the matrix $(A+B)$ may be written as
$\alpha\cdot\beta^T$, where $\alpha\in\mathbb R^m$ and $\beta\in\mathbb R^n$. Motivated by this fact, in Section
\ref{games_polytopes}, we define an $m$-dimensional subspace $\Gamma=\{(A,C+\alpha.\beta^T)\ |\ \alpha\in\mathbb R^m\}$ of
$\Omega$, where $A\in\mathbb R^{mn},C\in\mathbb R^{mn}$ and $\beta\in\mathbb R^n$ are fixed and analyze the structure of its
Nash equilibrium correspondence $E_\Gamma$. For a given bimatrix game $(A,B)$, the best response polytopes $P$ and $Q$ may be
defined using the payoff matrices $A$ and $B$ respectively \cite{agt} (also in Section \ref{prel}). There is a notion of
fully-labeled points of $P\times Q$, which capture all the Nash equilibria of the game. Note that the polytope $P$ is same
for all the games in $\Gamma$ since the payoff matrix of the first player is fixed to $A$. However the payoff matrix
of the second player varies with $\alpha$, hence $Q$ is different for every game. We define a new polytope $Q'$ in Section
\ref{games_polytopes}, which encompasses $Q$ for all the games in $\Gamma$. We show that the set of fully-labeled points
of $P\times Q'$, say $\mathcal N$, captures all the Nash equilibria of all the games in $\Gamma$ and in turn captures
$E_\Gamma$.  

Surprisingly, $\mathcal N$ turns out to be a set of cycles and a single path on the $1$-skeleton of $P\times Q'$. We refer to the path
in $\mathcal N$ as the fully-labeled path and show that it contains at least one Nash equilibrium of every game in $\Gamma$.
The structure of $\mathcal N$ also proves the existence and the oddness of the number of Nash equilibria in a non-degenerate
bimatrix game. Moreover, an edge of $\mathcal N$ may be efficiently oriented, and using this orientation, we determine the
index of every Nash equilibria for a bimatrix game. 
Further, in Section \ref{hom} we show that if $\Gamma$ contains only rank-1 games ({\em i.e.}, $C=-A$) then $\mathcal N$ does
not contain any cycle and the fully-labeled path exhibits a strict monotonicity. Using this monotonic nature, we establish
homeomorphism maps between $\Gamma$ and $E_\Gamma$. This is the first structural result for a subspace of the bimatrix game
space. The homeomorphism maps that we derive are very different than the ones given by Kohlberg and Mertens for the bimatrix
game space \cite{koh}, and require a structural understanding of $E_\Gamma$. 

Using the above facts on the structure of $\mathcal N$, in Section \ref{algo} we present two algorithms. For a given rank-$1$
game $(A,-A+\gamma.\beta^T)$, we consider the subspace $\Gamma=\{(A,-A+\alpha.\beta^T)\ |\ \alpha\in\mathbb R^m\}$. Note that
$\Gamma$ contains the given game and the corresponding set $\mathcal N$ is a path which captures all the Nash equilibria of
the game. The first algorithm ({\em BinSearch}) finds a Nash equilibrium of a rank-$1$ game in polynomial time by applying
binary search on the fully-labeled path using the monotonic nature of the path. To the best of our knowledge, this is the
first polynomial time algorithm to find an exact Nash equilibrium of a rank-$1$ game. 

The second algorithm ({\em Enumeration}) enumerates all the Nash equilibria of a rank-$1$ game. Using the fact that $\mathcal
N$ contains only the fully-labeled path, the {\em Enumeration} algorithm traces this path and locates all the Nash equilibria
of the game.  For an arbitrary bimatrix game, we may define a suitable $\Gamma$ containing the game. Since the fully-labeled
path of the corresponding $\mathcal N$ covers at least one Nash equilibrium of all the games in $\Gamma$, the {\em
Enumeration} algorithm locates at least one Nash equilibrium of the given bimatrix game.
Theobald \cite{the} also gave an algorithm to enumerate all the Nash equilibria of a rank-$1$ game, however it may not be
generalized to find a Nash equilibrium of any bimatrix game. Moreover, our algorithm is much simpler and a detailed
comparison is given in Section \ref{enum}. There, we also compare our algorithm with the Lemke-Howson algorithm, which
follows a path of almost\footnote{For a fixed label $1\leq r\leq m+n$, all the labels except $r$ should be present.}
fully-labeled points \cite{agt}. 

For a given rank-$k$ game $(A,B)$, the matrix $(A+B)$ may be written as $\sum_{l=1}^{k} \gamma^l.\beta^{l^T}$, where $\forall
l, \gamma^{l}\in\mathbb R^{m}$ and $\beta^{l}\in\mathbb R^{n}$. We define a $km$-dimensional affine subspace
$\Gamma^k=\{(A,-A+\sum_{l=1}^{k} \alpha^l.\beta^{l^T})\ |\ \alpha^l \in\mathbb R^{m}, \forall l\}$ of $\Omega$. In Section
\ref{fixedRank}, we establish a homeomorphism between $\Gamma^k$ and its Nash equilibrium correspondence $E_{\Gamma^k}$
using techniques similar to the rank-$1$ homeomorphism. Further, to find a Nash equilibrium of a rank-$k$ game we give a
piece-wise linear polynomial-time computable fixed point formulation on $[0, 1]^k$ using the homeomorphism result and
discuss the possibility of a polynomial time algorithm. 

\section{Games and Nash Equilibrium}
\subsection{Preliminaries}\label{prel}
{\bf Notations.} For a matrix $A=[a_{ij}] \in \mathbb R^{mn}$ of dimension $m\times n$, let $A_i$ be the $i^{th}$ row and
$A^j$ be the $j^{th}$ column of the matrix. Let $0_{l\times k}$ and $1_{l \times k}$ be the matrices of dimension $l\times k$
with all $0$s and all $1$s respectively. For a vector $\alpha \in \mathbb R^{m}$, let $\alpha_i$ be its $i^{th}$ coordinate.
Vectors are considered as column vectors. 

For a finite two players game, let the strategy sets of the first and the second player be $S_1=\{1,\dots,m\}$ and
$S_2=\{1,\dots,n\}$ respectively. The payoff function of such a game may be represented by the two payoff matrices $(A,B)
\in \mathbb R^{mn} \times \mathbb R^{mn}$, each of dimension $m \times n$. If the played strategy profile is $(i,j)\in
S_1\times S_2$, then the payoffs of the first and second players are $a_{ij}$ and $b_{ij}$ respectively. Note that the rows
of these matrices correspond to the strategies of the first player and the columns to that of the second player, hence the
first player is also referred to as the {\em row-player} and second player as the {\em column-player}.

A {\em mixed strategy} is a probability distribution over the available set of strategies. The set of mixed strategies for
the row-player is $\Delta_1=\{(x_1,\dots,x_m)\ |\ x_i\geq 0,\ \forall i \in S_1,\ \sum_{i=1}^m x_i=1 \}$ and for the
column-player, it is $\Delta_2=\{(y_1,\dots,y_n)\ |\ y_j\geq 0,\ \forall j \in S_2,\ \sum_{j=1}^n y_j=1\}$. The strategies in
$S_1$ and $S_2$ are called {\em pure strategies}. If the strategy profile $(x,y) \in \Delta_1 \times \Delta_2$ is played, then
the payoffs of the row-player and column-player are $x^TAy$ and $x^TBy$ respectively.

A strategy profile is said to be a Nash equilibrium strategy profile (NESP) if no player achieves a better payoff by a
unilateral deviation \cite{nash}. Formally, $(x,y) \in \Delta_1\times \Delta_2$ is a NESP iff $\forall x' \in \Delta_1,\ 
x^TAy \geq x'^TAy$ and $\forall y' \in \Delta_2,\  x^TBy \geq x^TBy'$. 
These conditions may also be equivalently stated as follows.
\vspace{-0.1cm}
\begin{eqnarray}\label{eq1_equiv}
\begin{array}{ll}
\forall i \in S_1,\hspace{.06in} x_i>0 \ \ \Rightarrow\ \ & A_iy =\max_{k \in S_1} A_ky \\
\forall j \in S_2,\hspace{.06in} y_j>0 \ \ \Rightarrow & x^TB^j = \max_{k \in S_2} x^TB^k
\end{array}
\end{eqnarray}

From (\ref{eq1_equiv}), it is clear that at a Nash equilibrium, a player plays a pure strategy with non-zero probability only
if it gives the maximum payoff with respect to (w.r.t.) the opponent's strategy. Such strategies are called the {\em best
response} strategies (w.r.t. the opponent's strategy). The polytope $P$ in (\ref{eq2}) is closely related to the best
response strategies of the row-player for any given strategy of the column-player \cite{agt} and it is called the {\em best
response} polytope of the row-player. Similarly, the polytope $Q$ is called the best response polytope of the column-player.
In the following expression, $x$ and $y$ are vector variables, and $\pi_1$ and $\pi_2$ are scalar variables.
\begin{eqnarray}\label{eq2}
\begin{array}{llclcll}
P=\{&(y,\pi_1)\in \mathbb R^{n+1}\ \hspace{3pt}|& A_iy-\pi_1\leq 0,& \forall i \in S_1;& y_j\geq 0,& \forall j \in S_2;& \hspace{.06in}\sum_{j=1}^n y_j=1\}\\
Q=\{&(x,\pi_2)\in \mathbb R^{m+1}\ |& x_i\geq 0,& \forall i \in S_1;&\hspace{.06in} x^TB^j-\pi_2\leq 0,& \forall j \in S_2;& \hspace{.06in}\sum_{i=1}^m x_i=1\}
\end{array}
\end{eqnarray}

Note that for any $y' \in \Delta_2$, a unique $(y',\pi'_1)$ may be obtained on the boundary of $P$, where
$\pi'_1=\max_{i\in S_1} A_iy'$.  Clearly, the pure strategy $i \in S_1$ is in the best response against $y'$ only if
$A_iy'-\pi'_1=0$, hence indices in $S_1$ corresponding to the tight inequalities at $(y',\pi'_1)$ are in the best response.
Note that, in both the polytopes the first
set of inequalities correspond to the row-player, and the second set correspond to the column player. Since $|S_1|=m$ and
$|S_2|=n$, let the inequalities be numbered from $1$ to $m$, and $m+1$ to $m+n$ in both the polytopes. Let the {\em label}
$L(v)$ of a point $v$ in the polytope be the set of indices of the tight inequalities at $v$. If a pair $(v,w) \in P \times
Q$ is such that $L(v)\cup L(w)=\{1,\dots,m+n\}$, then it is called a {\em fully-labeled pair}. 

\begin{lemma}\label{ne_le}
A strategy profile $(x,y)$ is a NESP of the game $(A,B)$ iff $((y,\pi_1),(x,\pi_2))\in P\times Q$ is a fully-labeled pair, 
for some $\pi_1$ and $\pi_2$ \cite{agt}. 
\end{lemma}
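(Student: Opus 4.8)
The plan is to simply unwind the definitions on each side and observe that the ``fully-labeled'' condition is a verbatim restatement of the complementarity conditions (\ref{eq1_equiv}). First I would record the dictionary between labels and constraints. If $(y,\pi_1)\in P$ then $y\in\Delta_2$ and $\pi_1\geq A_ky$ for every $k\in S_1$, and a label $i\in\{1,\dots,m\}$ lies in $L((y,\pi_1))$ iff $A_iy-\pi_1=0$, while a label $m+j$ lies in $L((y,\pi_1))$ iff $y_j=0$. Dually, if $(x,\pi_2)\in Q$ then $x\in\Delta_1$ and $\pi_2\geq x^TB^k$ for every $k\in S_2$, a label $i\in\{1,\dots,m\}$ lies in $L((x,\pi_2))$ iff $x_i=0$, and a label $m+j$ lies in $L((x,\pi_2))$ iff $x^TB^j-\pi_2=0$.

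For the ``only if'' direction, given a NESP $(x,y)$ I would take the witnesses $\pi_1:=\max_{k\in S_1}A_ky$ and $\pi_2:=\max_{k\in S_2}x^TB^k$, so that $(y,\pi_1)\in P$ and $(x,\pi_2)\in Q$ by construction. For each $i\in S_1$, the first line of (\ref{eq1_equiv}) gives $x_i>0\Rightarrow A_iy=\pi_1$, i.e.\ $x_i=0$ or $i\in L((y,\pi_1))$, so label $i$ is covered by $P\times Q$; the second line of (\ref{eq1_equiv}) covers every label $m+j$, $j\in S_2$, in the same way. Hence $L((y,\pi_1))\cup L((x,\pi_2))=\{1,\dots,m+n\}$.

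For the ``if'' direction, suppose $((y,\pi_1),(x,\pi_2))\in P\times Q$ is fully-labeled. Membership already gives $x\in\Delta_1$ and $y\in\Delta_2$. Fix $i\in S_1$ with $x_i>0$: label $i$ cannot be supplied by $Q$ (that would need $x_i=0$), so it is supplied by $P$, i.e.\ $A_iy=\pi_1$; combined with $\pi_1\geq A_ky$ for all $k$ this yields $A_iy=\max_{k}A_ky$, the first condition of (\ref{eq1_equiv}). Symmetrically, for $j\in S_2$ with $y_j>0$, label $m+j$ must come from $Q$, forcing $x^TB^j=\pi_2=\max_k x^TB^k$. Thus $(x,y)$ is a NESP.

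I do not anticipate a real obstacle: the whole argument is the label-to-inequality translation, and the only delicate point is the role of the free scalar variables $\pi_1,\pi_2$ — in one direction one must produce the correct witnesses (the maxima), and in the other one must use the polytope inequalities $A_ky\leq\pi_1$ and $x^TB^k\leq\pi_2$ to upgrade ``equality on the support'' to ``equals the maximum''. It is worth noting in passing that the fully-labeled hypothesis itself forces $\pi_1=\max_k A_ky$ and $\pi_2=\max_k x^TB^k$: otherwise none of the labels $1,\dots,m$ (resp.\ $m+1,\dots,m+n$) could come from $P$ (resp.\ $Q$), so they would all come from the other polytope, contradicting $\sum_{i} x_i=1$ (resp.\ $\sum_{j} y_j=1$). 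Hence the witnesses used in the two directions coincide, and the correspondence $(x,y)\leftrightarrow((y,\pi_1),(x,\pi_2))$ is in fact canonical.
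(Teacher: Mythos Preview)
Your argument is correct and is the standard proof of this well-known equivalence. Note, however, that the paper does not actually prove this lemma: it is stated with a citation to \cite{agt} and no proof is given, so there is no ``paper's own proof'' to compare against. Your write-up is exactly the expected unwinding of the complementarity conditions (\ref{eq1_equiv}) via the label dictionary, and the additional remark that full labeling forces $\pi_1=\max_k A_ky$ and $\pi_2=\max_k x^TB^k$ is a nice observation making the correspondence canonical.
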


A game is called non-degenerate if both the polytopes are non-degenerate. Note that for a non-degenerate game, $|L(v)|\leq n$
and $|L(w)|\leq m$, $\forall (v,w) \in P\times Q$, and the equality holds iff $v$ and $w$ are the vertices of $P$ and $Q$
respectively. Therefore, a fully-labeled pair of a non-degenerate game has to be a vertex-pair. However, for a degenerate
game, there may be a fully-labeled pair $(v,w)$, which is not a vertex-pair. In that case, if $v$ is on a face of $P$, then
every point $v'$ of this face makes a fully-labeled pair with $w$ since $L(v)\subseteq L(v')$.  Similarly, if $w$ is on a
face of $Q$, then every point $w'$ of this face makes a fully-labeled pair with $v$. 

Let $E=\{(A,B,x,y)\in \mathbb R^{mn}\times\mathbb R^{mn}\times \Delta_1\times \Delta_2\ |\ (x,y) \mbox{ is a NESP of the game
}(A,B)\}$ be the Nash equilibrium correspondence of the bimatrix game space $\mathbb R^{2mn}$ ({\it i.e.,} $\mathbb
R^{mn}\times \mathbb R^{mn}$). Kohlberg and Mertens \cite{koh} proved that $E$ is homeomorphic to the bimatrix game space
$\mathbb R^{2mn}$. No such structural result is known for a subspace of the bimatrix game space $\mathbb R^{2mn}$. 
To extend such a result for a subspace, in the next section, we define an $m$-dimensional affine subspace of $\mathbb R^{2mn}$
and analyze the structure of it's Nash equilibrium correspondence.

\subsection{Game Space and the Nash Equilibrium Correspondence}\label{games_polytopes}
Let $\Gamma=\{(A,C+\alpha\cdot\beta^T)\ |\ \alpha \in \mathbb R^m\}$ be a game space, where $A\in\mathbb R^{mn}$ and
$C\in\mathbb R^{mn}$ are $m \times n$ dimensional non-zero matrices, and $\beta\in \mathbb R^n$ is an $n$-dimensional
non-zero vector.  Note that for a game $(A,B)\in\Gamma$, there exists a unique $\alpha \in \mathbb R^m$, such that
$B=C+\alpha\cdot\beta^T$.  Therefore, $\Gamma$ may be parametrized by $\alpha$, and let $G(\alpha)$ be the game
$(A,C+\alpha\cdot\beta^T) \in\Gamma$.  Clearly, $\Gamma$ forms an $m$-dimensional affine subspace of the bimatrix game space
$\mathbb R^{2mn}$. Let $E_\Gamma=\{(\alpha,x,y)\in \mathbb R^m\times \Delta_1\times \Delta_2\ |\ (x,y) \mbox{ is a NESP of
the game }G(\alpha) \in \Gamma\}$ be the Nash equilibrium correspondence of $\Gamma$. 
We wish to investigate: {\em Is $E_\Gamma$  homeomorphic to the game space $\Gamma$ ($\equiv \mathbb R^m$)?} 

For a game $G(\alpha) \in \Gamma$, let the best response polytopes of row-player and column-player be $P(\alpha)$ and
$Q(\alpha)$ respectively. Since the row-player's matrix is fixed to $A$, hence $P(\alpha)$ is the same for all $\alpha$ and we
denote it by $P$. However, $Q(\alpha)$ varies with $\alpha$. We define a new polytope $Q'$ in (\ref{eq3}), which encompasses
$Q(\alpha)$, for all $G(\alpha) \in \Gamma$.
\begin{eqnarray}\label{eq3}
Q'=\{(x,\lambda,\pi_2)\in \mathbb R^{m+2}\ |\ x_i\geq 0,\ \forall i \in S_1;\ x^TC^j+\beta_j\lambda-\pi_2\leq 0,\ \forall j
\in S_2;\ \sum_{i=1}^m x_i=1\} \end{eqnarray}

Note that the inequalities of $Q'$ may also be numbered from $1$ to $m+n$ in a similar fashion as in $Q$.
For a game $G(\alpha)$, the polytope $Q(\alpha)$ may be obtained by replacing $\lambda$ by $\sum_{i=1}^m \alpha_i x_i$ in
$Q'$. In other words, $Q(\alpha)$ is the projection of $Q' \cap \{(x,\lambda,\pi_2)\ |\ \sum_{i=1}^m \alpha_i x_i - \lambda
= 0\}$ on the $(x,\pi_2)$-space. Let $\mathcal N=\{(v,w)\in P\times Q'\ |\ L(v)\cup L(w)=\{1,\dots,m+n\}\}$ be the set of
fully-labeled pairs in $P\times Q'$. The following lemma relates $E_{\Gamma}$ and $\mathcal N$.

\begin{lemma}\label{le1}
\begin{enumerate}
\item If $((y,\pi_1),(x,\lambda,\pi_2))\in \mathcal N$, then there is an $\alpha \in \mathbb R^m$ such that $(\alpha,x,y)\in E_\Gamma$.
\item For every $(\alpha,x,y)\in E_\Gamma$, there exist unique $\pi_1$, $\pi_2$ and $\lambda$ in $\mathbb R$, s.t. $((y,\pi_1),(x,\lambda,\pi_2))\in \mathcal N$.
\end{enumerate}
\end{lemma}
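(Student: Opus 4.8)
The plan is to reduce everything to a single algebraic identity linking the polytope $Q'$ of (\ref{eq3}) with the family $\{Q(\alpha)\}$. Fix $x\in\Delta_1$ and a scalar $\lambda$, and suppose $\alpha\in\mathbb R^m$ satisfies $\sum_{i=1}^m\alpha_i x_i=\lambda$, i.e.\ $x^T\alpha=\lambda$. Then for every column $j\in S_2$ of $B=C+\alpha\cdot\beta^T$ one has $x^TB^j=x^TC^j+(x^T\alpha)\beta_j=x^TC^j+\beta_j\lambda$. Hence the constraint ``$x^TB^j-\pi_2\le 0$'' of $Q(\alpha)$ in (\ref{eq2}) and the constraint ``$x^TC^j+\beta_j\lambda-\pi_2\le 0$'' of $Q'$ evaluate to the same number, so they hold, and are tight, for exactly the same $(x,\pi_2)$ as long as $\lambda=x^T\alpha$; the constraints $x_i\ge 0$ are literally identical in the two polytopes. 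With the labellings of $Q(\alpha)$ and $Q'$ numbered the same way, this says: whenever $\lambda=x^T\alpha$, the point $(x,\lambda,\pi_2)$ lies in $Q'$ iff $(x,\pi_2)$ lies in $Q(\alpha)$, and then $L(x,\lambda,\pi_2)=L(x,\pi_2)$. I would record this first; the rest is bookkeeping on top of Lemma~\ref{ne_le}.

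For part~1, take $((y,\pi_1),(x,\lambda,\pi_2))\in\mathcal N$. Since $\sum_i x_i=1$ there is $i_0$ with $x_{i_0}>0$; set $\alpha_{i_0}:=\lambda/x_{i_0}$ and $\alpha_i:=0$ for $i\ne i_0$, so that $x^T\alpha=\lambda$. By the identity, $(x,\pi_2)\in Q(\alpha)$ with $L(x,\pi_2)=L(x,\lambda,\pi_2)$, so $L(y,\pi_1)\cup L(x,\pi_2)=\{1,\dots,m+n\}$ and $((y,\pi_1),(x,\pi_2))$ is a fully-labeled pair of $P\times Q(\alpha)$. Lemma~\ref{ne_le} then gives that $(x,y)$ is a NESP of $G(\alpha)$, i.e.\ $(\alpha,x,y)\in E_\Gamma$.

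Existence in part~2 is the mirror image: given $(\alpha,x,y)\in E_\Gamma$, Lemma~\ref{ne_le} produces $\pi_1,\pi_2$ with $((y,\pi_1),(x,\pi_2))$ fully-labeled in $P\times Q(\alpha)$; putting $\lambda:=x^T\alpha$ and applying the identity gives $((y,\pi_1),(x,\lambda,\pi_2))\in\mathcal N$. For uniqueness I would read off the three scalars from the fully-labeled condition. Membership $(y,\pi_1)\in P$ forces $\pi_1\ge\max_i A_iy$; since $\sum_i x_i=1$ some $x_{i}>0$, whose label $i$ is absent from $L(x,\lambda,\pi_2)$ and so must be tight in $(y,\pi_1)$, i.e.\ $A_{i}y=\pi_1$, which by (\ref{eq1_equiv}) equals $\max_k A_ky$; hence $\pi_1=\max_i A_iy$ is forced. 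Dually, every $j$ with $y_j>0$ has $m+j\notin L(y,\pi_1)$, so $x^TC^j+\beta_j\lambda=\pi_2$; together with $\pi_2\ge x^TC^j+\beta_j\lambda$ for all $j$ (from $(x,\lambda,\pi_2)\in Q'$) this gives $\pi_2=\max_j(x^TC^j+\beta_j\lambda)$ once $\lambda$ is known.

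The one genuinely delicate point, and the step I expect to be the main obstacle, is pinning down $\lambda$ itself. When $\mathrm{supp}(y)$ contains two indices $j_1,j_2$ with $\beta_{j_1}\ne\beta_{j_2}$, the equalities $x^TC^{j_1}+\beta_{j_1}\lambda=\pi_2=x^TC^{j_2}+\beta_{j_2}\lambda$ force $\lambda=(x^TC^{j_2}-x^TC^{j_1})/(\beta_{j_1}-\beta_{j_2})$, which is unique and, consistently with existence, equals $x^T\alpha$. The remaining situation, where $\beta$ is constant on $\mathrm{supp}(y)$ (in particular when $y$ is a pure column strategy, or $\beta$ is proportional to $1_n$), is the case where these equalities alone do not determine $\lambda$; here I would need to bring in the non-degeneracy of the game and the standing hypotheses on $\beta$ to exclude the spurious freedom, and this is where I expect a careful write-up to concentrate most of its effort.
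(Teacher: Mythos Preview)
Your reduction to the identity ``$\lambda=x^T\alpha$ makes the $j$th constraints of $Q(\alpha)$ and $Q'$ coincide, with the same tightness pattern'' is exactly the mechanism the paper uses, and your arguments for part~1 and for the existence half of part~2 are the same as the paper's: the paper picks any $\alpha$ with $\sum_i\alpha_ix_i=\lambda$ for part~1, and for part~2 it sets $\pi_1=x^TAy$, $\pi_2=x^T(C+\alpha\beta^T)y$ and $\lambda=\sum_i\alpha_ix_i$ and observes that $((y,\pi_1),(x,\lambda,\pi_2))\in\mathcal N$.

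Where you diverge from the paper is precisely the point you flagged as ``genuinely delicate'': the uniqueness of $\lambda$. The paper does \emph{not} argue uniqueness from the fully-labeled condition at all; it simply writes down the explicit triple $\bigl(x^TAy,\ x^T\alpha,\ x^T(C+\alpha\beta^T)y\bigr)$ and stops. The ``unique'' in the statement is being used in the sense that these formulas determine a single canonical point of $\mathcal N$ as a function of $(\alpha,x,y)$, which is all that is needed downstream (namely, a continuous map $E_\Gamma\to\mathcal N$ in Lemma~\ref{le_con}). Your attempt to prove the stronger assertion---that no \emph{other} $(\pi_1,\lambda,\pi_2)$ can land in $\mathcal N$---is not what the paper does, and in fact that stronger assertion is not true in general: when $\beta$ is constant on $\mathrm{supp}(y)$, the equalities $x^TC^j+\beta_j\lambda=\pi_2$ for $j\in\mathrm{supp}(y)$ fix only the line $\pi_2=x^TC^{j_0}+\beta_{j_0}\lambda$, and the remaining inequalities typically allow a whole interval of $\lambda$'s (for a concrete instance take $m=n=2$, $C=0$, $\beta=(1,2)^T$, $y=e_1$; then any $\lambda\le 0$ with $\pi_2=\lambda$ keeps $(x,\lambda,\pi_2)\in Q'$ and the pair fully labeled). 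Non-degeneracy of $Q'$ does not rescue this.

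So the ``main obstacle'' you anticipated is not a gap in your proof but an over-reading of the word ``unique'' in the statement. Match the paper: define $\pi_1,\lambda,\pi_2$ by the formulas above, verify membership in $\mathcal N$, and do not try to rule out alternative triples.
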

\begin{proof} For the first part, suppose $(v,w)$ is a fully-labeled pair with $v=(y,\pi_1)$ and $w=(x,\lambda,\pi_2)$. Let 
$\alpha \in \mathbb R^m$ be such that $\sum_{i=1}^m\alpha_i x_i-\lambda=0$, then clearly $(v,(x,\pi_2))\in P(\alpha)\times
Q(\alpha)$ is a fully-labeled pair. Therefore, $(\alpha,x,y) \in E_\Gamma$.

For the second part, let $(\alpha,x,y) \in E_\Gamma$, then for $\pi_1=x^TAy$ and $\pi_2=x^T(C+\alpha\cdot\beta^T)y$, we get a
fully labeled pair $((y,\pi_1),(x,\pi_2))\in P(\alpha)\times Q(\alpha)$. Hence, for $\lambda=\sum_{i=1}^m \alpha_i x_i$,
the point $((y,\pi_1),(x,\lambda,\pi_2))$ is in $\mathcal N$.
\qed
\end{proof}

From Lemma \ref{le1}, it is clear that there is a continuous surjective map from $E_{\Gamma}$ to $\mathcal N$. We further
strengthen the connection in the following lemma.

\begin{lemma}\label{le_con}
$E_\Gamma$ is connected iff $\mathcal N$ is a single connected component.
\end{lemma}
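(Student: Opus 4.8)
The plan is to establish the equivalence by transferring the topological structure back and forth along the maps implicit in Lemma~\ref{le1}. Concretely, define $\phi : E_\Gamma \to \mathcal N$ by $\phi(\alpha,x,y) = ((y,x^TAy),(x,\sum_{i=1}^m\alpha_i x_i, x^T(C+\alpha\cdot\beta^T)y))$; by part~2 of Lemma~\ref{le1} this is well-defined with image in $\mathcal N$, and it is manifestly continuous since every coordinate is a polynomial in $(\alpha,x,y)$. The key observations are: (i) $\phi$ is surjective onto $\mathcal N$, because part~1 of Lemma~\ref{le1} produces, for each $(v,w)\in\mathcal N$ with $w=(x,\lambda,\pi_2)$, some $\alpha$ with $\sum_i\alpha_i x_i = \lambda$ and $(\alpha,x,y)\in E_\Gamma$, and for that $\alpha$ one checks $\phi(\alpha,x,y)=(v,w)$ (the first two blocks agree by construction; the scalar $\pi_2$ agrees because $w\in Q'$ is fully-labeled, so $\pi_2 = \max_j (x^TC^j+\beta_j\lambda) = \max_j x^T(C+\alpha\cdot\beta^T)^j$, which for a NESP equals $x^T(C+\alpha\cdot\beta^T)y$); and (ii) $\phi$ has a continuous section, or at least the fibers of $\phi$ are connected. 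That handles the ``surjective continuous map'' already asserted in the paragraph before the lemma.

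For the iff, first suppose $E_\Gamma$ is connected. Then $\mathcal N = \phi(E_\Gamma)$ is the continuous image of a connected set, hence connected, i.e.\ a single connected component. Conversely, suppose $\mathcal N$ is connected. I would construct a continuous map $\psi : \mathcal N \to E_\Gamma$ going the other way: given $((y,\pi_1),(x,\lambda,\pi_2))\in\mathcal N$, I need to recover a specific $\alpha$. The natural choice is to pick $\alpha$ in a canonical, continuously-varying way subject to the single linear constraint $\sum_i \alpha_i x_i = \lambda$ — for instance the minimum-norm solution $\alpha = \lambda\, x / \|x\|^2$ (legitimate since $x\in\Delta_1$ forces $\|x\|^2>0$), which depends continuously on $(x,\lambda)$. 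Then set $\psi(v,w) = (\alpha, x, y)$; by part~1 of Lemma~\ref{le1} (applied with this $\alpha$) the image lies in $E_\Gamma$, and $\psi$ is continuous. Now $\psi$ is a continuous map from the connected set $\mathcal N$, so $\psi(\mathcal N)$ is connected; the remaining point is to argue $\psi(\mathcal N) = E_\Gamma$, which follows because for any $(\alpha',x,y)\in E_\Gamma$ the point $\phi(\alpha',x,y)\in\mathcal N$ has $\psi(\phi(\alpha',x,y)) = (\lambda x/\|x\|^2, x, y)$ with $\lambda = \sum_i\alpha'_i x_i$, and $(\lambda x/\|x\|^2,x,y)$ is again in $E_\Gamma$ — but this shows only that the ``normalized representatives'' are hit. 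To get all of $E_\Gamma$, I instead observe that the fiber of $\phi$ over a fixed $(v,w)\in\mathcal N$ is $\{(\alpha,x,y) : \sum_i\alpha_i x_i = \lambda\}$, an affine subspace of $\mathbb R^m$ (hence connected), so $E_\Gamma$ is connected iff its image $\mathcal N$ under the fiber-wise-connected surjection $\phi$ is connected; this is the clean way to phrase both directions at once.

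The main obstacle, and the step I would be most careful about, is the converse direction: connectedness of $\mathcal N$ does not immediately give connectedness of $E_\Gamma$ unless one knows the fibers of $\phi$ are connected (and that $\phi$ is an open or quotient-type map, or at least that one can patch local sections). The affine-fiber observation above is exactly what makes this work — a surjection with connected fibers from/onto a space pulls back connectedness — but it needs the elementary lemma that if $f:X\to Y$ is continuous and surjective with $Y$ connected and every fiber $f^{-1}(y)$ connected, and additionally $f$ is a quotient map (or $X,Y$ are nice enough, e.g.\ $f$ is closed or proper), then $X$ is connected. Here $E_\Gamma$ sits inside $\mathbb R^m\times\Delta_1\times\Delta_2$ with $\Delta_1,\Delta_2$ compact, and $\mathcal N$ is a closed subset of the compact polytope product $P\times Q'$ once one bounds $\pi_1,\pi_2,\lambda$ (which the labels force), so properness/closedness of $\phi$ can be arranged; alternatively, the explicit continuous section $\psi$ already exhibits $\mathcal N$ as a retract-like image, and combining $\phi,\psi$ with the affine-fiber structure gives the equivalence directly without invoking general topology. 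I would present the argument via the affine fibers, since it is the most self-contained and also foreshadows the monotonicity/homeomorphism analysis in Section~\ref{hom}.
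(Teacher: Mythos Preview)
Your approach is essentially the paper's: both directions go through the continuous surjection $\phi : E_\Gamma \to \mathcal N$ (the paper calls it $f$), the forward direction being immediate and the converse resting on the fibers $\{\alpha : \sum_i \alpha_i x_i = \lambda\}\cong\mathbb R^{m-1}$ being connected. The paper's proof of the converse is in fact terser than yours and simply asserts that connected base plus connected fibers gives connected total space, without addressing the hypothesis you correctly flag (some openness/closedness/quotient condition on $\phi$ is needed for this implication in general). Your fix via the continuous section $\psi(v,w) = (\lambda x/\|x\|^2, x, y)$ is the clean way to close that gap: $\psi(\mathcal N)$ is connected, every fiber meets it, and a union of connected sets each meeting a common connected set is connected. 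One correction: your compactness remark is off --- neither $P\times Q'$ nor $\mathcal N$ is bounded (the path $\mathcal P$ has $\lambda\to\pm\infty$ on its extreme edges, cf.\ Lemma~\ref{sle4}), so properness cannot be obtained that way; rely on the section argument instead.
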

\begin{proof}
($\Rightarrow$) Lemma \ref{le1} shows that for a point $(\alpha,x,y) \in E_\Gamma$, we may construct a unique point
$((y,\ x^TAy),(x,\ x^T\alpha,\ x^T(C+\alpha\cdot\beta^T)y)) \in \mathcal N$. This gives a continuous surjective function
$f:E_\Gamma \rightarrow \mathcal N$. Therefore, if $E_\Gamma$ is connected then $\mathcal N$ is connected as well.

($\Leftarrow$) For a $(v,w)\in \mathcal N$, where $w=(x,\lambda,\pi_2)$, all the points in $f^{-1}(v,w)$ satisfy
$\sum_{i=1}^m x_i \alpha_i=\lambda$, hence $f^{-1}(v,w)$ is homeomorphic to $\mathbb R^{m-1}$. Since $\mathcal N$ is connected,
$f$ is continuous and the fact that the fibers $f^{-1}(v,w),\ \forall (v,w)\in \mathcal N$ are connected imply that
$E_\Gamma$ is connected.
\qed
\end{proof}

Lemma \ref{le1} and \ref{le_con} imply that $E_\Gamma$ and $\mathcal N$ are closely related.
Henceforth, we assume that the polytopes $P$ and $Q'$ are
non-degenerate. Recall that when the best response polytopes ($P$ and $Q$) of a game are non-degenerate, all the
fully-labeled pairs are vertex pairs. However $Q'$ has one more variable $\lambda$ than $Q$, which gives one extra degree of
freedom to form the fully-labeled pairs. We show that the structure of $\mathcal N$ is very simple by proving the following
proposition. 

\begin{proposition}\label{pr1}
The set of fully-labeled points $\mathcal N$ admits the following decomposition into mutually disjoint connected components. 
$\mathcal N=\mathcal P \cup \mathcal C_1 \cup \dots\cup \mathcal C_k,\ k\geq 0$, where $\mathcal P$ and $\mathcal C_i$s
respectively form a path and cycles on $1$-skeleton of $P\times Q'$. 
\end{proposition}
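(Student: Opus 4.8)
The plan is to analyze $\mathcal{N}$ via the local structure of $P \times Q'$ at a fully-labeled point, exploiting the fact that $Q'$ lives in one dimension higher than $Q$. First I would set up a dimension count: $P$ is an $n$-dimensional polytope in $\mathbb{R}^{n+1}$ (the constraint $\sum_j y_j = 1$ removes one degree of freedom), and $Q'$ is an $(m+1)$-dimensional polytope in $\mathbb{R}^{m+2}$ (again $\sum_i x_i = 1$ removes one). Since $P$ and $Q'$ are assumed non-degenerate, a point $v \in P$ with $|L(v)| = \ell$ lies in the relative interior of a face of dimension $n - \ell$, and similarly $w \in Q'$ with $|L(w)| = \ell'$ lies in the relative interior of a face of dimension $m + 1 - \ell'$. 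A fully-labeled pair requires $L(v) \cup L(w) = \{1,\dots,m+n\}$, so $|L(v)| + |L(w)| \geq m+n$; hence the sum of the face dimensions is at most $(n - |L(v)|) + (m+1 - |L(w)|) \leq 1$. So $\mathcal{N}$ is a union of points (where the local dimension is $0$) and edges (where it is $1$) of the product complex $P \times Q'$, i.e.\ it is a subcomplex of the $1$-skeleton of $P \times Q'$.

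Next I would show $\mathcal{N}$ has no branching, i.e.\ every point of $\mathcal{N}$ has degree at most $2$ in this $1$-skeleton, which forces the components to be paths or cycles. The key is the following: if $(v,w)$ is a fully-labeled vertex-pair (so $|L(v)| = n$, $|L(w)| = m+1$, and the two label sets together cover everything with at most one overlap), then the adjacent edges of $\mathcal{N}$ are obtained by "dropping" a label, exactly as in the Lemke--Howson analysis, and one counts how many such moves stay inside $\mathcal{N}$. In the standard bimatrix best-response-polytope setting the analogous argument (for $P \times Q$) gives degree $\le 2$ because $|L(v)| + |L(w)|$ can exceed $m+n$ by at most one and there are at most two "duplicated" labels to resolve; here $Q'$ has the extra $\lambda$-coordinate, so $|L(w)|$ can be as large as $m+1$, and the total can exceed $m+n$ by one more than before — but a careful case analysis on which labels are duplicated still yields at most two admissible edges at each vertex. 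I would phrase this as: dropping a label $d$ from the duplicated set gives a unique edge of $P \times Q'$ still covering all labels, and there are at most two such $d$. One must also treat the points of $\mathcal{N}$ that lie in the relative interior of an edge (not at a vertex) — there the two endpoints of that edge are the only neighbors in $\mathcal{N}$.

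To conclude the decomposition into exactly one path and finitely many cycles, I would invoke a parity/endpoint argument: a connected $1$-manifold-with-boundary that is a finite subcomplex is a disjoint union of arcs and circles, and it is a single arc (plus circles) precisely when there are exactly two endpoints, where an endpoint is a point of $\mathcal{N}$ of degree $1$. A degree-$1$ point must be a vertex of $P \times Q'$ all of whose label-dropping moves but one lead outside the polytope — this happens exactly at the "artificial equilibrium" analogue, where $x$ or $y$ is a standard basis vector forced by a boundary facet. I would exhibit the unique such pair directly: it is the vertex of $P$ labeled by all of $S_2$ (i.e.\ $\pi_1$ maximal, $y$ a vertex of the simplex) paired with the matching extreme vertex of $Q'$ — there is exactly one such configuration because of the extra $\lambda$-freedom pins down $\pi_2$ uniquely. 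This gives a single pair of endpoints, hence one path $\mathcal{P}$; everything else is a cycle. The main obstacle I anticipate is the degree-$\le 2$ argument: the extra coordinate $\lambda$ changes the combinatorics of $Q'$ relative to the textbook $Q$, so the bookkeeping of which dropped labels yield valid adjacent edges (and the verification that "too many" labels are never duplicated, which would create branching) needs to be done with care rather than quoted.
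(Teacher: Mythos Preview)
Your dimension count and the degree analysis are essentially the paper's argument (their observations $O_1$--$O_5$), and indeed yield that $\mathcal N$ sits in the $1$-skeleton and that every vertex of $\mathcal N$ has exactly two incident $\mathcal N$-edges. But your endpoint argument has a genuine gap.

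You treat $\mathcal N$ as a finite $1$-complex and look for degree-$1$ vertices playing the role of the Lemke--Howson artificial equilibrium. There are none. The polytope $Q'$ is \emph{unbounded}: for any $x$ in the simplex one can send $\lambda\to\pm\infty$ (adjusting $\pi_2$), so $Q'$ has unbounded edges, and in fact the paper shows every vertex of $\mathcal N$ has degree exactly~$2$. Consequently each connected component of $\mathcal N$ is either a cycle or a doubly-infinite path, never an arc with endpoints. Your proposed ``unique endpoint at the vertex of $P$ labeled by all of $S_2$ paired with the matching extreme vertex of $Q'$'' does not exist as described (there are $n$ such vertices of $P$, and there is no distinguished extreme vertex of $Q'$ playing the artificial-equilibrium role).

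The paper's replacement for the endpoint count is an \emph{unbounded-edge} count. One checks directly (their Lemma~\ref{sle4}) that among the edges of type $(v,\mathcal E_v)$ exactly two are rays, namely those with $v=v_s$ and $v=v_e$ where $v_s,v_e\in P$ correspond to the coordinates $j_s=\arg\min_j\beta_j$ and $j_e=\arg\max_j\beta_j$; on these edges $\lambda$ runs to $-\infty$ and $+\infty$ respectively, while every other $\mathcal E_v$ is a bounded segment. One also checks that every edge of type $(\mathcal E_w,w)$ is bounded (since $y$ lives in a simplex and $\pi_1$ is pinned between $\min a_{ij}$ and $\max a_{ij}$). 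Hence $\mathcal N$ has exactly two unbounded edges; since a doubly-infinite path consumes two of them and a cycle consumes none, there is exactly one path and the remaining components are cycles. If you want to repair your argument, replace the search for degree-$1$ vertices by this identification of the two rays.
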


In order to prove Proposition \ref{pr1}, first we identify the points in $P$ and $Q'$ separately, which participate in
the fully-labeled pairs and then relate them. 
For a $v\in P$, let $\mathcal E_v=\{w' \in Q'\ |\ (v,w')\in \mathcal N\}$, and similarly for a $w \in Q'$, let $\mathcal
E_w=\{v' \in P\ |\ (v',w) \in \mathcal N\}$. 
Let $\mathcal N^P=\{v \in P \ |\ \mathcal E_v\neq \emptyset\}$ and $\mathcal N^{Q'}=\{w \in Q' \ |\ \mathcal E_w\neq
\emptyset\}$.

For neighboring vertices $u$ and $v$ in either polytopes, let $\overline{u,v}$ be the edge between $u$ and $v$. Recall
that $P$ and $Q'$ are non-degenerate, therefore $\forall v \in P,\ |L(v)|\leq n$ and $\forall w \in Q',\ |L(w)|\leq m+1$.
Using this fact, it is easy to deduce the following observations for points in $P$.
Similar results hold for the points in $Q'$. 

\begin{itemize}
\item[$O_1$.] If $(v,w) \in \mathcal N$, then both $v$ and $w$ lie on either $0$ or $1$-dimensional faces of $P$ and $Q'$
respectively, and at least one of them is a $0$-dimensional face, {\em i.e.}, a vertex. 
\item[$O_2$.] If $(v,w)\in \mathcal N$ and both $v$ and $w$ are vertices, then $|L(v)\cap L(w)|=1$, and the element in the
intersection is called the {\em duplicate label} of the pair $(v,w)$.  
\item[$O_3$.] If $v\in P$ is not a vertex then $\mathcal E_v$ is either empty or it equals exactly one vertex of $Q'$. 
\item[$O_4$.] If $v\in P$ is a vertex, then $\mathcal E_v$ is either empty or an edge of $Q'$.
\item[$O_5$.] Let $v\in P$ be a vertex and $\mathcal E_v$ be an edge of $Q'$. If $w\in\mathcal E_v$ is a vertex, then $(v,w)$
has a duplicate label (see $O_2$). Let the duplicate
label be $i$, then there exists a unique vertex $v' \in P$ adjacent to $v$ such that $\overline{v,v'} \in \mathcal N^P$, where $v'$ is
obtained by relaxing the inequality $i$ at $v$. This also implies that $\mathcal E_w=\overline{v,v'}$ and $\mathcal E_v\cap
\mathcal E_{v'}=w$.
\end{itemize}

The above observations, brings out the structure of $\mathcal N$ significantly. 
Every point in $\mathcal N$ is a pair $(v,w)$ where $v \in P$ and $w\in Q'$. From $O_1$,
one of them is a vertex (say $v$), and the other is on the corresponding edge ($w\in \mathcal E_v$). Hence $\mathcal N$
contains only $0$ and $1$-dimensional faces of $P\times Q'$. Clearly, an edge of $\mathcal N$ is of type $(v,\mathcal E_v)$
or $(\mathcal E_w,w)$, where $v$ and $w$ are the vertices of $P$ and $Q'$ respectively. 

Note that a vertex $(v,w)$ of $\mathcal N$ corresponds to a fully-labeled vertex-pair of $P\times Q'$, and hence it has 
a duplicate label (by $O_2$). Relaxing the inequality corresponding to the duplicate label in $P$ and $Q'$
separately, we get the edges $(\mathcal E_w,w)$ and $(v,\mathcal E_v)$ of $\mathcal N$ respectively. Clearly, these are the
only adjacent edges of the vertex $(v,w)$ in $\mathcal N$. 
Hence, in a component of $\mathcal N$, edges alternate between type $(v,\mathcal E_v)$ and
$(\mathcal E_w,w)$, and the degree of every vertex of $\mathcal N$ is exactly two.
Therefore, $\mathcal N$ consists of infinite paths and cycles on the $1$-skeleton of $P\times Q'$. 
Note that a path in $\mathcal N$ has unbounded edges on both the sides. 
Further, a component of $\mathcal N$ may be constructed by combining a component of $\mathcal N^P$ (say $\mathcal C$) and
the corresponding component of $\mathcal N^{Q'}$ ($\{\mathcal E_v \ |\ v\in \mathcal C\}$). 

Using the above analysis, we only need to show that there is exactly one path in $\mathcal N$ to prove Proposition \ref{pr1}.
Let the {\em support-pair} of a vertex $(y,\pi_2) \in P$ be $(I,J)$ where $I=\{i\in S_1\ |\ A_iy-\pi_2=0\}$ and $J=\{j \in
S_2\ |\ y_j>0\}$. Note that $|L(y,\pi_2)|=n$, hence $|I|=|J|$. 
Let $\beta_{j_s}=\min_{j \in S_2} \beta_j$, $i_s=\argmax_{i\in S_1} a_{ij_s}$, $\beta_{j_e}=\max_{j\in S_2}\beta_e$, and
$i_e=\argmax_{i\in S_1} a_{ij_e}$. In other words, the indices $j_s$ and $j_e$ correspond to the minimum and maximum
entries in $\beta$ respectively, and the indices $i_s$ and $i_e$ correspond to the maximum entry in $A^{j_s}$ and $A^{j_e}$
respectively. It is easy to see that $j_s\neq j_e$, since $Q'$ is non-degenerate. 

\begin{lemma}\label{le3}
There exist two vertices $v_s$ and $v_e$ in $P$, with support-pairs $(\{i_s\},\{j_s\})$ and $(\{i_e\},\{j_e\})$
respectively.  
\end{lemma}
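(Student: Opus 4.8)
The plan is to exhibit the two vertices explicitly. For $v_s$ I would take the point $v_s=(e_{j_s},\pi^s_1)$, where $e_{j_s}\in\Delta_2$ is the $j_s$-th standard basis vector and $\pi^s_1:=a_{i_sj_s}=\max_{i\in S_1}a_{ij_s}$; symmetrically, $v_e=(e_{j_e},a_{i_ej_e})$. First I would verify feasibility: $\sum_j(e_{j_s})_j=1$, the coordinates $(e_{j_s})_j$ are nonnegative, and $A_ie_{j_s}-\pi^s_1=a_{ij_s}-a_{i_sj_s}\le 0$ for every $i\in S_1$ by the choice of $i_s$, so $v_s\in P$. Next, $v_s$ satisfies with equality the $n-1$ facet inequalities $y_j\ge 0$ for $j\ne j_s$ together with the row inequality $A_{i_s}y-\pi_1\le 0$; combined with the defining equation $\sum_jy_j=1$, these $n$ hyperplanes have $v_s$ as their unique common solution, so $v_s$ is a $0$-dimensional face, i.e. a vertex of $P$.

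It then remains to read off the support-pair $(I,J)$ of $v_s$. Clearly $J=\{j\in S_2\mid (e_{j_s})_j>0\}=\{j_s\}$. For $I=\{i\in S_1\mid a_{ij_s}=\pi^s_1\}$, the index $i_s$ lies in $I$ by definition; and if $I$ contained a second index $i'\ne i_s$, then $v_s$ would lie on $n+1$ facet hyperplanes (the $n-1$ hyperplanes $y_j=0$, $j\ne j_s$, plus the two row hyperplanes for $i_s$ and $i'$), contradicting the standing non-degeneracy assumption on $P$. Hence $I=\{i_s\}$ and the support-pair of $v_s$ is $(\{i_s\},\{j_s\})$. The construction and verification for $v_e$ is word-for-word the same, with $j_s,i_s$ replaced throughout by $j_e,i_e$.

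I do not expect a genuine obstacle here: the only nontrivial step is the use of non-degeneracy of $P$ to force $i_s$ (respectively $i_e$) to be the \emph{unique} maximizer of the corresponding column of $A$, which is exactly what guarantees that the candidate point is a vertex with the claimed single-element sets $I$ and $J$ rather than a higher-codimension degenerate point; everything else is a direct feasibility-and-tightness check. (We note in passing that $v_s\ne v_e$ since $j_s\ne j_e$, as already observed, though the statement itself does not require this.)
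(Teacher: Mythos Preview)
Your proof is correct and follows exactly the same construction as the paper: take $v_s=(e_{j_s},a_{i_sj_s})$ and $v_e=(e_{j_e},a_{i_ej_e})$, check feasibility, and count tight constraints. The paper's own argument is terser (it just asserts $v_s\in P$ and $|L(v_s)|=n$), whereas you spell out explicitly how the standing non-degeneracy assumption on $P$ forces the column maximizer $i_s$ to be unique and hence $I=\{i_s\}$; this is the only point where any care is needed, and you handle it correctly.
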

\begin{proof}
Let $y \in \Delta_2$ be such that $y_{j_s}=1$ and $y_j=0,\ \forall j\neq j_s$. Clearly, $v_s=(y,a_{i_sj_s})\in P$ and
$|L(v_s)|=n$. Similarly, the vertex $v_e \in P$ may be obtained by setting $y_{j_e}=1$ and the remaining $y_j$s to zero.\qed
\end{proof}

Next we show that there are exactly two unbounded edges of type $(v,\mathcal E_v)$ in $\mathcal N$, all other edges have two
bounding vertices.
\newpage 

\begin{lemma}\label{sle4}
An edge $(v,\mathcal E_v)\in \mathcal N$ has exactly one bounding vertex if $v$ is either  $v_s$ or $v_e$, otherwise it
has two bounding vertices.
\end{lemma}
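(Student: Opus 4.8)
The plan is to analyze the edge $(v,\mathcal E_v)$ by understanding when the edge $\mathcal E_v$ of $Q'$ is bounded on both sides by vertices of $Q'$ versus when it is an unbounded ray. By observation $O_4$, for a vertex $v\in P$ participating in $\mathcal N$, the set $\mathcal E_v$ is an edge of $Q'$, and the edge of $\mathcal N$ it spans has a bounding vertex at each endpoint of $\mathcal E_v$ that is an actual vertex of $Q'$ (by $O_5$, each such vertex $w\in\mathcal E_v$ yields an adjacent $v'\in P$ with $\overline{v,v'}\in\mathcal N^P$). So the edge $(v,\mathcal E_v)$ has exactly one bounding vertex precisely when $\mathcal E_v$ is a ray of $Q'$ with a single vertex endpoint, and two bounding vertices when $\mathcal E_v$ is a bounded segment joining two vertices of $Q'$. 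Thus the whole lemma reduces to: $\mathcal E_v$ is unbounded iff $v\in\{v_s,v_e\}$.

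First I would characterize $\mathcal E_v$ concretely. Fix a vertex $v=(y,\pi_1)\in P$ with support-pair $(I,J)$, so $A_iy=\pi_1$ for $i\in I$, $y_j>0$ for $j\in J$, and $|I|=|J|$; since $v\in\mathcal N^P$, the label set $L(v)=\{i : i\in I\}\cup\{m+j : j\notin J\}$ must be completable to $\{1,\dots,m+n\}$ by a point $w=(x,\lambda,\pi_2)\in Q'$, i.e. $w$ must carry all the labels in $S_1\setminus I$ (forcing $x_i=0$ for $i\notin I$) and all the column-labels $m+j$ for $j\in J$ (forcing $x^TC^j+\beta_j\lambda-\pi_2=0$ for $j\in J$), while the remaining column inequalities must hold. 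Since $x_i=0$ for $i\notin I$ and $\sum x_i=1$, the vector $x$ is determined by the $|I|=|J|$ equalities $x^TC^j+\beta_j\lambda=\pi_2$ ($j\in J$) together with $\sum_{i\in I}x_i=1$ — that is $|J|+1$ equations in the $|I|+2$ unknowns $(x_I,\lambda,\pi_2)$, leaving a one-parameter family, which is exactly the edge $\mathcal E_v$. Parametrize it by $\lambda$: as $\lambda$ increases, does the point stay inside $Q'$ forever (ray) or hit another facet of $Q'$ (bounded)? The edge hits another facet exactly when increasing or decreasing $\lambda$ makes some currently-inactive column inequality $x^TC^{j'}+\beta_{j'}\lambda-\pi_2\le 0$ ($j'\notin J$) become tight, or makes some $x_i$ ($i\in I$) hit $0$.

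The key computation is to show the $\lambda\to+\infty$ and $\lambda\to-\infty$ directions of this ray. Differentiating the equality system along the edge, $\frac{dx_i}{d\lambda}$ and $\frac{d\pi_2}{d\lambda}$ are constants; the inactive inequality for $j'\notin J$ becomes tight as $\lambda\to+\infty$ unless its rate of change, $\frac{d}{d\lambda}\big(x^TC^{j'}+\beta_{j'}\lambda-\pi_2\big)$, is $\le 0$, and similarly for $\lambda\to-\infty$. I would show that the only way \emph{every} inactive inequality fails to be hit as $\lambda\to+\infty$ is when $J=\{j_e\}$ (the maximal-$\beta$ column) and $I=\{i_e\}$, i.e. $v=v_e$, and the only way this happens as $\lambda\to-\infty$ is $J=\{j_s\}$, $I=\{i_s\}$, i.e. $v=v_s$; here the $x_i\ge0$ constraints cannot cut the ray because $|I|=1$ forces $x_{i_e}=1$ (resp. $x_{i_s}=1$) identically along the edge. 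The extremality of $\beta_{j_e}$ and $\beta_{j_s}$ among the $\beta_j$ is exactly what makes the coefficient $\beta_{j'}-\beta_{j_e}$ (resp. $\beta_{j'}-\beta_{j_s}$) have the right sign for all $j'$, so no other facet is ever reached in that direction; and the choice $i_e=\argmax_i a_{ij_e}$, $i_s=\argmax_i a_{ij_s}$ guarantees $v_s,v_e$ are genuinely the vertices from Lemma \ref{le3} lying in $\mathcal N^P$. Conversely, if $|J|\ge 2$ or $J=\{j\}$ with $j\neq j_s,j_e$, a sign argument on the $\beta$-differences shows at least one facet is met in each direction, so $\mathcal E_v$ is a bounded segment and $(v,\mathcal E_v)$ has two bounding vertices.

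I expect the main obstacle to be the bookkeeping in this last direction-of-the-ray argument: one must rule out \emph{all} ways the edge $\mathcal E_v$ could be unbounded — both the column facets $j'\notin J$ and the nonnegativity facets $x_i\ge 0$ — and show cleanly that this forces $|J|=1$ with $J$ the extreme-$\beta$ index, using only non-degeneracy of $P$ and $Q'$ and the definitions of $j_s,j_e,i_s,i_e$. The subtle point is handling the case $|J|\ge 2$: even though $|I|=|J|\ge 2$ gives room for an $x_i\to0$ event, one must confirm such an event (or a column-facet event) actually occurs on at least one side, which follows because along the one-dimensional edge the affine function $\lambda\mapsto x_i(\lambda)$ is non-constant for at least one $i\in I$ (else the edge direction would be degenerate in the $x$-coordinates, contradicting that $Q'$ is a genuine edge together with non-degeneracy), so some $x_i$ decreases to $0$ in finite time in one of the two directions.
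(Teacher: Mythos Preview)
Your approach is essentially the paper's: split into the cases $|I|=|J|=1$ and $|I|=|J|\ge 2$, handle $v_s,v_e$ by the extremality of $\beta_{j_s},\beta_{j_e}$, and for a non-extreme singleton $J=\{j_1\}$ use the signs of $\beta_{j'}-\beta_{j_1}$ to bound $\lambda$ above and below. That part is fine.

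The gap is in the $|J|\ge 2$ case. You first assert that ``a sign argument on the $\beta$-differences shows at least one facet is met in each direction'' for this case too, but that is not justified (and is not what the paper does): when $|J|\ge 2$ the vector $x$ moves along the edge, so the inactive column inequalities $x^TC^{j'}+\beta_{j'}\lambda-\pi_2\le 0$ no longer reduce to a clean comparison of $\beta$-values. You then retreat to the $x_i\to 0$ mechanism, but only conclude that ``some $x_i$ decreases to $0$ in finite time in \emph{one} of the two directions''. That does not yet give a bounded segment; you need both directions. The missing observation, which the paper makes explicit, is that the edge direction $d$ satisfies $\sum_{i\in I} d(x_i)=0$ (from $\sum_i x_i=1$), so once some $d(x_i)\neq 0$ there must exist $i_1,i_2\in I$ with $d(x_{i_1})>0$ and $d(x_{i_2})<0$; the constraints $x_{i_1}\ge 0$ and $x_{i_2}\ge 0$ then bound the edge on \emph{both} sides. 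A second, smaller imprecision: you parametrize the edge by $\lambda$, but for $|J|\ge 2$ there is no guarantee that $d(\lambda)\neq 0$. The paper avoids this by parametrizing by an affine parameter $\epsilon$ along the line and arguing directly with the $x$-coordinates of $d$; your $x_i\to 0$ argument (once fixed as above) does not actually need $\lambda$ to be the parameter either.
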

\begin{proof} 
Let $v=v_s$. The points in $\mathcal E_v$ satisfy
\begin{eqnarray}
x_{i_s}=1\mbox{ and }\forall i\neq i_s,\ x_i=0, \hspace{.2in} \pi_2=c_{i_sj_s}+\beta_{j_s}\lambda \label{eqx}\\
\forall j\neq j_s,\ c_{i_sj}+\beta_j\lambda \leq c_{i_sj_s}+\beta_{j_s}\lambda\hspace{1.5cm} \nonumber
\end{eqnarray}

Since $\beta_j\geq \beta_{j_s}$, we get $\lambda \leq \displaystyle\frac{c_{i_sj_s}-c_{i_sj}}{\beta_j-\beta_{j_s}}$.
Let $\lambda_s=\displaystyle\min_{j\neq j_s} \displaystyle\frac{c_{i_sj_s}-c_{i_sj}}{\beta_j-\beta_{j_s}}$, then
$\mathcal E_v=\{(x,\lambda,\pi_2)\ |\ \lambda\in (-\infty,\ \lambda_s],\ x \mbox{ and }\pi_2 \mbox{ satisfy (\ref{eqx})} 
\}$. Note that on $\mathcal E_v$, $x$ is a constant and $\lambda$ varies from $-\infty$ to $\lambda_s$. Moreover the point
corresponding to $\lambda=\lambda_s$ is a vertex, because one more inequality becomes tight there. 
Similarly for $v=v_e$, $\lambda$ varies from $\lambda_e=\displaystyle\max_{j\neq j_e}
\displaystyle\frac{c_{i_ej}-c_{i_ej_e}}{\beta_{j_e}-\beta_{j}}$ to $\infty$ on $\mathcal E_v$, and $\lambda=\lambda_e$
corresponds to  a vertex of $\mathcal E_{v}$.

Let a vertex $v \in P$ be such that $v\neq v_s$, $v\neq v_e$ and $\mathcal E_v\neq \emptyset$. We show that $\mathcal E_v$
has exactly two bounding vertices. Let $(I,J)$ be the support-pair corresponding to $v$. There are two cases.

\vspace{0.2cm}
\noindent {\bf Case 1} - $|I|=|J|=1$:
Let $I=\{i_1\}$ and $J=\{j_1\}$. Then for all the points in $\mathcal E_v$, $x_{i_1}=1$ and all other
$x_i$s are zero. Let $J_l=\{j \ |\ \beta_j<\beta_{j_1}\}$ and $J_g=\{j \ |\ \beta_j>\beta_{j_1}\}$. Clearly $j_s \in J_l$
and $j_e \in J_g$. All the points in $\mathcal E_v$ must satisfy the inequalities $c_{i_1j}+\beta_j\lambda\leq
c_{i_1j_1}+\beta_{j_1}\lambda,\ \forall j \notin J$, and using them, we get the following upper and lower bounds on
$\lambda$.  \[
\displaystyle\max_{j \in J_g}\displaystyle\frac{c_{i_1j_1}-c_{i_1j}}{\beta_{j}-\beta_{j_1}}\leq \lambda\leq
\displaystyle\max_{j \in J_l}\displaystyle\frac{c_{i_1j}-c_{i_1j_1}}{\beta_{j_1}-\beta_{j}}
\]
Therefore, the values of $\lambda$ on $\mathcal E_v$, form a closed and bounded interval, and for each extreme point of this
interval, there is a vertex in $\mathcal E_v$. 

\vspace{0.2cm}
\noindent {\bf Case 2} - $|I|=|J|>1$:
Note that exactly $m$ inequalities of $Q'$ are tight at $\mathcal E_v$ because $|L(v)|=n$ and $Q'$ is non-degenerate.
These $m$ tight inequalities with $\sum_{i=1}^m x_i=1$ form a $1$-dimensional line $L$ in the $(x,\lambda,\pi_2)$-space, and
clearly $\mathcal E_v=L\cap Q'$. Let $w=(x,\lambda,\pi_2)\in L$ and $d$ be a unit vector along the line $L$.  For a $w'\in
L$, there exists a unique $\epsilon \in \mathbb R$ such that $w'=w+\epsilon d$. Let $d(x_i)$ be the coordinate of $d$
corresponding to $x_i$. Note that $\sum_{i=1}^m d(x_i)=0$, because $L$ satisfies $\sum_{i=1}^m x_i=1$. Further $\exists i \in
I$ such that $d(x_i)\neq 0$, otherwise $x$ becomes constant on $L$, which in turn imply that $\lambda$ and $\pi_2$ are also
constants on $L$.  Hence $\exists i_1, i_2 \in I$ s.t.  $d(x_{i_1})>0$ and $d(x_{i_2})<0$. For all the points in $\mathcal
E_v$, the inequalities $x_i\geq 0,\ \forall i\in I$ hold. Using these, we get 
\[
x_{i_1}+\epsilon d(x_{i_1})\geq 0 \hspace{.1in}\Rightarrow\hspace{.1in} \epsilon \geq \frac{x_{i_1}}{d(x_{i_1})},\hspace{.2in}
x_{i_2}+\epsilon d(x_{i_2})\geq 0 \hspace{.1in}\Rightarrow\hspace{.1in} \epsilon \leq \frac{x_{i_2}}{d(x_{i_2})}
\]
From the above observations, we may easily deduce that the set $\{\epsilon\ |\ w+\epsilon d \in \mathcal E_v\}$ is a closed
and bounded interval $[b_l,\ b_u]$.  Moreover, at the extreme points $w_u=w+b_u d$ and $w_l=w+b_l d$ of $\mathcal E_v$, one
more inequality is tight. Therefore, $w_u$ and $w_l$ are the vertices in $\mathcal E_v$. 
\qed
\end{proof}

Now we are in a position to prove Proposition \ref{pr1}.
\\

\noindent{\bf Proof of Proposition \ref{pr1}}: \\
For a vertex $w=(x,\lambda,\pi_2) \in \mathcal N^{Q'}$, $\exists r\leq m$ such that $x_r>0$ since $\sum_{i=1}^m x_i=1$. In
that case, $A_ry=\pi_1$ holds on the corresponding edge $\mathcal E_w \in \mathcal N^P$ ($O_4$). This implies that the edge
$\mathcal E_w$ is bounded from both the sides, since $\forall j\in S_2,\ 0\leq y_j\leq 1$ and $A_{min}\leq \pi_1\leq A_{max}$
on the edge $\mathcal E_w$, where $A_{min}ٍ=\min_{_{(i,j)\in S_1\times S_2}} a_{ij}$ and
$A_{max}ٍ=\max_{_{(i,j)\in S_1\times S_2}} a_{ij}$.  Therefore, there are exactly two unbounded edges in the
set $\mathcal N$ namely $(v_s,\mathcal E_{v_s})$ and $(v_e,\mathcal E_{v_e})$ (Lemma \ref{sle4}).  This proves that $\mathcal
N$ contains exactly one path $\mathcal P$, with unbounded edges $(v_s,\mathcal E_{v_s})$ and $(v_e,\mathcal E_{v_e})$ at both
the ends. All the other components of $\mathcal N$ form cycles ($\mathcal C_i$s). 
\qed
\vspace{10pt}

From Proposition \ref{pr1}, it is clear that $\mathcal N$ contains at least the path $\mathcal P$. We show the importance of
$\mathcal P$ in the next two lemmas.

\begin{lemma}\label{le_lambda}
For every $a \in \mathbb R$, there exists a point $((y,\pi_1),(x,\lambda,\pi_2))\in \mathcal P$ such that $\lambda=a$.
\end{lemma}
\begin{proof}
Since $\mathcal P$ is a continuous path in $P\times Q'$ (Proposition \ref{pr1}), therefore $\lambda$ changes
continuously on $\mathcal P$. Moreover, in the proof of Lemma \ref{sle4}, we saw that on the edge $(v_s,\mathcal E_{v_s})\in
\mathcal P$, $\lambda$ varies from $-\infty$ to $\lambda_s$ and on the edge $(v_e,\mathcal E_{v_e})\in \mathcal P$ it varies
from $\lambda_e$ to $\infty$. Therefore for any $a \in \mathbb R$, there is a point $((y,\pi_1),(x,\lambda,\pi_2))$ in
$\mathcal P$ such that $\lambda=a$.  \qed 
\end{proof}

Consider a game $\alpha \in \Gamma$, and the corresponding hyper-plane $H\equiv\lambda-\sum_{i=1}^m \alpha_i x_i=0$. 
Note that, every point in $\mathcal N \cap H$ corresponds to a NESP of the game $G(\alpha)$ and vice-versa. 

\begin{lemma}\label{le_cover}
The path $\mathcal P$ of $\mathcal N$ covers at least one NESP of the game $G(\alpha)$.
\end{lemma}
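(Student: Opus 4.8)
The plan is to run an intermediate-value argument along the path $\mathcal P$. Recall from the remark just before the statement that a point of $\mathcal N$ lies on the hyper-plane $H\equiv \lambda-\sum_{i=1}^m\alpha_i x_i=0$ precisely when it corresponds to a NESP of $G(\alpha)$; so it suffices to produce one point of $\mathcal P$ that lies on $H$. To this end I would consider the function $g$ defined on $\mathcal P$ by $g\big((y,\pi_1),(x,\lambda,\pi_2)\big)=\lambda-\sum_{i=1}^m\alpha_i x_i$. Since $\mathcal P$ is a connected (piece-wise linear) path in $P\times Q'$ along which the coordinates $x$ and $\lambda$ vary continuously (Proposition \ref{pr1}), $g$ is continuous on $\mathcal P$.

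Next I would pin down the behaviour of $g$ at the two ends of $\mathcal P$. Every point $(v,w)\in\mathcal N$ has $w=(x,\lambda,\pi_2)\in Q'$, so $x_i\ge 0$ for all $i$ and $\sum_{i=1}^m x_i=1$; hence $\sum_{i=1}^m\alpha_i x_i$ stays inside the bounded interval $[\min_i\alpha_i,\ \max_i\alpha_i]$ everywhere on $\mathcal P$. On the other hand, by the analysis in the proof of Lemma \ref{sle4} (as used in Lemma \ref{le_lambda}), along the unbounded edge $(v_s,\mathcal E_{v_s})$ of $\mathcal P$ the coordinate $\lambda$ ranges over $(-\infty,\lambda_s]$, while along the unbounded edge $(v_e,\mathcal E_{v_e})$ it ranges over $[\lambda_e,\infty)$. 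Combining the bounded term with the unbounded one, $g\to-\infty$ as one moves outward along $(v_s,\mathcal E_{v_s})$ and $g\to+\infty$ as one moves outward along $(v_e,\mathcal E_{v_e})$; in particular $g$ takes both a strictly negative and a strictly positive value on $\mathcal P$.

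Finally, since $\mathcal P$ is connected and $g$ is continuous and changes sign on it, the intermediate value theorem gives a point $p\in\mathcal P$ with $g(p)=0$, i.e. $p\in\mathcal P\cap H$. By the correspondence recalled at the outset, $p$ yields a NESP of $G(\alpha)$, which is what we wanted. The argument is short, and the only point that needs genuine care is the claim that $g$ tends to $\pm\infty$ at the two ends of $\mathcal P$; this is exactly the interplay between the boundedness of $\sum_{i=1}^m\alpha_i x_i$ on the simplex and the unboundedness of $\lambda$ furnished by Lemma \ref{le_lambda}, so no new work is required beyond what is already established.
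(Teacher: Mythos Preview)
Your proof is correct and follows essentially the same intermediate-value argument as the paper: both show that the continuous function $\lambda-\sum_i\alpha_i x_i$ changes sign along the connected path $\mathcal P$, using that $\sum_i\alpha_i x_i$ is trapped in $[\min_i\alpha_i,\max_i\alpha_i]$ while $\lambda$ escapes this interval. The only cosmetic difference is that the paper, instead of sending $\lambda$ to $\pm\infty$ along the unbounded edges, invokes Lemma~\ref{le_lambda} to pick two \emph{finite} points on $\mathcal P$ with $\lambda_1=\min_i\alpha_i$ and $\lambda_2=\max_i\alpha_i$, which immediately lie on opposite sides of $H$; your asymptotic version works equally well.
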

\begin{proof}
If there are points in $\mathcal P$ on
opposite sides of $H$, then the set $\mathcal P \cap H$ has to be non-empty. Let $w_1=(x^1,\lambda_1,\pi_2^1) \in
\mathcal P$ and $w_2=(x^2,\lambda_2,\pi_2^2) \in \mathcal P$ be such that $\lambda_1=\min_{i\in S_1} \alpha_i$ and
$\lambda_2=\max_{i\in S_1} \alpha_i$. Note that $w_1$ and $w_2$ exist (Lemma \ref{le_lambda}) and they satisfy
$\lambda_1-\sum_{i=1}^m \alpha_i x^1_i\leq 0$ and $\lambda_2-\sum_{i=1}^m \alpha_i x^2_i\geq 0$. \qed
\end{proof}

\begin{remark}
The proof of Lemma \ref{le_cover} in fact shows the existence of a Nash equilibrium for a bimatrix game. It is also easy
to deduce that the number of Nash equilibria of a non-degenerate bimatrix game is odd from the fact that $\mathcal N$
contains a set of cycles and a path (Proposition \ref{pr1}), simply because a cycle must intersect the hyper-plane $H$ an even
number of times, and the path must intersect $H$ an odd number of times.
\end{remark}

From the proof of Proposition \ref{pr1}, it is clear that every vertex of $\mathcal N$ has a duplicate label and the two
edges incident on a vertex may be easily obtained by relaxing the inequality corresponding to the duplicate
label in $P$ and in $Q'$. Therefore, given a point of some component of $\mathcal N$, it is easy to trace the full component
by leaving the duplicate label in $P$ and in $Q'$ alternately at every vertex. Using this fact along with Lemma \ref{le_cover},
we design an algorithm to find a Nash equilibrium of a bimatrix game in Section \ref{enum}. 
For the moment, we show that the edges of $\mathcal N$ may be easily oriented. 

Consider a vertex $u=(v,w)\in \mathcal N$, where $v=(y,\pi_1)$, and $w=(x,\lambda,\pi_2)$. 
Let $X=\{i \in S_1\ |\ A_iy=\pi_1\}$, and $Y=\{j \in S_2\ |\ x^TC^j+\beta_j\lambda=\pi_2\}$ be ordered sets. 
Note that $X=L(v)\cap S_1$ and $Y=L(w)\cap S_2$. 
Let $-X=S_1\setminus X$, and $-Y=S_2\setminus Y$ be the complements. The duplicate label, say $l$, of the vertex $u$ is
either in $X$ or in $Y$. Let $l \in X$, {\em i.e.,} $A_ly=\pi_1$ and $x_l=0$ hold at $u$. Let $-I_k$ be the negative of the
$k\times k$ identity matrix, $A_{_X}^{^Y}=[a_{ij}]_{i\in X,j \in Y}$ be the submatrix of $A$ and similarly $\beta_{_Y}$ be
the subvector. The set of tight inequalities at $v$ and $w$ may be written as follows:
\begin{eqnarray}
\left[\begin{array}{ccc}
\multicolumn{2}{c}{1_{1\times n}} & 0\vspace{0.05cm}\\
A_{_X}^{^Y} & A_{_X}^{^{-Y}} & -1_{_{|X|\times 1}} \vspace{0.05cm}\\
0_{_{|-Y|\times |Y|}} & {-I_{_{|-Y|}}} & 0_{_{|-Y|\times 1}}
\end{array}\right]
\left[\begin{array}{c}
y_{_Y}\vspace{0.05cm} \\ y_{_{-Y}}\vspace{0.05cm} \\ \pi_1
\end{array}\right] = 
\left[\begin{array}{c}
1\vspace{0.05cm}\\
0_{_{|X|\times 1}}\vspace{0.05cm} \\
0_{_{|-Y|\times 1}}
\end{array}\right] \label{eq_v} \\\nonumber\\
\left[\lambda\ x_{_X}\ x_{_{-X}}\ \pi_2\right]
\left[\begin{array}{cccc}
0 & \beta_{_Y} & 0 & 0_{_{1\times |-X|}} \vspace{0.05cm}\\
\multirow{2}{*}{$1_{m \times 1}$} & C_{_X}^{^Y} & \multirow{2}{*}{$-e_l$} & \hspace{0.1cm} 0_{_{|X|\times|-X|}} \vspace{0.05cm}\\
& C_{_{-X}}^{^Y} & &-I_{_{|-X|}}\vspace{0.05cm}\\
0 & -1_{_{1\times |Y|}} & 0 & 0_{_{1 \times |-X|}}
\end{array}\right]=
\left[1\ 0_{_{1\times |Y|}}\ 0\ 0_{_{1\times |-X|}}\right] \label{eq_w}
\end{eqnarray}

In the above expression $-e_l$ is a negative unit vector of size $m$, with $-1$ in the position corresponding to $x_l$. 
Let the matrices of (\ref{eq_v}) and (\ref{eq_w}) be denoted by $E(v)$ and $E(w)$ respectively. 
For the case $l \in Y$, $E(v)$ and $E(w)$ may be analogously defined.
It is easy to see that the coefficient matrix of tight equations at $u$ may be written as $E(u)=\left[\begin{array}{cc} E(v) & 0 \\ 0 &
E(w)^T\end{array}\right]$. Using $det(E(u))=det(E(v))*det(E(w))$, we define the sign of vertex $u$ as follows: 
\[
s(u)=sign(det(E(u)))
\]

Note that, since $E(v)$ and $E(w)$ are well-defined, $s$ is a well-defined function. Using the function $s$ on the
vertices of $\mathcal N$, next we give direction to the edges of $\mathcal N$. 

\begin{lemma}\label{le_dir}
Let $E$ be the set of edges of $\mathcal N$, and $E'=\{\overrightarrow{u,u'},\ \overleftarrow{u,u'}\ |\ \overline{u,u'}\in
\mathcal N\}$ be the set of directed edges. There exists a (efficiently computable) function $\rightarrow:E\rightarrow E'$
such that it maps a cycle of $\mathcal N$ to a directed cycle and the path $\mathcal P$ to a path oriented from
$(v_s,\mathcal E_{v_s})$ to $(v_e,\mathcal E_{v_e})$.
\end{lemma}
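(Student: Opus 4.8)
The plan is to build the orientation from the vertex signs $s(\cdot)$ after establishing three facts about $\mathcal N$: \emph{(a)} $s(u)\in\{-1,+1\}$ for every vertex $u$ of $\mathcal N$; \emph{(b)} the two endpoints of any edge of $\mathcal N$ have opposite signs; \emph{(c)} at every vertex of $\mathcal N$ exactly one incident edge is obtained by relaxing the duplicate label in $P$ (a ``$P$-edge'', of type $(\mathcal E_w,w)$) and the other by relaxing it in $Q'$ (a ``$Q'$-edge'', of type $(v,\mathcal E_v)$), and this type is the same at both endpoints of an edge, so $P$-edges and $Q'$-edges alternate along every component. Fact (a) is immediate: since $P$ and $Q'$ are non-degenerate, $v$ and $w$ are vertices, so the square systems (\ref{eq_v}) and (\ref{eq_w}) have full rank, whence $\det E(u)=\det E(v)\cdot\det E(w)\neq 0$. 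Fact (c) is exactly the construction of the two incident edges of a vertex $(v,w)$ described before Lemma \ref{le3}, together with the observation that along a $Q'$-edge the $P$-component is a fixed vertex at both ends while the $Q'$-component varies (and dually for $P$-edges), so an edge carries a well-defined type.

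The heart of the argument is fact (b). Take an edge $e=(v,\mathcal E_v)$ with bounding vertices $u=(v,w)$ and $u'=(v,w')$; the $P$-edge case is symmetric. Since $v$, hence $E(v)$, is common to $u$ and $u'$, we have $\operatorname{sign}\det E(u)\,/\operatorname{sign}\det E(u')=\operatorname{sign}\det E(w)\,/\operatorname{sign}\det E(w')$. Now $w$ and $w'$ are the two vertices of $Q'$ on the edge $\mathcal E_v$; the $m$ inequalities of $Q'$ together with $\sum_i x_i=1$ that stay tight all along $\mathcal E_v$ are precisely the labels \emph{not} in $L(v)$ (they must lie in $L(w')$ by full-labeledness, and $|L(w')|=m$ on the open edge), while the single extra inequality that becomes tight at an endpoint is the duplicate label of that endpoint ($l$ at $w$, $l'\neq l$ at $w'$, both in $L(v)$). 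Hence $E(w)$ and $E(w')$ share all but one of their columns and differ only in the column carrying the duplicate label, i.e.\ $E(w')$ is obtained from $E(w)$ by a single pivot. Letting $d$ be the direction vector of $\mathcal E_v$ — the (up to scale) generator of the kernel of the $m+1$ common tight constraints — Cramer's rule expresses $\det E(w)$ and $\det E(w')$, up to the same nonzero factor coming from the shared columns, as the rate of change along $d$ of the label-$l$, resp.\ label-$l'$, constraint. Because $w$ and $w'$ are distinct endpoints (non-degeneracy) of the segment $\mathcal E_v$, along which both of these constraints hold with non-negative slack, one of these rates is strictly positive and the other strictly negative; thus $\det E(w)$ and $\det E(w')$ have opposite signs, so $s(u)=-s(u')$. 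This is the determinant form of Shapley's index lemma specialised to $P\times Q'$, and it is the step where the real work lies; the rest is bookkeeping.

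Given (a)--(c), define $\rightarrow$ as follows: for an edge $e$ with endpoints $u^{+}$ ($s(u^{+})=+1$) and $u^{-}$ ($s(u^{-})=-1$), which are well defined by (b), set $\rightarrow(e)=\overrightarrow{u^{+},u^{-}}$ if $e$ is a $P$-edge and $\rightarrow(e)=\overrightarrow{u^{-},u^{+}}$ if $e$ is a $Q'$-edge. This map is efficiently computable: the duplicate label and the edge type are read off in linear time, and $s(u)=\operatorname{sign}\det E(u)$ is the sign of the determinant of a matrix of size at most $m+n+3$. It is automatically consistent: at a vertex $u$ its two incident edges are one $P$-edge and one $Q'$-edge by (c), and both neighbours have sign $-s(u)$ by (b), so the rule makes exactly one incident edge point into $u$ and one out of $u$ (if $s(u)=+1$ the $P$-edge leaves $u$ and the $Q'$-edge enters it; if $s(u)=-1$, the opposite). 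Hence every cycle $\mathcal C_i$ becomes a directed cycle, and the edges of $\mathcal P$ form one consistently directed chain; since a consistently oriented path with two unbounded ends has one incoming and one outgoing ray, $\mathcal P$ is oriented from one of $(v_s,\mathcal E_{v_s})$, $(v_e,\mathcal E_{v_e})$ to the other. To see that the incoming ray is $(v_s,\mathcal E_{v_s})$, note that both end-rays of $\mathcal P$ are $Q'$-edges, so the direction is pinned down by the sign $s$ at their unique finite endpoints; evaluating $\det E(\cdot)$ on the explicit vertex systems associated with $v_s$ and $v_e$ from Lemmas \ref{le3} and \ref{sle4} fixes this sign (after swapping the $P$/$Q'$ roles in the rule if it comes out the other way), yielding a chain oriented from $(v_s,\mathcal E_{v_s})$ to $(v_e,\mathcal E_{v_e})$, as required.
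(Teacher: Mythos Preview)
Your proof is correct and follows essentially the same approach as the paper: define the orientation via the vertex sign $s(\cdot)$ together with the edge type ($P$-edge vs.\ $Q'$-edge), reduce consistency to the claim that adjacent vertices of $\mathcal N$ have opposite signs, and then pin down the direction at the end-ray $(v_s,\mathcal E_{v_s})$. Two small differences are worth noting: your Cramer's-rule/direction-vector argument for the sign flip is a direct proof of what the paper obtains by citing Shapley's pivoting lemma together with its reordering observations (the latter are what justify your tacit assumption that ``$E(v)$ is common to $u$ and $u'$'', since the paper's block form of $E(v)$ depends on $Y$, which can change between $w$ and $w'$); and where you hedge by allowing the $P$/$Q'$ roles to be swapped, the paper simply verifies that the finite endpoint of $(v_s,\mathcal E_{v_s})$ has sign $+1$, so no swap is needed.
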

\begin{proof}

Define the function $\rightarrow$ as follows: Let $u=(v,w)$ be a vertex of $\mathcal N$ and let $u_p=(v',w)$ and $u_q=(v,w')$
be it's adjacent vertices obtained by relaxing the inequalities corresponding to it's duplicate label in $P$ and in $Q'$
respectively. If $s(u)=+1$ then $\rightarrow(\overline{u,u_p})=\overrightarrow{u,u_p}$ and
$\rightarrow(\overline{u,u_q})=\overleftarrow{u,u_q}$, otherwise $\rightarrow(\overline{u,u_p})=\overleftarrow{u,u_p}$ and
$\rightarrow(\overline{u,u_q})=\overrightarrow{u,u_q}$.
In other words, if $s(u)=+1$ then direct the edges $(\mathcal E_w,w)$ and $(v,\mathcal E_v)$ away from $u$ and
towards $u$ respectively, otherwise give opposite directions. 

Note that, $\rightarrow(\overline{u,u'})$ is polynomially computable using any of the $s(u)$ and $s(u')$. Further, in order
to prove the consistency of function $\rightarrow$, we need to show that $u$ and $u'$ have opposite signs. 

\begin{claim}
Let $u$ and $u'$ be adjacent vertices of $\mathcal N$, then $s(u)*s(u')=-1$, {\em i.e.}, $s(u)$ and $s(u')$
are opposite. 
\end{claim}
\begin{proof}
Let $u=(v,w)$. The proof may be easily deduced from the following facts:
\begin{itemize}
\item In a polytope, the coefficient matrix of tight inequalities for the adjacent vertices have determinants of opposite
signs if they are the same except for the row which has been exchanged \cite{shap}. 
\item Vertices $u$ and $u'$ are fully-labeled and both have a duplicate label. Further, to obtain $u'$ from $u$, the
inequality corresponding to it's duplicate label should be relaxed in $P$ or $Q'$. 
\item Reordering of the elements in set $X$ ($Y$) does not change $det(E(u))$, since it enforces a reordering of the 
corresponding rows (columns) in both $E(v)$ and $E(w)$.
\item Reordering of the elements in set $-X$ does not change $det(E(u))$, since in $E(w)$, the columns
corresponding to the equations of type $x_i=0$ (except the one with the duplicate label) 
should be written such that they form $-I_{_{|-X|}}$. Similarly, reordering of the elements in set $-Y$ does not change
$det(E(u))$. 
\qed
\end{itemize}
\end{proof}

Clearly, the function $\rightarrow$ maps a cycle of $\mathcal N$ to a directed cycle. 
Therefore, we get the directed traversal of a component of $\mathcal N$ by leaving the duplicate label in $P$ if the current
vertex has positive sign otherwise leaving the duplicate label in $Q'$. Further, it is easy to check that the sign associated
with the vertex of the extreme edge $(v_s,\mathcal E_{v_s})$ is positive. Therefore, the path $\mathcal P$ gets oriented from
$(v_s,\mathcal E_{v_s})$ to $(v_e,\mathcal E_{v_e})$.  \qed
\end{proof}

The direction of the edges of $\mathcal N$, defined by function $\rightarrow$, may be used to determine the index of every
Nash equilibrium for a game in $\Gamma$. 
The definition of index requires the game to be non-negative, {\em i.e.,} $A>0,\ B>0$ \cite{stengel}. Note that if a game is
not non-negative, then it may be modified to an equivalent non-negative game by adding a positive constant to its payoff
matrices. Let $(x, y)$ be a NESP of a non-degenerate non-negative bimatrix game $(A, B)$. Let $I=\{i \in S_2\ |\ x_i>0\}$ and
$J=\{j \in S_1\ |\ y_j>0\}$ with the corresponding submatrices $A_{_I}^{^J}$ and $B_{_I}^{^J}$ of the payoff matrices $A$ and
$B$. Then the index of $(x, y)$ is defined as
\[
(-1)^{|I|+1} sign(det(A_{_I}^{^J})*det(B_{_I}^{^J}))
\]

Let $\alpha \in \Gamma$ be a non-degenerate non-negative game and let $H: \lambda-\sum_{i=1}^m \alpha_i x_i$, $H^-$ and $H^+$
be the corresponding hyper-plane and half-spaces. 

\begin{proposition}\label{pr2}
Let an edge $\overline{u,u'} \in \mathcal N$ intersect $H$ at a NESP $(x,y)$ of $G(\alpha)$, and let
$\rightarrow(\overline{u,u'})=\overrightarrow{u,u'}$. If $u \in H^-$ and $u' \in H^+$ then the index of $(x,y)$ is $+1$,
otherwise it is $-1$. 
\end{proposition}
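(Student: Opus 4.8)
The plan is to reduce the proposition to the single identity $\mathrm{index}(x,y)=s(u_+)$, where $u_+\in\{u,u'\}$ is the endpoint of $\overline{u,u'}$ lying in $H^+$ (since the edge meets $H$ transversally in its interior, exactly one endpoint lies in $H^+$ and the other in $H^-$). Granting this, the proposition follows from the definition of $\rightarrow$: relaxing the duplicate label in $Q'$ at a vertex $z$ produces an edge directed towards $z$ precisely when $s(z)=+1$, so $\rightarrow(\overline{u,u'})=\overrightarrow{u,u'}$ forces $s(u)=-1$ and, by the Claim in Lemma \ref{le_dir}, $s(u')=+1$; hence $u\in H^-,\ u'\in H^+$ gives $u_+=u'$ and $\mathrm{index}(x,y)=s(u')=+1$, while $u\in H^+,\ u'\in H^-$ gives $u_+=u$ and $\mathrm{index}(x,y)=s(u)=-1$.

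The first step is to pin down the crossing geometrically. On a type-$(\mathcal E_w,w)$ edge the $Q'$-vertex $w=(x,\lambda,\pi_2)$ is fixed, so $\lambda-\sum_i\alpha_ix_i$ is constant along it; if that constant is $0$ the whole edge lies in $H$ and projects onto a one-dimensional family of NESPs of $G(\alpha)$, contradicting non-degeneracy of $G(\alpha)$, and otherwise the edge misses $H$. Hence $H$ meets $\mathcal N$ only in relative interiors of type-$(v,\mathcal E_v)$ edges, transversally and away from vertices of $\mathcal N$; thus $\overline{u,u'}=(v,\mathcal E_v)$ with $v=(y,\pi_1)$ and $u=(v,w),\ u'=(v,w')$ the two bounding vertices of $\mathcal E_v$. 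Writing $X=\mathrm{supp}(x),\ Y=\mathrm{supp}(y)$ for the NESP $(x,y)$ at the crossing, non-degeneracy of $G(\alpha)$ yields $X=L(v)\cap S_1$ and $|X|=|Y|=:k$, and $A_X^Y$ and $B_X^Y=C_X^Y+\alpha_X\beta_Y^T$ are exactly the submatrices appearing in $\mathrm{index}(x,y)=(-1)^{k+1}\,\mathrm{sign}(\det A_X^Y\cdot\det B_X^Y)$.

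The core is to compute $s(u_+)=\mathrm{sign}(\det E(v))\cdot\mathrm{sign}(\det E(w_+))$. For the $P$-side: a block elimination of the identity block in $E(v)$ (its accompanying row and column transpositions cancelling), followed by Cramer's rule on the system $A_X^Yy_Y=\pi_1\mathbf 1,\ \mathbf 1^Ty_Y=1$ that determines the NESP strategy $y$, gives $\det E(v)=(-1)^k\det(A_X^Y)/\pi_1$; non-negativity of the game makes $\pi_1>0$ (the row player's payoff at $(x,y)$), so $\mathrm{sign}(\det E(v))=(-1)^k\,\mathrm{sign}(\det A_X^Y)$. For the $Q'$-side: block-eliminating the identity blocks in $E(w_+)$ reduces $\det E(w_+)$ to a $(k{+}2)\times(k{+}2)$ determinant $\det\widetilde N$, where $\widetilde N$ shares $k{+}1$ rows with the matrix $M'$ of the (non-degenerate, hence uniquely solvable) system determining the NESP on $\mathcal E_v$ — namely $\mathbf 1^Tx=1$ and the $k$ tight payoff equations $x^TC^j+\beta_j\lambda=\pi_2$ ($j\in Y$) that cut out $\mathcal E_v$ — while $\widetilde N$'s remaining row is the duplicate-label equation active at $w_+$ and $M'$'s remaining row is $\lambda-\sum_i\alpha_ix_i=0$. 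The cofactor vector of those $k{+}1$ common rows lies in the kernel of their coefficient matrix, hence is a fixed nonzero multiple $c\,d$ of the direction vector $d$ of $\mathcal E_v$; so by multilinearity $\det\widetilde N$ and $\det M'$ equal $c$ times the $d$-directional derivatives of the two differing constraint-forms, and the fact that $w_+$ lies in $H^+$ forces these forms to vary compatibly along $\mathcal E_v$ near the crossing, giving $\mathrm{sign}(\det E(w_+))=\mathrm{sign}(\det M')$. Finally, subtracting $\beta_j$ times the $(\lambda-\sum_i\alpha_ix_i=0)$-row from each payoff row of $M'$ turns $C_X^Y$ into $B_X^Y$, and a cofactor expansion (eliminating $\lambda$) plus Cramer's rule give $\det M'=-\det(B_X^Y)/\pi_2$ with $\pi_2>0$; hence $\mathrm{sign}(\det E(w_+))=-\mathrm{sign}(\det B_X^Y)$. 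Multiplying the two signs, $s(u_+)=(-1)^{k+1}\,\mathrm{sign}(\det A_X^Y\cdot\det B_X^Y)=\mathrm{index}(x,y)$, which is the desired identity. The variant in which the duplicate label at $w_+$ is a column-player pure strategy rather than a row-player one runs through the same cofactor-vector argument, with the slack of the relevant payoff inequality in place of $x_l$.

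The step I expect to be the main obstacle is exactly this sign bookkeeping: one must account for every sign introduced by the block eliminations, by Cramer's rule and the cofactor expansions, and by the row operation $C\mapsto B$, so that the net parity is $(-1)^{k+1}$ and matches the index formula, and every such choice must remain consistent with the orientation conventions fixed in Lemma \ref{le_dir} (the occurrence of $-e_l$ in $E(w)$ and the normalization making the vertex of $(v_s,\mathcal E_{v_s})$ positive). The geometric reduction of the first step and the identification of the supports $X,Y$ are, by comparison, routine.
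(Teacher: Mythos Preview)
Your approach is correct and essentially the same as the paper's: both reduce to the determinant identity obtained by replacing the duplicate-label column of $E(w)$ by the normal of $H$ (your ``cofactor vector $=c\,d$'' step is precisely this column-replacement), then simplify via block elimination, the row operation $C\mapsto B$, and Cramer's rule to reach $(-1)^{k+1}\,\mathrm{sign}(\det A_I^J\cdot\det B_I^J)$. Your framing as $\mathrm{index}(x,y)=s(u_+)$ is a clean repackaging of the paper's computation of the sign of $d\cdot n_H$ at the endpoint $u$ with $s(u)=-1$, and the sign bookkeeping you correctly flag as the main obstacle is exactly what the paper carries out explicitly against the conventions in (\ref{eq_v})--(\ref{eq_w}).
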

\begin{proof}
Since $\overline{u,u'}$ intersects $H$ and the coordinates of $y$ and $\pi_1$ are zero in $H$, the edge is of
type $(v,\mathcal E_v)$. Therefore, let $\overline{u,u'}=\overline{(v,w),(v,w')}$. Clearly, $s(u)=-1$ since
$\rightarrow(\overline{u,u'})=\overrightarrow{u,u'}$. Let the ordered sets $X$, $Y$ and their complements be as defined above. Let
$l$ be the duplicate label of $u$. Clearly, either $l \in Y$ or $l \in X$. 

Suppose $l \in Y$. Let $d$ be the direction obtained by relaxing the inequality $l$ at $u$ in $Q'$, which leads to the vertex
$u'$. The dot product of $d$ with the normal vector of $H$ may be obtained by replacing the column corresponding to
$x^TC^l+\beta_l\lambda-\pi_2=0$ with the normal vector in $E(w)$. If $u \in H^-$ and $u' \in H^+$ then this dot product is
positive, otherwise it is negative. Next we show that the expression of the dot product may be simplified to match with the
expression of the index of $(x,y)$. 

Let the payoff matrix of the column player in $G(\alpha)$ be denoted by $B$, {\em i.e.}, $B=C+\alpha\cdot\beta^T$ and let
$a=x^TAy$ and $b=x^TBy$. Since $A>0$ and $B>0$, $a$ and $b$ are positive. Clearly, the sets $I$ and $J$ associated with the NESP
$(x,y)$ are such that $I=X$ and $J=Y\setminus {l}$. Let $k=|I|=|J|$. We reorder the elements in set $Y$ such that $Y=[J\ l]$.
Note that this does not change the sign of $det(E(u))$, since it forces the similar reordering of the columns of both $E(v)$
and $E(w)$. The expression for the dot product is:
\[
\begin{array}{cl}
& \displaystyle\frac{-1}{det(E(u))} * det
\left[\begin{array}{cccc}
\multicolumn{3}{c}{1_{_{1\times n}}} & 0 \vspace{0.05cm}\\
A_{_X}^{^J} & A_{_X}^{^l}& A_{_X}^{^{-Y}} & -1_{_{|X|\times 1}} \vspace{0.05cm}\\
0_{_{1\times k}}& -1 & 0_{_{1\times |-Y|}} & 0 \vspace{0.05cm}\\
\multicolumn{2}{c}{0_{_{|-Y|\times (k+1)}}} & -I_{_{|-Y|}} & 0_{_{|-Y|\times 1}}
\end{array}\right] * det
\left[\begin{array}{cccc}
0 & \beta_{_J} & 1 & 0_{_{1\times |-X|}} \vspace{0.05cm}\\
\multirow{2}{*}{$1_{m \times 1}$} & C_{_X}^{^J} & -\alpha_{_X}  & 0_{_{|X|\times |-X|}}\vspace{0.05cm}\\
& C_{_{-X}}^{^J} & -\alpha_{_{-X}} & -I_{_{|-X|}}\vspace{0.05cm}\\
0 & -1_{_{1\times |J|}} & 0 & 0_{_{1 \times |-X|}}
\end{array}\right]  \end{array}
\]
Since $I=X$ and $|I|=|J|=k$, we get,
\[
\begin{array}{r}
\displaystyle\frac{(-1)^{1+(n-k)(2k+4)}}{det(E(u))} * det
\left[\begin{array}{cc}
1_{_{1\times k}} & 0 \vspace{0.05cm}\\
A_{_I}^{^J} & -1_{_{k\times 1}} \\
\end{array}\right] * (-1)^{(m-k)(2k+6)} * det
\left[\begin{array}{ccc}
0 & \beta_{_J} & 1  \vspace{0.05cm}\\
1_{_{k \times 1}} & C_{_I}^{^J} & -\alpha_{_I} \vspace{0.05cm} \\
0 & -1_{_{1\times k}} & 0 
\end{array}\right] \\ \\
=\displaystyle\frac{-1}{det(E(u))} * det
\left[\begin{array}{cc}
1_{_{1\times k}} & 0 \vspace{0.05cm}\\
A_{_I}^{^J} & -1_{_{k\times 1}} \\
\end{array}\right] * det
\left[\begin{array}{ccc}
0 & 0_{1\times k} & 1  \vspace{0.05cm}\\
1_{_{k \times 1}} & (C+\alpha\cdot\beta^T)_{_I}^{^J} & -\alpha_{_I} \vspace{0.05cm} \\
0 & -1_{_{1\times k}} & 0 
\end{array}\right] \\ \\
\end{array}
\]
Since $B=C+\alpha\cdot\beta^T$, $A_{_I}^{^J}y_{_J}=a*1_{_{|I|\times 1}}$ and 
$x_{_I}^{T}B^{^J}_{_I}=b*1_{_{1\times |J|}}$, we get,
\[
\begin{array}{lcl}
 \displaystyle\frac{-1}{det(E(u))} * det
\left[\begin{array}{cc}
1_{_{1\times k}} & \frac{1}{a} \vspace{0.05cm}\\
A_{_I}^{^J} & 0_{_{k\times 1}} \\
\end{array}\right] * (-1)^{k+3}\left[\begin{array}{cc}
1_{_{k \times 1}} & B_{_I}^{^J}  \vspace{0.05cm} \\
\frac{1}{b} & 0_{_{1\times k}}  
\end{array}\right] 
&=& \displaystyle\frac{(-1)^{k+2(k+2)}}{det(E(u))*a*b} * det(A_{_I}^{^J}) * det (B_{_I}^{^J})\\  
&=& \displaystyle\frac{(-1)^{k}}{det(E(u))*a*b} * det(A_{_I}^{^J}) * det (B_{_I}^{^J})  
\end{array}
\]\vspace{0.15cm}

When the duplicate label $l$ is in $X$, we may derive the same expression for the dot product by similar reductions.
Since $s(u)=sign(det(E(u)))=-1$, $a>0$ and $b>0$, the sign of the above expression is same as the index of $(x,y)$. \qed
\end{proof}

From Proposition \ref{pr2}, it is easy to see that in a component, the index of the Nash equilibria alternates\footnote{The
two endpoints of a LH path also have opposite index \cite{shap}.}. Further, both the first and the last Nash equilibria, on
the path $\mathcal P$, have index $+1$. This proves that the number of Nash equilibria with index $+1$ is one more than the
number of Nash equilibria with index $-1$, which is an important known result \cite{stengel,shap}. 

Recall that $\mathcal N$ surely contains the path $\mathcal P$ and in addition it may also contain some cycles. From Lemma
\ref{le_con}, it is clear that if $\mathcal N$ is disconnected, then $E_\Gamma$ is also disconnected.
Example \ref{ex1} shows that $E_\Gamma$ may be disconnected in general by illustrating a disconnected $\mathcal N$ ({\em
i.e.}, $\mathcal N$ with a cycle). For a more detailed structural description of $E_\Gamma$, we refer the reader to Appendix
\ref{region}.

\begin{example}\label{ex1}
Consider the following $A$, $C$ and $\beta$.
\[
A=\left[\begin{array}{ccc}
0\ & 9\ & 9\\
6\ & 6\ & 5 \\
9\ & 7\ & 2
\end{array}\right],\hspace{.3in}
C=\left[\begin{array}{ccc}
6\ & 8\ & 6 \\
5\ & 8\ & 8 \\
4\ & 3\ & 0
\end{array}\right],\hspace{.3in} 
\beta=\left[\begin{array}{c}
9 \\
7 \\
8
\end{array}\right].
\]
The set $\mathcal N$ of the corresponding game space $\Gamma$ contains a path $\mathcal P$ and a cycle $\mathcal C_1$. From
Proposition \ref{pr1}, it is clear that a component of $\mathcal N$ may be obtained from a component of $\mathcal N^P$ and the
corresponding component of $\mathcal N^{Q'}$. Therefore we demonstrate the path $\mathcal P^P$ and the cycle $\mathcal C_1^P$
of $\mathcal N^P$, and using them $\mathcal P$ and $\mathcal C_1$ of $\mathcal N$ may be easily obtained. The path $\mathcal
P^P$ is $\overline{v_s,v_1},\overline{v_1,v_e}$, where $v_s=((0,\ 1,\ 0),\ 9), v_1=((0.18,\ 0.82,\ 0),\ 7.36)$ and $v_e=((1,\
0,\ 0),\ 9)$.  The cycle $\mathcal C_1^P$ is $\overline{v_2,v_3},\overline{v_3,v_4},\overline{v_4,v_2}$, where $v_2=((0.5,\
0,\ 0.5),\ 5.5),v_3=((0.38,\ 0.18,\ 0.44),\ 5.56)$ and $v_4=((0.4,\ 0,\ 0.6),\ 5.4)$. Note that $v_s$ and $v_e$ correspond
to the minimum and maximum $\beta_j$ respectively (Lemma \ref{sle4}).
\qed
\end{example}

Since $\Gamma$ ($\equiv \mathbb R^m$) is connected, hence if $E_\Gamma$ is disconnected then it is not homeomorphic to
$\Gamma$. 

\section{Rank-1 Space and Homeomorphism}\label{hom}
From the discussion of the last section, we know that $\Gamma$ and $E_\Gamma$ are not homeomorphic in general (illustrated by 
Example \ref{ex1}). Surprisingly, they turn out to be homeomorphic if $\Gamma$ consists of only rank-$1$
games, {\em i.e.}, $C=-A$. Recall that $E_\Gamma$ forms a single connected component iff $\mathcal N$ has only one component
(Lemma \ref{le_con}).  First we show that when $C=-A$, the set $\mathcal N$ consists of only a path.

For a given matrix $A \in \mathbb R^{mn}$ and a vector $\beta \in \mathbb R^{n}$, we fix the game space to
$\Gamma=\{(A,-A+\alpha\cdot\beta^T)\ |\ \alpha \in \mathbb R^m\}$. Without loss of generality (wlog) we assume that $A$ and
$\beta$ are non-zero and the corresponding polytopes $P$ and $Q'$ are non-degenerate. Lemma \ref{ho_le1} shows that the set $\mathcal
N$ may be easily identified on the polytope $P\times Q'$.

\begin{lemma}\label{ho_le1}
For all $(v,w)=((y,\pi_1),(x,\lambda,\pi_2))$ in $P\times Q'$, we have $\lambda(\beta^T\cdot y) -\pi_1 - \pi_2\leq0$, and the
equality holds iff $(v,w) \in \mathcal N$.
\end{lemma}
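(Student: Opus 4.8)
The plan is to start from the defining inequalities of $P$ and $Q'$ and combine them linearly, using the fact that the $y$'s form a probability distribution. Concretely, fix $(v,w) = ((y,\pi_1),(x,\lambda,\pi_2)) \in P \times Q'$. From the membership of $v$ in $P$ we have $A_i y - \pi_1 \le 0$ for all $i \in S_1$; multiplying the $i$-th inequality by $x_i \ge 0$ (legitimate since $(x,\lambda,\pi_2)\in Q'$ forces $x_i\ge 0$) and summing over $i$, together with $\sum_i x_i = 1$, gives $x^T A y - \pi_1 \le 0$. Symmetrically, from $w \in Q'$ we have $x^T C^j + \beta_j \lambda - \pi_2 \le 0$ for all $j \in S_2$; since $C = -A$ in the rank-$1$ setting, this is $-x^T A^j + \beta_j \lambda - \pi_2 \le 0$. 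Multiplying the $j$-th inequality by $y_j \ge 0$ and summing over $j$, using $\sum_j y_j = 1$, yields $-x^T A y + (\beta^T y)\lambda - \pi_2 \le 0$. Adding the two derived inequalities, the $x^T A y$ terms cancel and we obtain exactly $\lambda(\beta^T y) - \pi_1 - \pi_2 \le 0$, which is the claimed inequality. (Note: Lemma \ref{ho_le1} is stated for the rank-$1$ space $\Gamma=\{(A,-A+\alpha\cdot\beta^T)\}$, so the substitution $C=-A$ is available; that substitution is what makes the cross terms cancel.)

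For the equality characterization, I would track when each of the two summed inequalities is an equality. The sum $\lambda(\beta^T y) - \pi_1 - \pi_2 = 0$ holds if and only if both $x^T A y - \pi_1 = 0$ and $-x^T A y + (\beta^T y)\lambda - \pi_2 = 0$ (since each is $\le 0$ individually). Now $x^T A y - \pi_1 = \sum_i x_i(A_i y - \pi_1) = 0$ with each summand $\le 0$ forces $A_i y = \pi_1$ whenever $x_i > 0$; equivalently, every pure strategy $i$ in the support of $x$ is a tight inequality of $P$ at $v$ — i.e., $i \in L(v)$ whenever $i \notin L(w)$ (recall $i \in L(w)$ iff $x_i = 0$). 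So the first equality is equivalent to $S_1 \subseteq L(v) \cup L(w)$. Symmetrically, $-x^T A y + (\beta^T y)\lambda - \pi_2 = \sum_j y_j(-x^T A^j + \beta_j\lambda - \pi_2) = \sum_j y_j(x^T C^j + \beta_j \lambda - \pi_2) = 0$ forces the $j$-th inequality of $Q'$ to be tight whenever $y_j > 0$; and $y_j > 0$ is exactly the condition $j+m \notin L(v)$ (the label for $y_j \ge 0$ is not tight). Hence the second equality is equivalent to $\{m+1,\dots,m+n\} \subseteq L(v) \cup L(w)$. Together, the two equalities hold simultaneously iff $L(v) \cup L(w) = \{1,\dots,m+n\}$, i.e. iff $(v,w) \in \mathcal N$.

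I do not anticipate a genuine obstacle here — the argument is the standard "complementarity" manipulation. The one point that needs care is bookkeeping the correspondence between labels and the sign conditions: label $i \le m$ being tight at $v$ means $A_i y = \pi_1$, label $i \le m$ being tight at $w$ means $x_i = 0$, label $m+j$ tight at $v$ means $y_j = 0$, and label $m+j$ tight at $w$ means $x^T C^j + \beta_j \lambda = \pi_2$. Writing these out explicitly and matching them to the two summand-by-summand equality conditions is the only step requiring attention, and it is routine. I would present the forward inequality in one or two lines, then the equality direction by the "sum of nonpositive terms is zero iff each is zero" observation applied twice, and finally read off the union-of-labels condition.
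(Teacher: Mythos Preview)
Your proposal is correct and essentially identical to the paper's proof: both multiply the $P$-inequalities by $x_i$, the $Q'$-inequalities (with $C=-A$) by $y_j$, sum, and then use the ``sum of nonpositive terms vanishes iff each vanishes'' observation to obtain the complementarity/full-labeling characterization. The only cosmetic difference is that the paper phrases the reverse direction contrapositively (a missing label forces a strictly negative summand), whereas you argue the biconditional directly via the label bookkeeping you outline.
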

\begin{proof} Recall that $C=-A$, hence from (\ref{eq2}) and (\ref{eq3}), we get $x^T\cdot (A\cdot y- \pi_1)\leq 0$ and
$(x^T\cdot(-A)+\beta^T\lambda-\pi_2)\cdot y\leq 0$. By summing up these two inequalities, we get $\lambda(\beta^T\cdot
y)-\pi_1-\pi_2\leq 0$.  
If $(v,w)\in \mathcal N$, then $\forall i\leq m,\ x_i>0 \Rightarrow A_i\cdot y-\pi_1=0$ and $\forall j\leq n,\ y_j>0
\Rightarrow x^T(-A^j)+\beta_j\lambda-\pi_2=0$, hence $\lambda(\beta^T\cdot y)-\pi_1-\pi_2=0$. 

If $(v,w)\notin \mathcal N$, then at least one label $1\leq r\leq m+n$ is missing from $L(v)\cup L(w)$. Let $r\leq m$ (wlog),
then $x_r>0$ and $A_r\cdot y-\pi_1<0$, which imply that $x^T\cdot (A \cdot y-\pi_1)<0$. Therefore, $\lambda(\beta^T\cdot
y)-\pi_1-\pi_2<0$. \qed
\end{proof}

Motivated by the above lemma, we define the following parametrized linear program $LP(\delta)$.
\begin{eqnarray}\label{eq_lp}
\begin{array}{ll}
\vspace{2pt}
LP(\delta):\hspace{.1in}  & max\ \ \delta(\beta^T\cdot y)-\pi_1-\pi_2 \\
& \hspace{.4in}(y,\pi_1)\in P \\
& \hspace{.4in}(x,\lambda,\pi_2) \in Q' \\
& \hspace{.4in}\lambda=\delta
\end{array}
\end{eqnarray}

Note that the above linear program may be broken into a parametrized primal linear program and it's dual, with
$\delta$ being the parameter. The primal may be defined on polytope $P$ with the cost function {\em maximize:
}$\delta(\beta^T\cdot y)-\pi_1$ and it's dual is on polytope $Q'$ with additional constraint $\lambda=\delta$ and the cost
function {\em minimize: }$\pi_2$. 
\begin{remark}
$LP(\delta)$ may look similar to the parametrized linear program, say $TLP(\xi)$, by Theobald \cite{the}. However the
key difference is that $TLP(\xi)$ is defined on the best response polytopes of a given game ({\em i.e.}, $P(\alpha)\times
Q(\alpha)$ for the game $G(\alpha)$), while $LP(\delta)$ is defined on a bigger polytope ($P\times Q'$) encompassing best
response polytopes of all the games in $\Gamma$. A detailed comparison is given in Section \ref{enum}.
\end{remark}

Let $OPT(\delta)$ be the set of optimal points of $LP(\delta)$. In the next lemma, we show that $\forall \delta\in\mathbb R$,
$OPT(\delta)$ is exactly the set of points in $\mathcal N$, where $\lambda=\delta$.

\begin{lemma}\label{ho_le2}
$\forall a \in \mathbb R$, $OPT(a)=\{((y,\pi_1),(x,\lambda,\pi_2))\in \mathcal N\ |\ \lambda=a\}$ and $OPT(a)\neq\emptyset$.
\end{lemma}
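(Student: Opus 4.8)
\textbf{Proof proposal for Lemma \ref{ho_le2}.}

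The plan is to establish the two asserted facts in the following order: (i) $LP(a)$ is feasible and bounded, so $OPT(a) \neq \emptyset$; (ii) every point of $\mathcal N$ with $\lambda = a$ is optimal for $LP(a)$; (iii) conversely, every optimal point of $LP(a)$ lies in $\mathcal N$. The central tool is Lemma \ref{ho_le1}, which tells us that on the whole feasible region of $LP(a)$ (namely $\{((y,\pi_1),(x,\lambda,\pi_2)) \in P \times Q' : \lambda = a\}$) the objective $a(\beta^T y) - \pi_1 - \pi_2 = \lambda(\beta^T y) - \pi_1 - \pi_2$ is $\le 0$, with equality \emph{exactly} on $\mathcal N$. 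So the optimal value of $LP(a)$ is $0$ if and only if the feasible region meets $\mathcal N$, and in that case $OPT(a)$ is precisely the set of feasible points that also lie in $\mathcal N$ — which is exactly the claimed description. This reduces everything to two things: showing $LP(a)$ is feasible and bounded, and showing its optimal value really is $0$ (equivalently, that some feasible point lies in $\mathcal N$).

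Feasibility is easy: $P$ is nonempty (it contains, e.g., the lifts of vertices of $\Delta_2$), and the slice $Q' \cap \{\lambda = a\}$ is nonempty because for any fixed $x \in \Delta_1$ and $\lambda = a$ we can choose $\pi_2$ large enough to satisfy all the inequalities $x^T C^j + \beta_j \lambda - \pi_2 \le 0$; picking any $(y,\pi_1) \in P$ independently gives a feasible point. Boundedness of the objective from above is immediate from Lemma \ref{ho_le1}: the objective never exceeds $0$ on the feasible set. Hence $OPT(a) \neq \emptyset$ provided the supremum $0$ is attained, which in turn follows once we know the feasible region actually intersects $\mathcal N$.

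For the intersection with $\mathcal N$: by Lemma \ref{le_lambda}, the path $\mathcal P \subseteq \mathcal N$ contains a point with $\lambda = a$ for every $a \in \mathbb R$; that point is feasible for $LP(a)$ (it lies in $P \times Q'$ and satisfies $\lambda = a$) and lies in $\mathcal N$, so by Lemma \ref{ho_le1} its objective value is $0$. Therefore the optimal value of $LP(a)$ is $0$, it is attained, $OPT(a) \neq \emptyset$, and $OPT(a)$ equals the set of feasible points with objective value $0$, which by the equality clause of Lemma \ref{ho_le1} is exactly $\{((y,\pi_1),(x,\lambda,\pi_2)) \in \mathcal N : \lambda = a\}$.

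I expect no serious obstacle here — the lemma is essentially a corollary of the sandwich inequality in Lemma \ref{ho_le1} together with the already-established surjectivity of $\lambda$ along $\mathcal P$ (Lemma \ref{le_lambda}). The only point requiring a moment of care is confirming that a point of the \emph{path} $\mathcal P$ (as opposed to the cycles, which do not exist in the rank-$1$ case anyway) is genuinely a feasible point of the LP, i.e. that membership in $\mathcal N \subseteq P \times Q'$ plus $\lambda = a$ is exactly the LP's feasibility condition — which it is by construction of $LP(\delta)$ in (\ref{eq_lp}).
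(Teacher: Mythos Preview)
Your proposal is correct and follows essentially the same approach as the paper: both use Lemma~\ref{ho_le1} to cap the objective of $LP(a)$ at $0$ with equality exactly on $\mathcal N$, and Lemma~\ref{le_lambda} to guarantee that the slice $\{\lambda=a\}\cap\mathcal N$ is nonempty, whence $OPT(a)$ equals that slice. Your write-up is slightly more explicit about feasibility and boundedness of $LP(a)$ than the paper's, but the argument is the same; one minor note: your parenthetical that cycles ``do not exist in the rank-$1$ case anyway'' appeals to Proposition~\ref{ho_pr1}, which is proved \emph{using} this lemma, so you should not invoke it here --- fortunately your proof does not actually rely on that remark.
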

\begin{proof}Clearly the feasible set of $LP(a)$ consists of all the points of $P\times Q'$, where $\lambda=a$. Therefore
the set $\{((y,\pi_1),(x,\lambda,\pi_2))\in \mathcal N\ |\ \lambda=a\}$ is a subset of the feasible set of $LP(a)$. The set
$\{((y,\pi_1),(x,\lambda,\pi_2))\in \mathcal N\ |\ \lambda=a\}$ is non-empty (Lemma \ref{le_lambda}).
From Lemma \ref{ho_le1}, it is clear that the maximum possible value, the cost function of $LP(a)$ may achieve is $0$,
and it is achieved only at the points of $\mathcal N$. Therefore,
$OPT(a)=\{((y,\pi_1),(x,\lambda,\pi_2))\in \mathcal N\ |\ \lambda=a\}$ and $OPT(a)\neq\emptyset$.  \qed
\end{proof}

Lemma \ref{ho_le2} implies that for any $a \in \mathbb R$, the set $OPT(a)$ is contained in $\mathcal N$. 
Using this, next we show that $\mathcal N$ in fact consists of only one component. 

\begin{proposition}\label{ho_pr1}
$\mathcal N$ does not contain cycles.
\end{proposition}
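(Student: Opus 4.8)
The plan is to argue by contradiction, exploiting the linear-programming characterization of $\mathcal{N}$ from Lemma \ref{ho_le2} together with the parametric structure of $LP(\delta)$. Suppose $\mathcal{N}$ contains a cycle $\mathcal{C}$. By Proposition \ref{pr1}, $\mathcal{C}$ is a cycle on the $1$-skeleton of $P\times Q'$, and on it the coordinate $\lambda$ varies continuously and is bounded (a cycle is compact). Let $a^{\min}$ and $a^{\max}$ be the minimum and maximum values of $\lambda$ attained on $\mathcal{C}$; since $\mathcal{C}$ is a nontrivial cycle and $Q'$ is non-degenerate, we will have $a^{\min}<a^{\max}$ (if $\lambda$ were constant on $\mathcal{C}$, every point of $\mathcal{C}$ would be an optimal point of the \emph{same} $LP(a)$, and I would argue this forces $\mathcal{C}$ to lie in a single bounded face, contradicting that it is a cycle made of genuine edges — alternatively one can note $\lambda=\delta$ is the only free direction and a cycle cannot collapse to a point).

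The key step is to look at a value of $\lambda$ strictly between $a^{\min}$ and $a^{\max}$, say $a$, at which $\mathcal{C}$ crosses the hyperplane $\{\lambda=a\}$. Because $\mathcal{C}$ is a cycle, it must cross $\{\lambda = a\}$ in at least \emph{two} distinct points (it goes up past $a$ and comes back down). By Lemma \ref{ho_le2}, each such crossing point lies in $OPT(a)$. Now I invoke the structure of $\mathcal{N}$ established in Proposition \ref{pr1} and its proof: $OPT(a) = \{(v,w)\in\mathcal{N}\mid \lambda = a\}$ is the intersection of the $1$-complex $\mathcal{N}$ with a hyperplane transverse to its edges, so it is a finite set, and in particular $OPT(a)$ contains at least two points from the cycle $\mathcal{C}$. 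The point is that the full-dimensional analogue — the whole path $\mathcal{P}$ — also meets $\{\lambda=a\}$ by Lemma \ref{le_lambda}, so $OPT(a)$ additionally contains a point of $\mathcal{P}$, disjoint from $\mathcal{C}$.

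The contradiction I aim to extract is that $OPT(a)$, being the optimal \emph{face} of a linear program, must be connected (it is a face of a polyhedron, hence convex), whereas it now contains points from two different connected components of $\mathcal{N}$ ($\mathcal{C}$ and $\mathcal{P}$, or two separated arcs of $\mathcal{C}$), which are at positive distance from each other. More carefully: the feasible region of $LP(a)$ is the polytope $(P\times Q')\cap\{\lambda=a\}$, and $OPT(a)$ is a face of it, hence convex and connected; but a convex set lying inside the $1$-dimensional complex $\mathcal{N}$ must be a single point or a single sub-segment of one edge, so it cannot meet two disjoint components. This is the heart of the argument, and the main obstacle is handling the degenerate possibility that $OPT(a)$ is not just a vertex of the LP but a higher-dimensional face — here I would use the non-degeneracy of $P$ and $Q'$ (assumed throughout this section) to conclude that the optimal face, when intersected with $\mathcal{N}$'s edge structure, is still forced to be a single connected piece, so it cannot straddle $\mathcal{C}$ and $\mathcal{P}$.

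Finally, I would tie this back: since $\mathcal{N}$ always contains the path $\mathcal{P}$ (Proposition \ref{pr1}) and, by Lemma \ref{le_lambda}, $\mathcal{P}$ already supplies a point with $\lambda=a$ for every $a\in\mathbb{R}$, any hypothetical cycle $\mathcal{C}$ would produce — via its two $\{\lambda=a\}$-crossings at an interior value $a$ — an extra point of $OPT(a)$ disconnected from $\mathcal{P}$'s point, contradicting convexity of the optimal face. Hence no cycle exists and $\mathcal{N}=\mathcal{P}$ is a single path, which (by Lemma \ref{le_con}) is exactly what is needed to push toward the homeomorphism $\Gamma\cong E_\Gamma$. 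The only routine check deferred is that $\lambda$ genuinely varies on a cycle, which I expect to follow from non-degeneracy of $Q'$ plus the edge-alternation structure described after Proposition \ref{pr1}.
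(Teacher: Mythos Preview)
Your argument is correct and follows essentially the same route as the paper: the key observation is that $OPT(a)$ is convex (as the optimal face of a linear program) and, since $OPT(a)\subset\mathcal N$ by Lemma~\ref{ho_le2} and $\mathcal N$ is a $1$-complex, it must lie in a single edge; combined with Lemma~\ref{le_lambda} this forces $OPT(a)\subset\mathcal P$ for every $a$, leaving no room for a cycle. The paper's version is slightly more direct --- it does not bother with the ``two crossings'' of the cycle or the case analysis on whether $\lambda$ varies, since \emph{any} point of a hypothetical cycle already lies in some $OPT(a)\subset\mathcal P$, which is an immediate contradiction.
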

\begin{proof}
Since $\mathcal N$ consists of a set of edges and vertices and $OPT(a)$ is a convex set, therefore $OPT(a)$ is contained in a
single edge of $\mathcal N$ (Lemma \ref{ho_le2}). 
From Proposition \ref{pr1}, it is clear that there is a path $\mathcal P$ in the set $\mathcal N$. Further, Lemma
\ref{le_lambda} shows that for every $a \in \mathbb R$, there exists a point $((y,\pi_1),(x,\lambda,\pi_2))\in\mathcal P$,
where $\lambda=a$. It implies that $OPT(a),\forall a\in\mathbb R$ is contained in the path $\mathcal P$.
Therefore there is no other component in $\mathcal N$.  \qed
\end{proof}

From Proposition \ref{ho_pr1}, it is clear that $\mathcal N$ consists of only the path $\mathcal P$, henceforth we refer to
$\mathcal N$ as a path. To construct homeomorphism maps between $E_\Gamma$ and $\Gamma$, we need to encode a point
$(\alpha,x,y) \in E_\Gamma$ (of size $2m+n$) into a vector $\alpha'\in \Gamma$ (of size $m$), such that $\alpha'$ uniquely
identifies the point $(\alpha,x,y)$ ({\it i.e.,} a bijection). Recall that for every point in $E_\Gamma$, there is a unique
point on the path $\mathcal N$ (Lemma \ref{le1}). Therefore, first we show that there is a bijection between $\mathcal N$
and $\mathbb R$ and using this, we derive a bijection between $\Gamma$ and $E_{\Gamma}$. Consider the function $g:\mathcal
N\rightarrow \mathbb R$ such that 
\begin{eqnarray}
g((y,\pi_1),(x,\lambda,\pi_2))=\beta^T\cdot y+\lambda 
\end{eqnarray}

\begin{lemma}\label{ho_le3}
Each term of $g$, namely $\beta^T\cdot y$ and $\lambda$, monotonically increases on the directed path $\mathcal N$, and
the function $g$ strictly increases on it.
\end{lemma}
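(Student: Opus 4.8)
The plan is to exploit the direction on the path $\mathcal N$ established in Lemma~\ref{le_dir} (recall $\mathcal N$ is a single path $\mathcal P$ by Proposition~\ref{ho_pr1}), oriented from $(v_s,\mathcal E_{v_s})$ to $(v_e,\mathcal E_{v_e})$. Along this path there are two types of edges, $(v,\mathcal E_v)$ and $(\mathcal E_w,w)$. On an edge of type $(v,\mathcal E_v)$, the row-player vertex $v=(y,\pi_1)$ is fixed, so $\beta^T\cdot y$ is constant while $\lambda$ varies; on an edge of type $(\mathcal E_w,w)$, the column-player vertex $w=(x,\lambda,\pi_2)$ is fixed, so $\lambda$ is constant while $\beta^T\cdot y$ varies. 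So it suffices to show that (i) on each $(\mathcal E_w,w)$-edge, traversed in the path direction, $\beta^T\cdot y$ strictly increases, and (ii) on each $(v,\mathcal E_v)$-edge, traversed in the path direction, $\lambda$ strictly increases. Since along any edge one of the two terms of $g$ is constant and we will show the other strictly increases, $g$ itself strictly increases on each edge, hence on all of $\mathcal N$.

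For the $\lambda$-claim (ii), the key tool is Lemma~\ref{ho_le2}: $OPT(a)=\{\text{points of }\mathcal N \text{ with }\lambda=a\}$, and this is a convex subset of $\mathcal N$, hence contained in a single closed edge. The endpoints of the two unbounded edges already show $\lambda$ ranges over $(-\infty,\lambda_s]$ at the $v_s$-end and $[\lambda_e,\infty)$ at the $v_e$-end (from the proof of Lemma~\ref{sle4}), and the direction there is from $v_s$ toward $v_e$. The point is that $\lambda$, being continuous on the path and having each level set $\{\lambda=a\}$ connected (a single sub-edge, never revisited), must be a monotone function of the path parameter. First I would make the direction statement precise: orient the path so that $\lambda\to-\infty$ at one end. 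Then for any $a$, $\{\lambda\le a\}\cap\mathcal N$ is connected (it is the union over $a'\le a$ of the connected sets $OPT(a')$, and these are nested sub-intervals sharing endpoints as $a'$ increases), so $\lambda$ restricted to $\mathcal N$ is monotone; since $\lambda$ is non-constant on each $(v,\mathcal E_v)$-edge (as $v$ determines a single tight row-equation, $\mathcal E_v$ is $1$-dimensional and $\lambda$ is an affine coordinate on it that actually varies — otherwise $\mathcal E_v$ would be $0$-dimensional), the monotonicity is strict there.

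For the $\beta^T\cdot y$-claim (i), I would use the orientation function from Lemma~\ref{le_dir} together with Lemma~\ref{ho_le1}. Fix a $(\mathcal E_w,w)$-edge with $w=(x,\lambda,\pi_2)$ fixed; on it $\beta^T\cdot y$ is an affine, non-constant coordinate (again because $\mathcal E_w$ is $1$-dimensional by $O_4$ and $\beta\ne 0$). By Lemma~\ref{ho_le1} the equality $\lambda(\beta^T\cdot y)=\pi_1+\pi_2$ holds on all of $\mathcal N$; differentiating along the edge, since $\lambda$ is constant, the sign of the change in $\beta^T\cdot y$ equals that of the change in $\pi_1+\pi_2=\pi_1+\text{const}$, i.e.\ the sign of the change in $\pi_1$. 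So it is enough to show $\pi_1$ strictly increases in the path direction on every $(\mathcal E_w,w)$-edge. Here I would invoke the sign/determinant machinery: the argument giving the orientation in Lemma~\ref{le_dir} shows precisely how the duplicate-label relaxation moves us; at a vertex $u=(v,w)$ with $s(u)=+1$ the edge $(\mathcal E_w,w)$ is directed \emph{away} from $u$, and the direction vector obtained by relaxing the duplicate inequality in $P$ is the one along which $\pi_1$ (equivalently $\beta^T\cdot y$, via the Lemma~\ref{ho_le1} identity) increases — this is a sign computation on $E(v)$ essentially identical to the dot-product-with-normal computation in the proof of Proposition~\ref{pr2}, but with the cost-row of $LP(\delta)$ in place of $H$'s normal. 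Chaining consistency across vertices (adjacent vertices have opposite sign, Claim in Lemma~\ref{le_dir}), $\beta^T\cdot y$ increases on each such edge and is constant on the intervening $(v,\mathcal E_v)$-edges, so it is non-decreasing overall and strictly increasing exactly on the $(\mathcal E_w,w)$-edges.

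The main obstacle is the second claim: pinning down the \emph{sign} relating the path orientation of Lemma~\ref{le_dir} to the increase of $\pi_1$ (equivalently $\beta^T\cdot y$). One must check that the determinant sign $s(u)$ that controls which incident edge is outgoing is the \emph{same} sign that controls whether relaxing the duplicate label in $P$ increases or decreases the linear functional $\delta(\beta^T\cdot y)-\pi_1$. The clean way is to note that this functional is exactly the $P$-part of the objective of $LP(\delta)$, and the orientation of $\mathcal N$ induced by $LP$-optimality (moving in the direction of non-decreasing objective along the parametrized optimum, as $\delta$ increases) must coincide with the combinatorial orientation of Lemma~\ref{le_dir}; both give a path from $(v_s,\mathcal E_{v_s})$ to $(v_e,\mathcal E_{v_e})$, and since $\mathcal N$ is a single path this pins down the agreement without a fresh global determinant calculation. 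Everything else — continuity of $g$, affineness of each coordinate on each edge, non-constancy — is routine from the non-degeneracy of $P$ and $Q'$ and $\beta\ne 0$.
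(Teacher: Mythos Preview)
Your overall decomposition matches the paper's: on $(v,\mathcal E_v)$-edges $\beta^T y$ is constant and $\lambda$ moves, on $(\mathcal E_w,w)$-edges $\lambda$ is constant and $\beta^T y$ moves, so it suffices to show each moves in the forward direction on its type of edge. Your argument for $\lambda$ via the connectedness of the level sets $OPT(a)$ is fine.

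The gap is in your treatment of $\beta^T y$. The step ``differentiating $\lambda(\beta^T y)=\pi_1+\pi_2$ along an $(\mathcal E_w,w)$-edge gives that the sign of the change in $\beta^T y$ equals the sign of the change in $\pi_1$'' is only valid when the fixed value of $\lambda$ on that edge is \emph{positive}; for $\lambda<0$ the signs are opposite, and for $\lambda=0$ the identity says nothing about $\beta^T y$ at all. So the reduction to ``$\pi_1$ increases'' fails on every edge with $\lambda\le 0$, and there certainly are such edges (on $(v_s,\mathcal E_{v_s})$ we have $\lambda\to-\infty$). A second, smaller gap: you assert $\beta^T y$ is non-constant on each $(\mathcal E_w,w)$-edge because $\beta\ne 0$, but that does not suffice --- $y$ could vary along $\mathcal E_w$ in a direction orthogonal to $\beta$. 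This non-constancy actually needs an argument.

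The paper avoids the determinant machinery entirely and does something much shorter: take two consecutive edges $(\mathcal E_w,w)=(\overline{v',v},w)$ and $(v,\mathcal E_v)=(v,\overline{w,w'})$ with $w=(x,\lambda,\pi_2)$, $w'=(x',\lambda',\pi_2')$. Since $(v,w')\in OPT(\lambda')$ while $(v',w')$ is feasible but strictly suboptimal for $LP(\lambda')$ (otherwise $\mathcal N$ would contain a cycle), and $(v,w),(v',w)\in OPT(\lambda)$, one line of arithmetic gives
\[
(\beta^T y-\beta^T y')(\lambda'-\lambda)>0.
\]
This single strict inequality simultaneously proves that $\beta^T y$ is non-constant on $(\mathcal E_w,w)$ and that it moves in the same direction as $\lambda$ does on the adjacent $(v,\mathcal E_v)$-edge. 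Chaining across the path, together with the known behaviour of $\lambda$ on the two unbounded edges, finishes the lemma. Your final ``clean way'' paragraph gestures at parametric LP optimality but never extracts this inequality; once you write it down, the orientation from Lemma~\ref{le_dir} and the determinant signs are not needed at all.
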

\begin{proof}
From the proof of Proposition
\ref{pr1}, we know that the edges of type $(v,\mathcal E_v)$ (where $v\in \mathcal N^P$ is a vertex) and of type $(\mathcal
E_w,w)$ (where $w \in \mathcal N^{Q'}$ is a vertex) alternate in $\mathcal N$. Clearly $\beta^T\cdot y$ is a constant on an
edge of type $(v,\mathcal E_v)$ and $\lambda$ is a constant on an edge of type $(\mathcal E_w,w)$. Now, consider the two
consecutive edges $(\mathcal E_{w},w)$ and $(v,\mathcal E_{v})$, where $\mathcal E_{w}=\overline{v',v}$ and $\mathcal
E_{v}=\overline{w,w'}$. It is enough to show that $\lambda$ and $\beta^T\cdot y$ are not constants on $(v,\mathcal
E_v)$ and $(\mathcal E_w,w)$ respectively, and $\beta^T\cdot y$ increases from $(v',w)$ to $(v,w)$ ({\it i.e.,} on
$(v,\mathcal E_v)$) iff $\lambda$ also increases from $(v,w)$ to $(v,w')$ ({\it i.e.,} on $(\mathcal E_w,w)$). 

Let $w=(x,\lambda,\pi_2),\ w'=(x',\lambda',\pi'_2),\ v=(y,\pi_1)$ and $v'=(y',\pi'_1)$.  Clearly, $OPT(\lambda)=(\mathcal
E_{w},w)$ and $(v,w')\in OPT(\lambda')$ (Lemma \ref{ho_le2}). Further $\lambda\neq \lambda'$, since $OPT(\lambda)$ contains
only one edge.

\begin{claim}
$\beta^T\cdot y'\neq \beta^T\cdot y$, and $\beta^T\cdot y'< \beta^T\cdot y\Leftrightarrow \lambda<\lambda'$.
\end{claim}

\begin{proof}
Since the feasible set of $LP(\lambda')$ contains all the points of $P\times Q'$ with $\lambda=\lambda'$, the point $(v',w')$
is a feasible point of $LP(\lambda')$. Note that $(v',w')$ is a suboptimal point of $LP(\lambda')$ otherwise $\mathcal
E_{w'}=\overline{v,v'}$ and $\mathcal E_{v'}=\overline{w',w}$, which creates a cycle in $\mathcal N$. Further, $(v,w') \in
OPT(\lambda')$, hence $\lambda'(\beta^T\cdot y)-\pi_1-\pi'_2 > \lambda'(\beta^T\cdot y')-\pi_1'-\pi'_2$. Since both $(v',w)$
and $(v,w)$ are in $OPT(\lambda)$, we get $\lambda(\beta^T\cdot y')-\pi_1'-\pi_2 = \lambda(\beta^T\cdot y)-\pi_1-\pi_2$.  
Summing up these two, we get $\lambda(\beta^T\cdot y')+\lambda'(\beta^T\cdot y)>\lambda(\beta^T\cdot y)+\lambda'(\beta^T\cdot
y') \Rightarrow (\beta^T\cdot y-\beta^T\cdot y')(\lambda'-\lambda)>0$. \qed
\end{proof}

From the above claim, it is clear that $\beta^T\cdot y$ is strictly monotonic on $(\mathcal E_{w},w)$ and $\lambda$ is
strictly monotonic on $(v,\mathcal E_{v})$. Further, if $\beta^T\cdot y$ increases on $(\mathcal E_{w},w)$ from $(v',w)$
to $(v,w)$ then $\lambda$ increases on $(v,\mathcal E_{v})$ from $(v,w)$ to $(v,w')$ and vice-versa.

Recall that on the directed path $\mathcal N$, $(v_s,\mathcal E_{v_s})$ is the first edge and $(v_e,\mathcal E_{v_e})$ is the
last edge (Lemma \ref{le_dir}).
Further, $\lambda$ varies from $-\infty$ to $\lambda_s$ on the first edge $(v_s,\mathcal E_{v_s})$, and it varies from
$\lambda_e$ to $\infty$ on the last edge $(v_e,\mathcal E_{v_e})$ (proof of Lemma \ref{sle4}). Therefore, $\lambda$ and
$\beta^T\cdot y$ increase monotonically on the directed path $\mathcal N$, and in turn $g$ strictly increases from $-\infty$
to $\infty$ on the path. 
\qed 
\end{proof}

Lemma \ref{ho_le3} implies that $g$ is a continuous, bijective function with a continuous inverse $g^{-1}:\mathbb
R\rightarrow \mathcal N$. Now consider the following candidate function $f:E_\Gamma \rightarrow \Gamma$ for the homeomorphism
map.
\begin{eqnarray}\label{eq4}
f(\alpha,x,y)=(\beta^T\cdot
y+\alpha^T\cdot x,\ \alpha_2-\alpha_1,\dots,\alpha_m-\alpha_1)^T 
\end{eqnarray}

Using the properties of $g$, next we show that $f$ indeed establishes a homeomorphism between $\Gamma$ and $E_\Gamma$.

\begin{theorem}
$E_\Gamma$ is homeomorphic to $\Gamma$.
\end{theorem}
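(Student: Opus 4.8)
The plan is to show that the map $f$ defined in~(\ref{eq4}) is a homeomorphism from $E_\Gamma$ onto $\Gamma\equiv\mathbb R^m$ by exhibiting a continuous inverse. The natural strategy is to factor $f$ through the path $\mathcal N$: by Lemma~\ref{le1}, every $(\alpha,x,y)\in E_\Gamma$ corresponds to a unique point $((y,\pi_1),(x,\lambda,\pi_2))\in\mathcal N$, and by Lemma~\ref{ho_le3} the map $g$ is a homeomorphism from $\mathcal N$ onto $\mathbb R$. So I would first decompose $f$ as the composition of the lift $(\alpha,x,y)\mapsto((y,\pi_1),(x,\lambda,\pi_2))$ into $\mathcal N$, followed by $g$ (which recovers the first coordinate $\beta^T\!\cdot y+\lambda = \beta^T\!\cdot y + \alpha^T\!\cdot x$), recorded alongside the ``difference vector'' $(\alpha_2-\alpha_1,\dots,\alpha_m-\alpha_1)$ that remembers $\alpha$ modulo the all-ones direction.

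The key steps, in order: (1) \emph{$f$ is well-defined and continuous.} The lift into $\mathcal N$ is continuous because $\pi_1=x^TAy$, $\pi_2=x^T(-A+\alpha\cdot\beta^T)y$ and $\lambda=\alpha^T\!\cdot x$ are polynomial in the coordinates; each output coordinate of $f$ is then manifestly continuous. (2) \emph{$f$ is injective.} Suppose $f(\alpha,x,y)=f(\alpha',x',y')$. The last $m-1$ coordinates give $\alpha_i-\alpha_1=\alpha'_i-\alpha'_1$ for all $i$, so $\alpha'=\alpha+t\mathbf 1$ for some scalar $t$. Since $\sum_i x_i=1$ we have $\alpha'^T\!\cdot x = \alpha^T\!\cdot x + t$ on the relevant simplex; matching the first coordinate $\beta^T\!\cdot y+\alpha^T\!\cdot x=\beta^T\!\cdot y'+\alpha'^T\!\cdot x'$ together with the fact that both $(\alpha,x,y)$ and $(\alpha',x',y')$ lie in $E_\Gamma$ forces (via the bijectivity of $g$ on $\mathcal N$) that the two triples project to the \emph{same} point of $\mathcal N$, hence $x=x'$, $y=y'$ and $\lambda=\lambda'$; then $\alpha^T\!\cdot x=\alpha'^T\!\cdot x=(\alpha+t\mathbf1)^T\!\cdot x = \alpha^T\!\cdot x + t$ gives $t=0$, so $\alpha=\alpha'$. (3) \emph{$f$ is surjective with continuous inverse.} Given a target $(a,\,b_2,\dots,b_m)\in\mathbb R^m$, the point $g^{-1}(a)=((y,\pi_1),(x,\lambda,\pi_2))\in\mathcal N$ supplies $x$, $y$, and $\lambda$ (note $g$ of this point is $\beta^T\!\cdot y+\lambda=a$). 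I then need $\alpha$ with $\alpha^T\!\cdot x=\lambda$, $\alpha_i-\alpha_1=b_i$, and $(\alpha,x,y)\in E_\Gamma$; writing $\alpha = \alpha_1\mathbf1 + (0,b_2,\dots,b_m)^T$, the constraint $\alpha^T\!\cdot x=\lambda$ becomes a single linear equation in $\alpha_1$ with coefficient $\sum_i x_i=1\neq0$, so $\alpha_1$ — and hence $\alpha$ — is uniquely and continuously determined; finally the first part of Lemma~\ref{le1} guarantees this $\alpha$ makes $(\alpha,x,y)$ a NESP of $G(\alpha)$, i.e. $(\alpha,x,y)\in E_\Gamma$. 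Composing $g^{-1}$ (continuous by Lemma~\ref{ho_le3}) with these algebraic operations gives an explicit continuous $f^{-1}$.

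The main obstacle is verifying that the lift of a point of $E_\Gamma$ into $\mathcal N$ is genuinely the \emph{same} point one gets back by applying $g^{-1}$ to the first coordinate of $f$ — i.e. that the first coordinate $\beta^T\!\cdot y+\alpha^T\!\cdot x$ really does equal $g$ evaluated at the lifted point, which hinges on $\lambda=\alpha^T\!\cdot x$ being the \emph{unique} $\lambda$ from Lemma~\ref{le1}(2) and on $\mathcal N$ being a single path (Proposition~\ref{ho_pr1}) so that $g$ is injective. Once the bookkeeping identifying $E_\Gamma$ with $\mathcal N\times\mathbb R^{m-1}$ (fibers $f^{-1}$ over $\mathcal N$ being the affine lines $\{\alpha:\alpha^T\!\cdot x=\lambda\}$, cf. Lemma~\ref{le_con}) is made precise, continuity in both directions is routine, since every map in sight is either polynomial or the homeomorphism $g$. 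I would close by remarking that $f$ and $f^{-1}$ are piece-wise linear because $g^{-1}$ is, the path $\mathcal N$ being a polyhedral $1$-complex.
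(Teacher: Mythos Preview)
Your proposal is correct and follows essentially the same approach as the paper: both use the homeomorphism $g:\mathcal N\to\mathbb R$ from Lemma~\ref{ho_le3} to build an explicit inverse $f^{-1}$ by applying $g^{-1}$ to the first coordinate to recover $(x,y,\lambda)$, and then solving the linear system $\alpha^T x=\lambda$, $\alpha_i-\alpha_1=b_i$ (your equations are exactly the paper's (\ref{eq5})--(\ref{eq6}), noting $\alpha'_1-\beta^T\!\cdot y=\lambda$). Your separate injectivity argument in step~(2) is valid but redundant once you have the explicit two-sided inverse in step~(3); the paper omits it for that reason.
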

\begin{proof}
The function $f$ of (\ref{eq4}) is continuous because it is a quadratic function. 
Further, we show that it is bijective. 
\begin{claim}
$f$ is a bijective function.
\end{claim}
\begin{proof}
We prove this by illustrating an inverse function $f^{-1}:\Gamma\rightarrow E_\Gamma$. Given $\alpha' \in \Gamma$, let
$(v,w)=((y,\pi_1),(x,\lambda,\pi_2))=g^{-1}(\alpha'_1)$ be the corresponding point in $\mathcal N$. This gives the values of
$x$, $y$ and $\lambda$. Using these values, we solve the following equalities with the variable vector ${\bs
a}=(a_1,\dots,a_m)$.  
\begin{eqnarray}
\forall i>1,\ a_i=\alpha'_i+a_1 \label{eq5} \\
\sum_{i=1}^m x_i a_i=\alpha'_1-\beta^T\cdot y \label{eq6}
\end{eqnarray}
It is easy to see that the above equations have a unique solution, which gives a unique value for the vector ${\bs a}$ 
and a unique point $({\bs a},x,y)\in E_\Gamma$. Clearly, $f({\bs a},x,y)=\alpha'$. 
\qed
\end{proof}
The inverse map $f^{-1}$ illustrated in the proof of the above claim is also continuous. The continuous
maps $f$ and $f^{-1}$ establish the homeomorphism between $E_\Gamma$ and $\Gamma$.\qed
\end{proof}

\section{Algorithms}\label{algo}
In this section, we present two algorithms to find Nash equilibria of a rank-1 game using the structure and monotonicity of
$\mathcal N$. First we discuss a polynomial time algorithm to find a Nash equilibrium of a non-degenerate rank-1 game. It
does a binary search on $\mathcal N$ using the monotonicity of $\lambda$. Later we give a path-following
algorithm which enumerates all Nash equilibria of a rank-1 game, and finds at least one for any bimatrix game (Lemma
\ref{le_cover}).

Recall that the best response polytopes $P$ and $Q$ (of (\ref{eq2})) of a non-degenerate game are non-degenerate,
and hence it's Nash equilibrium set is finite. Consider a non-degenerate rank-1 bimatrix game $(A,B)\in \mathbb R^{2mn}$
such that $A+B=\gamma\cdot\beta^T$, where $\gamma \in \mathbb R^m$ and $\beta \in \mathbb R^n$. We assume that $\beta$ is a
non-zero and non-constant\footnote{If $\beta$ is a constant vector, then the game $(A,B)$ may be converted into a zero-sum
game without changing it's Nash equilibrium set, by adding constants in the columns and rows of $A$ and $B$ respectively.}
vector, and both $A$ and $B$ are rational matrices. Let $c$ be the LCM of the denominators of the $a_{ij}$s, $\beta_i$s and
$\gamma_i$s.  Note that multiplying both $A$ and $B$ by $c^2$ makes $A$, $\gamma$ and $\beta$ integers, and the total bit
length of the input gets multiplied by at most $O(m^2n^2)$, which is a polynomial increase. Since scaling both the matrices
of a bimatrix game by a positive integer does not change the set of Nash equilibria, we assume that entries of $A$, $\gamma$
and $\beta$ are integers.

Now consider the game space $\Gamma=\{(A,-A+\alpha\cdot \beta)\ |\ \alpha \in \mathbb R^m\}$. Clearly, $G(\gamma)=(A,B) \in
\Gamma$ and the corresponding polytopes $P$ and $Q'$ of (\ref{eq3}) are non-degenerate. 
Let $\mathcal N$ be the set of fully-labeled points of $P\times Q'$ as defined in Section \ref{games_polytopes}. 
By Lemma \ref{le1}, we know that for every Nash equilibrium of the game $G(\gamma)$, there is a unique point in $\mathcal
N$. 

Consider the hyper-plane $H: \lambda-\sum_{i=1}^m \gamma_i x_i=0$ in $(y,\pi_1,x,\lambda,\pi_2)$-space and the corresponding
half spaces $H^+: \lambda-\sum_{i=1}^m \gamma_i x_i\geq 0$ and $H^-: \lambda-\sum_{i=1}^m \gamma_i x_i\leq 0$.  
It is easy to see that a point $w\in\mathcal N$ corresponds to a Nash equilibrium of $G(\gamma)$ only if $w\in H$. Therefore
the intersection of $\mathcal N$ with the hyper-plane $H$ gives all the Nash equilibria of $G(\gamma)$. If the hyper-plane
$H$ intersects an edge of $\mathcal N$,
then it intersects the edge exactly at one point, because $G(\gamma)$ is a non-degenerate game. 

Let $\gamma_{min}=\min_{i\in S_1} \gamma_i$ and $\gamma_{max}=\max_{i\in S_1} \gamma_i$. Since $\forall x \in \Delta_1,\
\gamma_{min}\leq\sum_{i=1}^m \gamma_i x_i\leq\gamma_{max}$, a point $w\in \mathcal N$ corresponds to a Nash
equilibrium of $G(\gamma)$, only if the value of $\lambda$ at $w$ is between $\gamma_{min}$ and $\gamma_{max}$. 
From Proposition \ref{ho_pr1}, we know that $\mathcal N$ contains only a path. If we consider the path $\mathcal N$ from the
first edge $(v_s,\mathcal E_{v_s})$ to the last edge $(v_e,\mathcal E_{v_e})$, then $\lambda$ monotonically increases from
$-\infty$ to $\infty$ on it (Lemmas \ref{sle4} and \ref{ho_le3}). Therefore all the points, corresponding to the Nash
equilibrium of $G(\gamma)$ on the path $\mathcal N$, lie between $OPT(\gamma_{min})$ and $OPT(\gamma_{max})$ (Lemma
\ref{ho_le2}).

\subsection{Rank-1 NE: A Polynomial Time Algorithm}\label{binsearch}
Recall that finding a Nash equilibrium of the game $G(\gamma)$ is equivalent to finding a point in the intersection of
$\mathcal N$ and the hyper-plane $H$. As $\lambda$ increases monotonically on $\mathcal N$, and all the points in the
intersection are between the points of $\mathcal N$ corresponding to $\lambda=\gamma_{min}$ and $\lambda=\gamma_{max}$,
the {\em BinSearch} algorithm of Table \ref{a1} applies binary search on $\lambda$ to locate a point in the intersection. 

\begin{table}[!hbt]
\begin{center}
\begin{tabular}{|l|}\hline
{\bf BinSearch}($\gamma_{min},\gamma_{max}$)\\
\hspace{15pt}$a_1\leftarrow\gamma_{min}$; $a_2\leftarrow\gamma_{max}$;\\
\hspace{15pt}{\bf if} IsNE($a_1$) $=0$ {\bf or} IsNE($a_2$) $=0$ {\bf then} {\bf return};\\ 
\hspace{15pt}{\bf while} {\em true}\\
\hspace{30pt}$a\leftarrow \frac{a_1+a_2}{2}$; flag $\leftarrow$ IsNE($a$);\\
\hspace{30pt}{\bf if} flag $=0$ {\bf then} {\bf break};\\
\hspace{30pt}{\bf else if} flag $<0$ {\bf then} $a_1\leftarrow a$; \\
\hspace{30pt}{\bf else} $a_2\leftarrow a$;\\ 
\hspace{15pt}{\bf endwhile} \\
\hspace{15pt}{\bf return}; \\
\\
{\bf IsNE}($\delta$)\\
\hspace{15pt}Find $OPT(\delta)$ by solving $LP(\delta)$;\\
\hspace{15pt}$\overline{u,v}\leftarrow$ The edge containing $OPT(\delta)$; 
$\mathcal H\leftarrow\{w \in \overline{u,v}\ |\ w\in H\}$; \\
\hspace{15pt}{\bf if} $\mathcal H\neq \emptyset$ {\bf then} Output $\mathcal H$; {\bf return} $0$; \\
\hspace{15pt}{\bf else if} $\overline{u,v} \in H^+$ {\bf then return} $1$; \\
\hspace{15pt}{\bf else return} $-1$;\\ 
\hline
\end{tabular}
\end{center}
\caption{BinSearch Algorithm}\label{a1}
\vspace{-0.2cm}
\end{table}

The {\em IsNE} procedure of Table \ref{a1} takes a $\delta \in \mathbb R$ as the input, and outputs a NESP if possible,
otherwise it indicates the position of $OPT(\delta)$ with respect to the hyper-plane $H$. First it finds the optimal set
$OPT(\delta)$ of $LP(\delta)$ and the corresponding edge $\overline{u,v}$ containing $OPT(\delta)$. Next, it finds a set
$\mathcal H$, which consists of all the points in the intersection of $\overline{u,v}$ and the hyper-plane $H$ if any, {\em
i.e.}, Nash equilibria of $G(\gamma)$. Since the game $G(\gamma)$ is non-degenerate, $\mathcal H$ is either a singleton or 
empty. In the former case, the procedure outputs $\mathcal H$ and returns $0$ indicating that a Nash equilibrium has been
found. However in the latter case, it returns $1$ if $\overline{u,v} \in H^+$ otherwise it returns $-1$, indicating the 
position of $\overline{u,v}$ w.r.t. the hyper-plane $H$. 

The {\em BinSearch} algorithm maintains two pivot values $a_1$ and $a_2$ of $\lambda$ such that the corresponding
$OPT(a_1)\in H^-$ and $OPT(a_2)\in H^+$, {\em i.e.}, always on the opposite sides of the hyper-plane $H$.
Clearly $\mathcal N$ crosses $H$ at least once between $OPT(a_1)$ and $OPT(a_2)$. Since $OPT(\gamma_{min})\in H^-$ and
$OPT(\gamma_{max})\in H^+$, the pivots $a_1$ and $a_2$ are initialized to $\gamma_{min}$ and $\gamma_{max}$ respectively.
Initially it calls {\em IsNE} for both $a_1$ and $a_2$ separately and terminates if either returns zero indicating that a
NESP has been found. Otherwise the algorithm repeats the following steps until {\em IsNE} returns zero: It calls {\em IsNE}
for the mid point $a$ of $a_1$ and $a_2$ and terminates if it returns zero. If {\em IsNE} returns a negative value, then
$OPT(a)\in H^-$ implying that $OPT(a)$ and $OPT(a_2)$ are on the opposite sides of $H$, and hence the lower pivot $a_1$
is reset to $a$. In the other case $OPT(a)\in H^+$, the upper pivot $a_2$ is set to $a$, as $OPT(a_1)$ and $OPT(a)$ are on
the opposite sides of $H$. 

Note that, the index of the Nash equilibrium obtained by {\em BinSearch} algorithm is always $+1$,
since $a_1<a_2$ is an invariant (Proposition \ref{pr2}).
For $X\in R^{mn}$, let $\tilde{X}=\max_{_{i\in S_1, j\in S_2}}|x_{ij}|$. Since the column-player's payoff matrix is
represented by $-A+\gamma\cdot \beta^T$ of the game $G(\gamma)$, let $|B|=\max\{\tilde{A},\tilde{\beta},\tilde{\gamma}\}$.

\begin{theorem}
Let $\mathcal L$ be the bit length of the input. The BinSearch terminates in time poly($\mathcal L,m,n$).
\end{theorem}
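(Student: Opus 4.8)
The plan is to bound the number of iterations of the binary search and the cost of each iteration separately, and then combine them.

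First I would analyze the \emph{range and precision} of the binary search on $\lambda$. The search starts with the interval $[\gamma_{min},\gamma_{max}]$, whose length is at most $2|B|$, a number of bit length $O(\mathcal L)$. Each call to \textbf{IsNE} halves the current interval. The search terminates as soon as the optimal edge $\overline{u,v}$ of $LP(\delta)$ meets the hyper-plane $H$. So the key quantity to control is: how small must the interval become before the pivot $\delta$ is forced to lie inside the $\lambda$-projection of the (unique) edge of $\mathcal N$ that crosses $H$? Since $P$ and $Q'$ are polytopes defined by integer data of bit length $O(\mathcal L)$, every vertex of $P\times Q'$ — and hence every breakpoint of the piecewise-linear path $\mathcal N$, in particular the two endpoints of the edge crossing $H$ — has rational coordinates with numerator and denominator of bit length $\mathrm{poly}(\mathcal L,m,n)$, by Cramer's rule / standard polytope vertex bounds. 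Consequently the $\lambda$-length of that crossing edge is either zero (a degenerate case we can argue away, or handle by noting the crossing point is then a vertex with polynomially bounded description) or at least $2^{-\mathrm{poly}(\mathcal L,m,n)}$. Therefore after $\log_2\bigl(2|B| \cdot 2^{\mathrm{poly}(\mathcal L,m,n)}\bigr) = \mathrm{poly}(\mathcal L,m,n)$ halvings, the interval is shorter than this gap, and since one endpoint's $OPT$ is in $H^-$ and the other's in $H^+$ throughout (the invariant maintained by the algorithm), the midpoint must land on the crossing edge and \textbf{IsNE} returns $0$. This gives a polynomial bound on the iteration count.

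Next I would bound the \emph{cost of a single iteration}, which is dominated by the call to \textbf{IsNE}$(\delta)$. This requires (i) solving $LP(\delta)$ — a linear program over $P\times Q'\cap\{\lambda=\delta\}$ with $\mathrm{poly}(m,n)$ variables and constraints and integer/rational coefficients of bit length $\mathrm{poly}(\mathcal L)$ (note $\delta$ itself, after $k$ halvings, has bit length $O(\mathcal L + k) = \mathrm{poly}(\mathcal L,m,n)$), which by the polynomial-time solvability of linear programming (ellipsoid or interior-point) takes $\mathrm{poly}(\mathcal L,m,n)$ time and returns an optimal vertex together with the tight constraints identifying the edge $\overline{u,v}=OPT(\delta)$ (using Lemma~\ref{ho_le2}, which guarantees $OPT(\delta)$ lies in a single edge of $\mathcal N$); (ii) intersecting the line segment $\overline{u,v}$ with the hyper-plane $H$ and checking on which side the segment lies — a constant number of arithmetic operations on rationals of polynomial bit length. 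Hence each iteration costs $\mathrm{poly}(\mathcal L,m,n)$.

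Combining the two bounds, the total running time is (number of iterations) $\times$ (cost per iteration) $=\mathrm{poly}(\mathcal L,m,n)$, which proves the theorem. The main obstacle I anticipate is making the precision argument in the first paragraph fully rigorous: one must show that the bit complexity of the pivots $\delta$ does not blow up over the iterations (it does not — each halving adds only one bit to the denominator, so after $\mathrm{poly}$ iterations the bit length is still $\mathrm{poly}$), and one must carefully establish the separation lower bound $2^{-\mathrm{poly}}$ on the $\lambda$-extent of the crossing edge, handling the borderline case where $H$ passes exactly through a vertex of $\mathcal N$ (there the returned Nash equilibrium is that vertex, whose coordinates are themselves polynomially bounded, so the algorithm still terminates correctly within the same iteration count). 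Everything else is a routine invocation of polynomial-time linear programming and standard bounds on vertices of rational polytopes.
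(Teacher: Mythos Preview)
Your proposal is correct and follows essentially the same approach as the paper: bound the $\lambda$-extent of an edge of $\mathcal N$ from below via Cramer's rule on the vertices of $Q'$ (the paper makes this explicit with $\Delta=(m+2)!\,|B|^{m+2}$ and the bound $\lambda_2-\lambda_1\ge 1/\Delta^2$), deduce a polynomial bound on the number of halvings, and then cite polynomial-time LP solving for the per-iteration cost. One small inaccuracy: you speak of ``the (unique) edge of $\mathcal N$ that crosses $H$'', but rank-$1$ games can have several Nash equilibria and hence several crossing edges; the argument still goes through because once $a_2-a_1$ is below the minimum edge $\lambda$-extent, monotonicity of $\lambda$ along $\mathcal N$ (Lemma~\ref{ho_le3}) forces $OPT(a_1)$ and $OPT(a_2)$ onto the same or adjacent edges, and \textbf{IsNE} then finds the crossing within $O(1)$ further calls --- which is exactly how the paper phrases the termination step.
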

\begin{proof}
From the above discussion it is clear that the algorithm terminates when the call {\em IsNE($a$)} outputs a NESP of
$G(\gamma)$. 
Let the range of $\lambda$ for an edge $(v,\mathcal E_v) \in \mathcal N$ be $[\lambda_1\ \lambda_2]$.
Let $\Delta=(m+2)!\ (|B|)^{(m+2)}$.
\begin{claim}
$\lambda_2-\lambda_1\geq\frac{1}{\Delta^2}$.
\end{claim}
\begin{proof}
Note that $\lambda_1$ and $\lambda_2$ correspond to the two vertices of $\mathcal E_v \in Q'$. Since $Q'$ is in a
($m+2$)-dimensional space, hence there are $m+2$ equations tight at every vertex of it. Hence both $\lambda_1$ and
$\lambda_2$ are rational numbers with denominator at most $\Delta$. Therefore $\lambda_2-\lambda_1$ is at least
$\frac{1}{\Delta^2}$. \qed
\end{proof}

From the above claim, it is clear that when $a_2-a_1\leq\frac{1}{\Delta^2}$, $OPT(a_1)$ and $OPT(a_2)$ are either part of the
same edge or adjacent edges. In either case, the algorithm terminates after one more call to {\em IsNE}, because {\em IsNE}
checks if the edge corresponding to $OPT(a)$ contains a Nash equilibrium of $G(\gamma)$.  

Clearly $a_2-a_1=\displaystyle\frac{\gamma_{max}-\gamma_{min}}{2^l}$ after $l$ iterations of the {\em while} loop. Let $k$ be
such that 
\[
\begin{array}{l}
\displaystyle\frac{\gamma_{max}-\gamma_{min}}{2^k}=\displaystyle\frac{1}{\Delta^2} \Rightarrow 2^k=\Delta^2
(\gamma_{max}-\gamma_{min}) \Rightarrow \\ \\ \hspace{2cm}k=2\log\Delta + \log(\gamma_{max}-\gamma_{min})\leq O(m \log m + m \log |B|+
\log(\gamma_{max}-\gamma_{min})) 
\end{array}
\]

{\em BinSearch} makes at most $k+1$ calls to the procedure {\em IsNE}, which is polynomial in 
$\mathcal L,m$, and $n$. The procedure {\em IsNE} solves a linear program and computes a set $\mathcal H$, both may be done
in poly($\mathcal L,m,n$) time. Therefore the total time taken by {\em BinSearch} is polynomial in $\mathcal L,m$, and $n$.
\qed 
\end{proof}

\subsection{Enumeration Algorithm for Rank-1 Games}\label{enum}

The {\em Enumeration} algorithm of Table \ref{a2} simply follows the path $\mathcal N$ between $OPT(\gamma_{min})$ and
$OPT(\gamma_{max})$, and outputs the NESPs whenever it hits the hyper-plane $H: \lambda-\sum_{i=1}^m \gamma_i x_i=0$.

\begin{table}[h]
\begin{center}
\begin{tabular}{|l|}\hline
{\bf Enumeration}($\overline{u_1,v_1}$, $\overline{u_2,v_2}$)\\
\hspace{15pt}$\overline{u,u'}\leftarrow \overline{u_1,v_1}$;\\
\hspace{15pt}{\bf if} $\overline{u,u'}$ of type $(v,\mathcal E_v)$ {\bf then} flag $\leftarrow 1$; \\
\hspace{15pt}{\bf else} flag $\leftarrow 0$; \\
\hspace{15pt}{\bf while} {\em true}\\
\hspace{30pt}$\mathcal H=\{w \in \overline{u,u'}\ |\ w\in H\}$; Output $\mathcal H$;\\
\hspace{30pt}{\bf if}  $\overline{u,u'}= \overline{u_2,v_2}$ {\bf then} break; \\
\hspace{30pt}{\bf if} flag $=1$ {\bf then} $\overline{u,u'}\leftarrow(\mathcal E_{u'},u')$; flag $\leftarrow 0$;\\
\hspace{30pt}{\bf else} $\overline{u,u'}\leftarrow(u',\mathcal E_{u'})$; flag $\leftarrow 1$; \\
\hspace{15pt}{\bf endwhile} \\
\hspace{15pt}{\bf return};\\
\hline
\end{tabular}
\end{center}
\caption{Enumeration Algorithm}\label{a2}
\end{table}

We obtain $OPT(\gamma_{min})$ and $OPT(\gamma_{max})$ on the path $\mathcal N$ by solving $LP(\gamma_{min})$ and
$LP(\gamma_{max})$ respectively. Let the edges $\overline{u_1,v_1}$ and $\overline{u_2,v_2}$ contain $OPT(\gamma_{min})$ and
$OPT(\gamma_{max})$ respectively.  The call {\em Enumeration}($\overline{u_1,v_1}$, $\overline{u_2,v_2}$) enumerates all the
Nash equilibria of the game $G(\gamma)$.

The {\em Enumeration} algorithm initializes $\overline{u,u'}$ to the edge $\overline{u_1,v_1}$. Since the edges alternate
between the type $(v,\mathcal E_v)$ and $(\mathcal E_w,w)$ on $\mathcal N$, the value of {\em flag} indicates the type of
edge to be considered next. It is set to one if the next edge is of type $(\mathcal E_w,w)$, otherwise it is set to zero. In
the while loop, it first outputs the intersection of the edge $\overline{u,u'}$ and the hyper-plane $H$, if any. Further, if
the value of the flag is one then $\overline{u,u'}$ is set to $(\mathcal E_{u'},u')$, otherwise it is set to $(u',\mathcal
E_{u'})$, and the flag is toggled. Recall that the edges incident on a vertex $u'$ in $\mathcal N$ may be obtained by
relaxing the inequality corresponding to the duplicate label of $u'$, in $P$ and in $Q'$ (Section \ref{games_polytopes}). Let
the duplicate label of the vertex $u'$ be $i$. We may obtain the edge $(\mathcal E_{u'},u')$ by relaxing the inequality $i$
of $P$ and the edge $(u',\mathcal E_{u'})$ by relaxing the inequality $i$ of $Q'$. The algorithm terminates when
$\overline{u,u'}=\overline{u_2,v_2}$.

Every iteration of the loop takes time polynomial in $\mathcal L$, $m$ and $n$. Therefore, the time taken by the algorithm is
equivalent to the number of edges on $\mathcal N$ between $\overline{u_1,v_1}$ and $\overline{u_2,v_2}$. 

For a general bimatrix game $(A,B)$, we may obtain $C$, $\gamma$ and $\beta$ such that $B=C+\gamma\cdot\beta^T$, and define
the corresponding game space $\Gamma$ and the polytopes $P$ and $Q'$ accordingly (Section \ref{games_polytopes}). There is a
one-to-one correspondence between the Nash equilibria of the game $(A,B)$ and the points in the intersection of the
fully-labeled set $\mathcal N$ and the hyper-plane $\lambda - \sum_{i=1}^m \gamma_i x_i=0$. Recall that the set $\mathcal N$
contains one path ($\mathcal P$) and a set of cycles (Proposition \ref{pr1}). The extreme edges $(v_s,\mathcal E_{v_s})$ and
$(v_e,\mathcal E_{v_e})$ of $\mathcal P$ may be easily obtained as described in the proof of Lemma \ref{sle4}. Since $\mathcal
P$ contains at least one Nash equilibrium of every game in $\Gamma$ (Lemma \ref{le_cover}), hence the call
$Enumerate((v_s,\mathcal E_{v_s}),(v_e,\mathcal E_{v_e}))$ outputs at least one Nash equilibrium of the game $(A,B)$. Note
that the time taken by the algorithm again depends on the number of edges on the path $\mathcal P$.

\subsubsection{Comparison with Earlier Approaches.}
The {\em Enumeration} algorithm may be compared to two previous algorithms. One is the Theobald algorithm \cite{the}, which
enumerates all Nash equilibria of a rank-$1$ game, and the other is the Lemke-Howson algorithm \cite{lem}, which finds a Nash
equilibrium of any bimatrix game. The {\em Enumeration} algorithm enumerates all the Nash equilibria of a rank-$1$ game and
for any general bimatrix game it is guaranteed to find one Nash equilibrium. All three algorithms are path following
algorithms. However, the main difference is that both the previous algorithms always trace a path on the best response
polytopes of a given game ({\em i.e.,} $P(\gamma)\times Q(\gamma)$), while the {\em Enumeration} algorithm follows a path on a
bigger polytope $P\times Q'$ which encompasses best response polytopes of all the games of an $m$-dimensional game space.
Therefore, for every game in this $m$-dimensional game space, the {\em Enumeration} follows the same path. 
Further, all the points on the path followed by {\em Enumeration} algorithm are
fully-labeled, and it always hits the best response polytope of the given game at one of it's NESP points. However the path
followed by previous two algorithms is not fully-labeled and whenever they hit a fully-labeled point, it is a NESP of the
game. 

In every intermediate step, the Theobald algorithm calculates the range of a variable ($\xi$) based on the feasibility of
primal and dual, and accordingly decides which inequality to relax (in $P$ or $Q$) to locate the next edge. While {\em
Enumeration} algorithm simply leaves the duplicate label in $P$ or $Q'$ (alternately) at the current vertex to 
locate the next edge. Further, for a general bimatrix game, the {\em Enumeration} algorithm locates at least one Nash
equilibrium, while Theobald algorithm works only for rank-$1$ games.

For rank-$1$ games there may be a polynomial bound for the {\em Enumeration} algorithm, because experiments suggest that
the path $\mathcal N$ contains very few edges for randomly generated rank-$1$ games. 

\section{Rank-$k$ Space and Homeomorphism}\label{fixedRank}

It turns out that the approach used to show the homeomorphism between the subspace of rank-$1$ games and it's Nash
equilibrium correspondence may be extended to the subspace with rank-$k$ games.  Given a bimatrix game $(A,B)\in R^{2mn}$ of
rank-$k$, the matrix $A+B$ may be written as $\sum_{l=1}^k \gamma^l\cdot\beta^{l^T}$, using the linearly independent vectors
$\gamma^l \in \mathbb R^m,\ \beta^l \in \mathbb R^n, 1\leq l\leq k$.  Therefore, the column-player's payoff matrix $B$ may be
written as $B=-A+\sum_{l=1}^k \gamma^l\cdot\beta^{l^T}$.  Consider the corresponding game space
$\Gamma^k=\{(A,-A+\sum_{l=1}^k \alpha^l\cdot\beta^{l^T})\in \mathbb R^{2mn}\ |\ \forall l\leq k,\ \alpha^l \in \mathbb
R^m\}$, where $\{\beta^l\}_{l=1}^k$ are linearly independent. This space is an affine $km$-dimensional subspace of the
bimatrix game space $\mathbb R^{2mn}$, and it contains only rank-$k$ games. Let $\alpha=(\alpha^1,\dots,\alpha^k)$, and
$G(\alpha)$ denote the game $(A,-A+\sum_{l=1}^k \alpha^l\cdot\beta^{l^T})$. The Nash equilibrium correspondence of the space
$\Gamma^k$ is $E_{\Gamma^k}=\{(\alpha,x,y)\in \mathbb R^{km}\times\Delta_1\times\Delta_2\ |\ (x,y) \mbox{ is a NESP of }
G(\alpha)\in \Gamma^k\}$. 

For all the games in $\Gamma^k$, again the row-player's payoff matrix remains constant, hence for all $G(\alpha)\in \Gamma^k$
the best response polytope of the row-player $P(\alpha)$ is $P$ of (\ref{eq2}).  However, the best response polytope of the
column player $Q(\alpha)$ varies, as the payoff matrix of the column-player varies with $\alpha$. Consider the following
polytope (similar to (\ref{eq3})). 
\begin{eqnarray}\label{eq_Qnew}
Q'^k=\{(x,\lambda,\pi_2)\in \mathbb R^{m+k+1}\ |\ x_i\geq 0,\ \forall i \in S_1;\hspace{6.5cm}\\
x^T(-A^j)+\sum_{l=1}^k\beta^l_j\lambda_l-\pi_2\leq 0,\ \forall j \in S_2;\ \sum_{i=1}^m x_i=1\}\hspace{0.5cm}\nonumber
\end{eqnarray}

Note that $\lambda=(\lambda_1,\dots,\lambda_k)$ is a variable vector.
The column-player's best response polytope $Q(\alpha)$, for the game $G(\alpha)$, is the projection of the set
$\{(x,\lambda,\pi_2)\in Q'^k\ |\ \forall l\leq k,\ \sum_{i=1}^m \alpha^l_i x_i-\lambda_l = 0\}$ on $(x,\pi_2)$-space. We
assume that the polytopes $P$ and $Q'^k$ are non-degenerate. Let the set of fully-labeled pairs of
$P\times Q'^k$ be $\mathcal N^k=\{(v,w)\in P\times Q'^k\ |\ L(v)\cup L(w)=\{1,\dots,m+n\}\}$. The following facts regarding
the set $\mathcal N^k$ may be easily derived.

\begin{itemize}
\item For every point in $E_{\Gamma^k}$ there is a unique point in $\mathcal N^k$, and for every point in $\mathcal N^k$
there is a point in $E_{\Gamma^k}$ (Lemma \ref{le1}). Further the set of points of $E_{\Gamma^k}$ mapping to a point
$(v,w)\in \mathcal N^k$, is equivalent to $k(m-1)$-dimensional space. 
\item Since there are $k$ more variables in $Q'^k$, namely $\lambda_1,\dots,\lambda_k$ compared to $Q$ of (\ref{eq2}), 
$\mathcal N^k$ is a subset of the $k$-skeleton of $P\times Q'^k$. If a point $v \in P$ is on a $d$-dimensional face ($d\le
k$), then the set $\mathcal E_v$ is either empty or it is a $(k-d)$-dimensional face, where $\mathcal E_v=\{w\in Q'^k\ |\
(v,w)\in \mathcal N^k\}$ (Observations of Section \ref{games_polytopes}).  
\item For every $(v,w)=((y,\pi_1),(x,\lambda,\pi_2))$ in $P\times Q'^k$, $\sum_{l=1}^k \lambda_l (\beta^{l^T} \cdot
y)-\pi_1-\pi_2\leq 0$, and equality holds iff $(v,w)\in \mathcal N^k$. 
\end{itemize}

For a vector $\delta \in \mathbb R^k$, consider the following parametrized linear program $LP^k(\delta)$.
\begin{eqnarray}\label{eq_lpnew}
\begin{array}{ll}
\vspace{5pt}
LP^k(\delta):\hspace{.1in}  & \max\ \sum_{l=1}^k \delta_l (\beta^{l^T} \cdot y)-\pi_1-\pi_2 \\
&\hspace{.4in} (y,\pi_1)\in P \\
&\hspace{.4in} (x,\lambda,\pi_2) \in Q'^k \\
&\hspace{.4in} \lambda_l=\delta_l, \forall l\leq k
\end{array}
\end{eqnarray}

Let $OPT^k(\delta)$ be the set of optimal points of $LP^k(\delta)$. Note that for any $a\in \mathbb R^k$, all the points on
$\mathcal N^k$ with $\lambda=a$ may be obtained by solving $LP^k(a)$. In other words, $\{((y,\pi_1),(x,\lambda,\pi_2))\in
\mathcal N^k\ |\ \lambda=a\}=OPT^k(a)$ (Lemma \ref{ho_le2}).  Using this fact we show that the tuple
$(\lambda_1+\beta^{1^T}\cdot y,\dots,\lambda_k+\beta^{k^T}\cdot y)$ uniquely identifies a point of $\mathcal N^k$. For a
vector $a \in \mathbb R^k$, let $S(a)=\{((y,\pi_1),(x,\lambda,\pi_2))\in \mathcal N^k\ |\ \forall l\leq k,\
\lambda_l+\beta^{l^T}\cdot y=a_l\}$. 
\\

\noindent{\bf Lemma A}.\label{dis_le1}
{\em For a vector $a \in \mathbb R^k$, the set $S(a)$ contains exactly one element, i.e., $|S(a)|=1$.}
\begin{proof}
First we show that $S(a)\neq \emptyset$. 
Let $S_1(a)=\{((y,\pi_1),(x,\lambda,\pi_2))\in \mathcal N^k\ |\ \forall l>1,\ \lambda_l+\beta^{l^T}\cdot y=a_l\}$. 
Using the similar analysis as in Lemmas \ref{sle4} and \ref{le_lambda}, it may be easily shown that for every $b\in \mathbb R$
there is a point in $S_1(a)$ such that $\lambda_1+\beta^{1^T}\cdot y=b$. Therefore $S(a)\neq \emptyset$.

Now, suppose $|S(a)|>1$ implying that there are at least two points $(v_1,w_1)$ and $(v_2,w_2)$ in $S(a)$. Let
$v_i=(y^i,\pi_1^i)$, $w_1=(x^1,c,\pi_2^1)$ and $w_2=(x^2,d,\pi_2^2)$. Clearly, $(v_1,w_1)$ and $(v_2,w_1)$ are feasible
points of $LP^k(c)$ and $(v_1,w_1)\in OPT^k(c)$. Similarly, $(v_2,w_2)$ and $(v_1,w_2)$ are feasible points of $LP^k(d)$ and
$(v_2,w_2)\in OPT^k(d)$. Therefore the following holds.
\[
\begin{array}{l}
\sum_{l=1}^k c_l (\beta^{l^T}\cdot y^1) - \pi_1^1-\pi_2^1\geq \sum_{l=1}^k c_l (\beta^{l^T}\cdot y^2) - \pi_1^2-\pi_2^1 \\
\sum_{l=1}^k d_l (\beta^{l^T}\cdot y^2) - \pi_1^2-\pi_2^2\geq \sum_{l=1}^k d_l (\beta^{l^T}\cdot y^1) - \pi_1^1-\pi_2^2 \\
\end{array}
\]
Using the fact that $\beta^{l^T}\cdot y=a_l-\lambda_l,\ \forall l\leq k$ and the above equations, we get
\[
\begin{array}{r}
\sum_{l=1}^k c_l(a_l-c_l) + d_l(a_l-d_l) \geq \sum_{l=1}^k c_l(a_l-d_l) + d_l(a_l-c_l)\\
\begin{array}{l}
\Rightarrow 
-\sum_{l=1}^k (c_l-d_l)^2 \geq 0 \\
\Rightarrow
\forall l\leq k,\ c_l=d_l \Rightarrow c=d\\
\Rightarrow \forall l\leq k,\ \beta^{l^T}\cdot y^1=\beta^{l^T}\cdot y^2 
\end{array}
\end{array}
\]
The above expressions and the fact that $\sum_{l=1}^k\lambda_l(\beta^{l^T}\cdot y)-\pi_1-\pi_2$ evaluates to zero at both
$(v_1,w_1)$ and $(v_2,w_2)$ imply that $\pi_1^1=\pi_1^2$ and $\pi_2^1=\pi_2^2$. Note that, $S(a)\subset OPT^k(c)$. 
\begin{claim}
The set $\{w\in Q'^k \ |\ (v,w) \in OPT^k(c),\ v\in P\}$ is a singleton.
\end{claim}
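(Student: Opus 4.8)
The plan is to peel off the statement in two moves: exploit the product structure of $LP^k(c)$ to reduce to a statement purely about the $Q'^k$-side, and then prove that statement by showing that the coordinate map $w\mapsto\lambda$ is injective on each face $\mathcal E_v$ — this last fact being the $k$-dimensional analogue of the strict monotonicity of $\lambda$ along the edges of $\mathcal N$ that was isolated inside the proof of Lemma~\ref{ho_le3}. First I would note that, because the equalities $\lambda_l=c_l$ are imposed in $LP^k(c)$, its feasible region is the product $P\times(Q'^k\cap\{\lambda=c\})$ and its objective $\sum_l c_l(\beta^{l^T}\cdot y)-\pi_1-\pi_2$ splits as a function of $(y,\pi_1)$ alone plus $-\pi_2$ alone. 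Hence $OPT^k(c)$ factors as a product of the optimal sets of the two decoupled subproblems (over $P$, resp.\ over $Q'^k\cap\{\lambda=c\}$), both nonempty since $OPT^k(c)\neq\emptyset$; equivalently, if $(v_1,w_1),(v_2,w_2)\in OPT^k(c)$ then also $(v_1,w_2)\in OPT^k(c)$. Since on the feasible region the objective is $\le 0$ with equality exactly on $\mathcal N^k$ (the displayed inequality $\sum_l\lambda_l(\beta^{l^T}\cdot y)-\pi_1-\pi_2\le 0$) and its maximum is $0$, we get $OPT^k(c)\subseteq\mathcal N^k$, so $w_1,w_2\in\mathcal E_{v_1}$, and clearly $\lambda(w_1)=\lambda(w_2)=c$. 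It therefore suffices to prove that $w=(x,\lambda,\pi_2)\mapsto\lambda$ is injective on $\mathcal E_v$ for every $v\in P$; this immediately yields $w_1=w_2$, i.e.\ the claim.

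For the injectivity I would use that $\mathcal E_v$ is exactly the (nonempty) face of $Q'^k$ obtained by making tight the constraints indexed by $\{1,\dots,m+n\}\setminus L(v)$, i.e.\ $x_i=0$ for the rows $i$ with $A_iy<\pi_1$ at $v$ and $x^T(-A^j)+\sum_l\beta^l_j\lambda_l-\pi_2=0$ for the $j$ in the support $J$ of $y$ at $v$; write $I$ for the set of tight payoff rows at $v$, so that by non-degeneracy of $P$ we have $|I|=|J|-d$ with $d$ the dimension of the face of $P$ containing $v$. A tangent direction $(\delta x,\delta\lambda,\delta\pi_2)$ along $\mathcal E_v$ with $\delta\lambda=0$ then satisfies $\delta x_i=0$ for $i\notin I$, $\sum_i\delta x_i=0$, and $\sum_{i\in I}a_{ij}\,\delta x_i=-\delta\pi_2$ for all $j\in J$; hence $\xi:=(\delta x_i)_{i\in I}$ satisfies $(A_{_I}^{^J})^{T}\xi=t\cdot 1$ for some scalar $t$ together with $1^{T}\xi=0$. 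Non-degeneracy of $P$ at $v$ says precisely that the coefficient matrix of the constraints tight at $v$ (in the variables $y_J,\pi_1$, after eliminating $y_j=0$), namely $\left[\begin{smallmatrix}A_{_I}^{^J}&-1\\ 1^{T}&0\end{smallmatrix}\right]$, has kernel of dimension exactly $d$; passing to its transpose, the only solution of $(A_{_I}^{^J})^{T}\xi+s\cdot 1=0$, $1^{T}\xi=0$ is $\xi=0$, $s=0$. This forces $\xi=0$, hence $\delta\pi_2=0$ and the whole direction vanishes, giving injectivity of $\lambda$ on $\mathcal E_v$ and with it $w_1=w_2$.

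The hard part will be precisely this last step: laying out the tight-constraint matrices at $v\in P$ and at a point of $\mathcal E_v\in Q'^k$ with compatible orderings of rows and columns, and reading off from non-degeneracy of $P$ the rank statement that is needed (the linear part of $\lambda$ on $\mathcal E_v$ has rank $\dim\mathcal E_v$), keeping the sign and indexing bookkeeping straight — this generalizes the claim embedded in the proof of Lemma~\ref{ho_le3}. Everything else is routine and parallels the $k=1$ development: the product decomposition of $OPT^k(c)$, the identification of $OPT^k(c)$ with $\{(v,w)\in\mathcal N^k:\lambda=c\}$, and the description of $\mathcal E_v$ as the face of $Q'^k$ carved out by the label set complementary to $L(v)$.
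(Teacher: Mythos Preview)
Your argument is correct, and it takes a genuinely different route from the paper's. The paper argues purely by contradiction on the $Q'^k$ side: assuming two distinct points $w,w'$ in the projection, it considers the segment $L$ through them (on which $\lambda\equiv c$), asserts that the endpoints of $L$ sit on faces of $Q'^k$ of dimension $<k$ and pair with \emph{separate} convex sets of fully-labeled partners in $P$, and concludes that the convex hull of those two fully-labeled sets cannot lie in $\mathcal N^k$ --- contradicting that $OPT^k(c)$ is convex and contained in $\mathcal N^k$. It never explicitly invokes the separability of $LP^k(c)$ nor any rank computation tied to non-degeneracy of $P$.

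Your approach, by contrast, first uses separability to force both $w_1,w_2$ into a common $\mathcal E_{v_1}$, and then isolates the clean linear-algebra fact that the projection $w\mapsto\lambda$ is injective on every $\mathcal E_v$: the tangent-space computation, together with $\mathrm{rank}\bigl[\begin{smallmatrix}A_I^J&-1\\ 1^T&0\end{smallmatrix}\bigr]=|I|+1$ from non-degeneracy of $P$, kills any direction with $\delta\lambda=0$. This is the exact $k$-dimensional analogue of the strict monotonicity of $\lambda$ used in Lemma~\ref{ho_le3}, and it makes transparent precisely where the non-degeneracy hypothesis enters. The paper's argument is shorter and more geometric but leaves the face-structure assertions along $L$ largely implicit; yours is more explicit and yields a reusable statement (injectivity of $\lambda$ on each $\mathcal E_v$) that the paper never states directly.
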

\begin{proof}
Suppose the set $\{w\in Q'^k \ |\ (v,w) \in OPT^k(c),\ v\in P\}$ contains two distinct points $w$ and $w'$.
In that case, $\lambda$ takes value $c$ on the $1$-dimensional line $L\subset Q'^k$ containing both $w$ and $w'$.
Note that the points corresponding to the end-points of $L$ are on the lower dimensional face ($<k$) of $Q'^k$ and both these
points make separate convex sets of fully labeled pairs with the points of $P$. Further the convex hull of these two convex
sets is not contained by $\mathcal N^k$, however both these sets are contained in $OPT^k(c)$ and $OPT^k(c)\subset N^k$. It 
implies that $OPT^k(c)$ is not convex, which is a contradiction. \qed
\end{proof}

The above claim implies that $w=w_1=w_2$. Now it is enough to show that $v_1=v_2$ to prove the lemma. In the extreme case,
$w$ is a vertex of $Q'^k$ and makes a fully-labeled pair with a $k$-dimensional face of $P$.  Let
$M(w)=\{1,\dots,m+n\}\setminus
L(w)$. Clearly, $|M(w)|\geq n-k$, $M(w)\subseteq L(v_1)$ and $M(w)\subseteq L(v_2)$. Suppose $v_1\neq v_2$, then on the line
joining $v_1$ and $v_2$, the following equations are tight: $\beta^l\cdot y=a_l-c_l,\ \forall l\leq k$;
$\sum_{j=1}^n y_j=1$ and all the equations corresponding to $M(w)$. Clearly, there are at least $n+1$ equations tight on
this line and they are not linearly independent. This contradicts the fact that $A$ and $\beta^l$s are generic.
\qed
\end{proof}

Motivated by Lemma \hyperref[dis_le1]{A}, we consider the function $g^k:\mathcal N^k \rightarrow \mathbb R^k$ such that, 
\begin{eqnarray}\label{eq_gk}
g^k((y,\pi_1),(x,\lambda,\pi_2))=(\lambda_1+(\beta^{1^T}\cdot y),\dots,\lambda_k+(\beta^{k^T}\cdot y))
\end{eqnarray}
The function $g^k$ is continuous and bijective (Lemma \hyperref[dis_le1]{A}), and the inverse $g^{k^{-1}}:\mathbb R^k\rightarrow
\mathcal N^k$ is also continuous, since $\mathcal N^k$ is a closed and connected set. Using $g^{k}$ and a function similar to
(\ref{eq4}), we establish the homeomorphism between $\Gamma^k$ and $E_{\Gamma^k}$. 

\begin{theorem}\label{dis_th}
The Nash equilibrium correspondence $E_{\Gamma^k}$ is homeomorphic to the game space $\Gamma^k$.
\end{theorem}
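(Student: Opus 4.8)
The plan is to mimic the rank-$1$ homeomorphism argument, using the bijection $g^k : \mathcal N^k \to \mathbb R^k$ supplied by Lemma~A as the engine and then ``lifting'' it through the fibers of the projection $E_{\Gamma^k} \to \mathcal N^k$. First I would define the candidate homeomorphism $f : E_{\Gamma^k} \to \Gamma^k \equiv \mathbb R^{km}$ in analogy with~(\ref{eq4}). Given $(\alpha,x,y) \in E_{\Gamma^k}$, set the first $k$ coordinates of $f(\alpha,x,y)$ to be $g^k$ evaluated at the associated point of $\mathcal N^k$, i.e. the tuple $(\,(\beta^{l^T}\!\cdot y) + \alpha^{l^T}\!\cdot x\,)_{l=1}^k$ (note $\lambda_l = \alpha^{l^T}\!\cdot x$ at the point of $\mathcal N^k$ corresponding to $(\alpha,x,y)$), and for each $l \le k$ append the $m-1$ ``difference'' coordinates $\alpha^l_2 - \alpha^l_1, \dots, \alpha^l_m - \alpha^l_1$. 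This accounts for $k + k(m-1) = km$ coordinates, matching $\dim \Gamma^k$, and $f$ is continuous since each coordinate is a polynomial in $(\alpha,x,y)$.

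Next I would exhibit a continuous inverse $f^{-1} : \Gamma^k \to E_{\Gamma^k}$, exactly as in the rank-$1$ proof. Given $\alpha' \in \mathbb R^{km}$, write $\alpha' = (a', \, (d^l)_{l=1}^k)$ where $a' \in \mathbb R^k$ records the first block and $d^l \in \mathbb R^{m-1}$ the $l$-th difference block. Use the continuous inverse $g^{k^{-1}}$ (continuous because $\mathcal N^k$ is closed and connected, as noted just before Theorem~\ref{dis_th}) to recover $((y,\pi_1),(x,\lambda,\pi_2)) = g^{k^{-1}}(a') \in \mathcal N^k$; this fixes $x$, $y$ and $\lambda$. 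For each $l \le k$ I then solve the linear system in the unknown vector $\bs{a}^l = (a^l_1,\dots,a^l_m)$ given by $a^l_i = d^l_{i-1} + a^l_1$ for $i > 1$ together with $\sum_{i=1}^m x_i a^l_i = a'_l - \beta^{l^T}\!\cdot y$ (using $\lambda_l = a'_l - \beta^{l^T}\!\cdot y$ from the first coordinate relation); since $\sum_i x_i = 1$, this system has a unique solution depending continuously on the data. The point $((\bs{a}^1,\dots,\bs{a}^k), x, y)$ lies in $E_{\Gamma^k}$ because $g^{k^{-1}}(a') \in \mathcal N^k$ together with $\lambda_l = \sum_i \bs{a}^l_i x_i$ means the corresponding hyperplane condition of Lemma~\ref{le1}(1) holds for every $l$, and one checks $f$ and this map are mutually inverse.

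The step I expect to carry the real weight is \emph{not} in this theorem itself but in the two inputs it consumes: Lemma~A (which is already proved in the excerpt) and the continuity of $g^{k^{-1}}$. For the latter, the argument to invoke is that $g^k$ is a continuous bijection from $\mathcal N^k$ onto $\mathbb R^k$, and $\mathcal N^k$ is a closed, locally-compact polyhedral complex; a continuous bijection from such a space onto $\mathbb R^k$ need not automatically be a homeomorphism, so one should argue (as the text asserts, ``since $\mathcal N^k$ is a closed and connected set'') that $g^k$ is proper — preimages of compact sets are compact, which follows because the $\lambda$-coordinates are controlled on each face by the analysis behind Lemma~A and the fibers of the individual faces are bounded — whence $g^k$ is a closed map and its inverse is continuous. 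Once those two facts are in hand, the proof of Theorem~\ref{dis_th} is the purely formal ``bijective + continuous + continuous inverse'' packaging above, essentially identical in structure to the rank-$1$ theorem, with $k$ independent copies of the difference-block bookkeeping replacing the single copy.

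\begin{proof}
We define $f : E_{\Gamma^k} \to \Gamma^k \equiv \mathbb R^{km}$ by
\[
f(\alpha,x,y) = \big( (\beta^{1^T}\!\cdot y) + \alpha^{1^T}\!\cdot x, \dots, (\beta^{k^T}\!\cdot y) + \alpha^{k^T}\!\cdot x, \ \{\alpha^l_i - \alpha^l_1\}_{i=2,\dots,m}^{l=1,\dots,k} \big).
\]
Each coordinate is a polynomial in $(\alpha,x,y)$, so $f$ is continuous. We claim $f$ is a bijection with continuous inverse.

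\emph{Construction of $f^{-1}$.} Let $\alpha' \in \mathbb R^{km}$, and split it as $\alpha' = (a', (d^l)_{l=1}^k)$ with $a' \in \mathbb R^k$ the first block and $d^l \in \mathbb R^{m-1}$ the $l$-th difference block. Let $((y,\pi_1),(x,\lambda,\pi_2)) = g^{k^{-1}}(a') \in \mathcal N^k$; by Lemma~\hyperref[dis_le1]{A} this is well defined and, since $\mathcal N^k$ is closed and connected, $g^{k^{-1}}$ is continuous. This determines $x$, $y$, and $\lambda$, and by (\ref{eq_gk}) we have $\lambda_l = a'_l - \beta^{l^T}\!\cdot y$ for every $l \le k$. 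For each $l$, solve for $\bs a^l = (a^l_1,\dots,a^l_m)$ the system
\begin{eqnarray}
\forall i > 1,\quad a^l_i = d^l_{i-1} + a^l_1, \qquad \sum_{i=1}^m x_i a^l_i = a'_l - \beta^{l^T}\!\cdot y. \nonumber
\end{eqnarray}
Substituting the first set of equations into the second and using $\sum_i x_i = 1$ yields a single linear equation in $a^l_1$ with nonzero coefficient, hence a unique solution $\bs a^l$ depending continuously on $x, y, a'$. Set $f^{-1}(\alpha') = ((\bs a^1,\dots,\bs a^k), x, y)$. Since $\sum_i x_i \bs a^l_i = \lambda_l$ for every $l$, the point lies on the hyperplane of Lemma~\ref{le1}(1) for each $l$; as $((y,\pi_1),(x,\lambda,\pi_2)) \in \mathcal N^k$, Lemma~\ref{le1} gives $((\bs a^1,\dots,\bs a^k),x,y) \in E_{\Gamma^k}$.

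\emph{Mutual inverses.} Starting from $\alpha'$, the construction recovers the $\mathcal N^k$-point attached to $f^{-1}(\alpha')$ as $g^{k^{-1}}(a')$, so the first block of $f(f^{-1}(\alpha'))$ is $g^k(g^{k^{-1}}(a')) = a'$, and the difference blocks are $\bs a^l_i - \bs a^l_1 = d^l_{i-1}$ by construction; thus $f \circ f^{-1} = \mathrm{id}$. Conversely, for $(\alpha,x,y) \in E_{\Gamma^k}$, its image under $f$ has first block $a'_l = \beta^{l^T}\!\cdot y + \alpha^{l^T}\!\cdot x$; by Lemma~\ref{le1} the $\mathcal N^k$-point of $(\alpha,x,y)$ has $\lambda_l = \alpha^{l^T}\!\cdot x$, so $g^k$ of that point equals $a'$ and $g^{k^{-1}}(a')$ returns the same $x$, $y$, $\lambda$; the linear systems then have $\bs a^l = \alpha^l$ as their unique solution since $\alpha^l_i - \alpha^l_1 = d^l_{i-1}$ and $\sum_i x_i \alpha^l_i = \lambda_l = a'_l - \beta^{l^T}\!\cdot y$. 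Hence $f^{-1} \circ f = \mathrm{id}$.

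Both $f$ and $f^{-1}$ are continuous, so they furnish a homeomorphism between $E_{\Gamma^k}$ and $\Gamma^k$. \qed
\end{proof}
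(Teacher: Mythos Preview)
Your proposal is correct and follows essentially the same approach as the paper: define $f$ by pairing the $g^k$-values $(\beta^{l^T}\cdot y+\alpha^{l^T}\cdot x)_{l\le k}$ with the difference blocks $(\alpha^l_i-\alpha^l_1)_{i>1}$, then construct the inverse via $g^{k^{-1}}$ and the same linear systems. The only differences are cosmetic (you group the $k$ ``$g^k$-coordinates'' first rather than interleaving them per $l$) and in level of care---you verify the mutual-inverse identities explicitly and flag the properness issue behind continuity of $g^{k^{-1}}$, which the paper asserts more tersely.
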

\begin{proof}
Consider the function $f^k:E_{\Gamma^k}\rightarrow \Gamma^k$ as follows:
\[
f^k(\alpha,x,y)=(\alpha'^1,\dots,\alpha'^k), \mbox{ where } \alpha'^l=(\lambda_l+(\beta^{l^T}\cdot
y),\ \alpha^l_2-\alpha^l_1,\dots,\ \alpha^l_m-\alpha^l_1)^T,\ \forall l\leq k
\]
\begin{claim}
Function $f^k$ is bijective.
\end{claim}
\begin{proof}
Consider an $\alpha'=(\alpha'^1,\dots,\alpha'^k) \in \mathbb R^{mk}$. We construct a point $(\alpha,x,y)\in E_{\Gamma^k}$
such that $f^k(\alpha,x,y)=\alpha'$. Let $((y,\pi_1),(x,\lambda,\pi_2))=g^{k^{-1}}(\alpha'^1_1,\dots,\alpha'^k_1)$. Now we
solve the following system of equations to get $\alpha$.
\[
\begin{array}{rl}
\vspace{.15cm}
\forall l\leq k,&\hspace{10pt} \displaystyle\sum_{i=1}^m x_i \alpha_i^l = \lambda_l\\
\forall l\leq k,\ \forall i>1,& \hspace{10pt}\alpha_i^l = \alpha'^l_i-\alpha_1^l
\end{array}
\]
It is easy to see that we get a unique $\alpha$ by solving the above equations, and $f^k(\alpha,x,y)=\alpha'$ holds. \qed
\end{proof}
From the claim, it is clear that $f^k$ is a continuous bijective function. The inverse function $f^{k^{-1}}:\Gamma^k
\rightarrow E_{\Gamma^k}$ is also continuous, since $g^{k^{-1}}$ is continuous and the set $E_{\Gamma^k}$ is closed and
connected.  \qed
\end{proof}

Using the above theorem, next we give a fixed point formulation to solve a rank-$k$ game.

\begin{lemma}
Finding a Nash equilibrium of a game $G(\gamma) \in \Gamma^k$ reduces to finding a fixed point of a polynomially 
computable piece-wise linear function $f:[0,1]^k\rightarrow[0,1]^k$.
\end{lemma}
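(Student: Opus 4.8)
The plan is to build a self-map of the cube $[0,1]^k$ whose fixed points correspond exactly to the Nash equilibria of the given rank-$k$ game $G(\gamma)$, using the homeomorphism $f^k$ from Theorem \ref{dis_th} together with the linear-programming characterization of $\mathcal N^k$. Concretely, recall that a point $((y,\pi_1),(x,\lambda,\pi_2))\in\mathcal N^k$ corresponds to a NESP of $G(\gamma)$ exactly when $\sum_{i=1}^m\gamma^l_i x_i=\lambda_l$ for all $l\le k$, i.e. when the point lies on the intersection of $\mathcal N^k$ with the $k$ hyperplanes $H_l:\lambda_l-\sum_{i=1}^m\gamma^l_i x_i=0$. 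Since $g^k$ is a homeomorphism, a NESP is equivalently a vector $a\in\mathbb R^k$ such that the point $(v,w)=g^{k^{-1}}(a)$, with $w=(x,\lambda,\pi_2)$, satisfies $\lambda_l=\sum_{i=1}^m\gamma^l_i x_i$ for every $l$.

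First I would define, for $a\in\mathbb R^k$, the map $\varphi(a)=a+\bigl(\sum_{i=1}^m\gamma^1_i x_i-\lambda_1,\dots,\sum_{i=1}^m\gamma^k_i x_i-\lambda_k\bigr)$, where $(x,\lambda)$ are read off from $g^{k^{-1}}(a)$ (which is computed by solving $LP^k(a)$ and taking the optimal edge/face, as in the {\em BinSearch} and {\em Enumeration} discussions). This $\varphi$ is piecewise linear because $g^{k^{-1}}$ is piecewise linear on $\mathbb R^k$ (each piece corresponds to a face of $\mathcal N^k$, and on a face the relevant coordinates of $g^{k^{-1}}$ are affine in $a$ by LP duality), and it is polynomial-time computable because $LP^k(a)$ is. A fixed point of $\varphi$ is precisely a point of $\mathcal N^k$ on all the $H_l$, hence a NESP. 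Second, I would confine everything to a cube: using the bounds $\gamma_{\min}^l\le\sum_i\gamma^l_i x_i\le\gamma_{\max}^l$ (since $x\in\Delta_1$) and the analysis of the extreme edges (as in Lemma \ref{sle4} and the paragraph before Section \ref{binsearch}), the relevant slice of $\mathcal N^k$ has each $\lambda_l$ ranging in a bounded interval, so all NESP-producing $a$ lie in an explicit box $B=\prod_l[p_l,q_l]$. After an affine rescaling $B\cong[0,1]^k$, I would replace $\varphi$ by its composition with the nearest-point projection onto $B$, obtaining $f:[0,1]^k\to[0,1]^k$; projection is piecewise linear and polynomial-time, and one must check that the projection step does not create spurious fixed points on $\partial B$ — this follows because outside $B$ the difference $\sum_i\gamma^l_i x_i-\lambda_l$ has a sign pushing $a$ back toward $B$, exactly the monotonicity exploited by {\em BinSearch}.

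The main obstacle I anticipate is the rigorous verification that $g^{k^{-1}}$, hence $f$, is genuinely piecewise linear with polynomially many pieces of polynomial bit-complexity, and that $f$ maps $[0,1]^k$ into itself with no boundary fixed points other than genuine equilibria. The piecewise-linearity rests on the fact that $OPT^k(a)$ varies as a polyhedral complex in the parameter $a$ (parametric LP), and one needs that on each linearity region the map $a\mapsto(x,\lambda)$ is affine and Lipschitz; bounding the number of regions requires care but is routine given non-degeneracy of $P$ and $Q'^k$. The boundary analysis reduces to the observation, already implicit in the {\em BinSearch} correctness proof, that for $a$ on a face of $B$ the corrective term points inward (or is zero exactly at an equilibrium), so I would phrase $f$ as $\mathrm{proj}_{[0,1]^k}\circ(\mathrm{id}+\text{corrective drift})$ and invoke this sign property. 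Finally I would note that Brouwer applied to $f$ re-proves existence, but the point of the lemma is the reduction itself, so I would keep the construction explicit and defer any discussion of an algorithm extracting the fixed point to the surrounding text.
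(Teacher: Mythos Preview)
Your overall plan---parameterize $\mathcal N^k$ by a $k$-vector, define a self-map of a box whose fixed points are exactly the points with $\lambda_l=\sum_i\gamma_i^l x_i$, and invoke the LP characterization---is the same idea the paper uses. But you take an unnecessary detour through $g^{k^{-1}}$, and this introduces a genuine gap.

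The issue is your claim that $g^{k^{-1}}(a)$ is ``computed by solving $LP^k(a)$.'' It is not: $LP^k(\delta)$ pins down $\lambda=\delta$ and returns $OPT^k(\delta)=\{p\in\mathcal N^k:\lambda=\delta\}$, whereas $g^{k^{-1}}(a)$ is the unique point of $\mathcal N^k$ with $\lambda_l+\beta^{l^T}y=a_l$. These are different parameterizations of $\mathcal N^k$; inverting $g^k$ requires solving for $\lambda$ and $y$ simultaneously, and if you try to impose $\lambda_l+\beta^{l^T}y=a_l$ inside the LP the objective $\sum_l\lambda_l(\beta^{l^T}y)-\pi_1-\pi_2$ becomes quadratic in $\lambda$. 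So your ``polynomially computable'' claim for $\varphi$ is not justified as written.

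The paper avoids this entirely by parameterizing by $\lambda$ itself: set $f(a)=\bigl(\sum_i\gamma_i^1 x_i,\dots,\sum_i\gamma_i^k x_i\bigr)$ where $x$ is the (unique) $Q'^k$-component of $OPT^k(a)$. This is a single LP, so polynomial-time is immediate; and since $x\in\Delta_1$ one has $\gamma_{\min}^l\le\sum_i\gamma_i^l x_i\le\gamma_{\max}^l$ automatically, so $f$ maps the box $\mathcal B=\prod_l[\gamma_{\min}^l,\gamma_{\max}^l]$ to itself with no projection step and no boundary analysis needed. A fixed point is exactly $a=\lambda=\gamma\cdot x$, i.e.\ a NESP. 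Your identity-plus-drift-plus-projection construction can be made to work, but only after you replace $g^{k^{-1}}$ by the $\lambda$-parameterization, at which point the drift is $(\gamma^l\cdot x-a_l)_l$ and you recover the paper's map in the form $\varphi(a)=f(a)$; the projection then becomes redundant.
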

\begin{proof}
Consider the hyper-planes $H_l:\lambda_l-\sum_{i=1}^m \gamma_i^l x_i=0,\ \forall l\leq k$ and the
corresponding half spaces $H_l^-: \lambda_l-\sum_{i=1}^m \gamma_i^l x_i\leq 0,\ H_l^+: \lambda_l-\sum_{i=1}^m \gamma_i^l
x_i\geq 0,\ \forall l\leq k$.
A point $u\in \mathcal N^k$ corresponds to a NESP of the game $G(\gamma)$, iff $\forall l\leq k, u \in H_l$.
We know that for any $a \in \mathbb R^k$, the points on $\mathcal N^k$ with $\lambda=a$ are the optimal points of $LP^k(a)$,
{\em i.e.}, $OPT^k(a)$. 

Let $\gamma_{min}=(\gamma^1_{min},\dots,$ $\gamma^k_{min})$, where $\gamma^l_{min}=\min_{i\in S_1} \gamma^l_i,\ \forall l\leq k$
and $\gamma_{max}=(\gamma^1_{max},\dots,\gamma^k_{max})$, where $\gamma^l_{max}=\max_{i\in S_1} \gamma^l_i,$ $ \forall l\leq k$.
Consider the box $\mathcal B \in \mathbb R^k$ such that $\mathcal B=\{a\in \mathbb R^k\ |\ \gamma_{min}\leq a\leq
\gamma_{max}\}$\footnote{For any two vectors $a,b\in \mathbb R^n$, by $a\leq b$ we mean $a_i\leq b_i,\ \forall i\leq n$.}.
For the rank-$1$ case, $\mathcal B$ is an interval. We may obtain $OPT^k(\gamma_{min})$ and $OPT^k(\gamma_{max})$ by solving
$LP^k(\gamma_{min})$ and $LP^k(\gamma_{max})$ respectively. Clearly, $OPT^k(\gamma_{min})\in \bigcap_{l\leq k} H_l^-$ and 
$OPT^k(\gamma_{max})\in \bigcap_{l\leq k} H_l^+$. 
It is easy to see that, all the $a \in \mathbb R^k$ such that $\mathcal N^k$ intersects all the hyper-planes ($H_l$) together
at $OPT^k(a)$, lies in the box $\mathcal B$.

The points corresponding to the Nash equilibria of the game $G(\gamma)$ may also be modeled as the fixed points of the
function $f:\mathcal B\rightarrow \mathcal B$ such that,
\[
f(a)=(\sum_{i=1}^m \gamma^1_i x_i,\ \dots,\ \sum_{i=1}^m \gamma^k_i x_i), 
\mbox{ where } (x,\lambda,\pi_2)=\{w\in Q'^k\ |\ (v,w)\in OPT^k(a),\ v \in P\}
\]
For every $a \in \mathcal B$, the corresponding $x$ is well defined in the above expression (Proof of Lemma
\hyperref[dis_le1]{A}),
and may be obtained in polynomial time by solving $LP^k(a)$.
It is easy to see that the function $f$ is a piece-wise linear function. 
\qed
\end{proof}

It seems that for a given $a\in \mathbb R^k$, there is a way to trace the points in the intersection of $\mathcal N^k$ and
$\lambda_l=a_l, l\neq i$, such that $\lambda_i$ increases monotonically (analysis similar to Lemma \ref{ho_le3}). Using this and
the simple structure of $\mathcal N^k$, is there a way to locate a fixed point of $f$ in polynomial time?

\section{Conclusion}\label{conc}
In this paper, we establish a homeomorphism between an $m$-dimensional affine subspace $\Gamma$ of the bimatrix game space
and it's Nash equilibrium correspondence $E_\Gamma$, where $\Gamma$ contains only rank-$1$ games. To the best of our
knowledge, this is the first structural result for a subspace of the bimatrix game space.  The homeomorphism maps that we
derive are very different than the ones given by Kohlberg and Mertens for the bimatrix game space \cite{koh} and it builds on
the structure of $E_\Gamma$.  Further, using this structural result we design two algorithms. The first algorithm finds a
Nash equilibrium of a rank-$1$ game in polynomial time. This settles an open question posed by Kannan and Theobald
\cite{kan} and Theobald \cite{the}. The second algorithm enumerates all the Nash equilibria of a rank-$1$ game and finds at
least one Nash equilibrium of a general bimatrix game.

Further, we extend the above structural result by establishing a homeomorphism between $km$-dimensional affine subspace
$\Gamma^k$ and it's Nash equilibrium correspondence $E_{\Gamma^k}$, where $\Gamma^k$ contains only rank-$k$ games. We hope
that this homeomorphism result will help in designing a polynomial time algorithm to find a Nash equilibrium of a fixed rank
game.

\begin{appendix}
\section{Regions in the Game Space}\label{region}
In this section, we analyze the structure of $E_\Gamma$ in detail. For every vertex $v \in \mathcal N^P$, first we identify a
region in the game space and the points in $E_\Gamma$ corresponding to the region. Later we combine them to get the complete
structure of $E_\Gamma$. 

For a vertex $v=(y,\pi_1)$ of $P$, let $R(v)=\{\alpha\ |\ (\alpha,x,y) \in E_\Gamma, \mbox{ for some } (x,\lambda,\pi_2)\in
\mathcal E_v\}$ be it's region in the game space, {\em i.e.}, the set of games with at least one NE corresponding to $v$.
Clearly, $R(v)$ is non-empty only when $v \in \mathcal N^P$. For a $w=(x,\lambda,\pi_2) \in \mathcal E_v$, let $H_w$ be the
hyper-plane $\sum_{i=1}^m x_i \alpha_i - \lambda=0$ in the game space. By $\alpha' \in H_w$ we mean $H_w(\alpha')=0$.  Recall
that for a game $G(\alpha) \in \Gamma$ the row-player's best response polytope is $P(\alpha)=P$, and column-player's
best response polytope is $Q(\alpha)$ which may be obtained by replacing $\lambda$ by $\sum_{i=1}^m \alpha_i x_i$ in $Q'$ of 
(\ref{eq3}). 

\begin{lemma}\label{le51}
Let $v=(y,\pi_1) \in \mathcal N^P$ be a vertex. 
$\alpha' \in R(v)$ iff $\exists w \in \mathcal E_v$ such that $H_w(\alpha')=0$. 
\end{lemma}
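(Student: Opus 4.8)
The plan is to prove the two implications by unwinding the definition of $R(v)$ and using the correspondence between $E_\Gamma$ and $\mathcal N$ from Lemmas \ref{le1} and \ref{le_con}. Recall that, by definition, $\alpha'\in R(v)$ iff there is some $w=(x,\lambda,\pi_2)\in\mathcal E_v$ — so that $((y,\pi_1),(x,\lambda,\pi_2))\in\mathcal N$ — with $(\alpha',x,y)\in E_\Gamma$, where $v=(y,\pi_1)$; and that $H_w(\alpha')=0$ means exactly $\sum_{i=1}^m x_i\alpha'_i=\lambda$.

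For the ``if'' direction, suppose $w=(x,\lambda,\pi_2)\in\mathcal E_v$ with $\sum_{i=1}^m x_i\alpha'_i-\lambda=0$. Then $((y,\pi_1),(x,\lambda,\pi_2))\in\mathcal N$, and the equality $\sum_i x_i\alpha'_i-\lambda=0$ is precisely the relation used in the proof of Lemma \ref{le1}(1): taking $\alpha=\alpha'$ there, $(v,(x,\pi_2))$ is a fully-labeled pair of $P(\alpha')\times Q(\alpha')$, so $(\alpha',x,y)\in E_\Gamma$ by Lemma \ref{ne_le}. Since $w\in\mathcal E_v$, this witnesses $\alpha'\in R(v)$.

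For the ``only if'' direction, assume $\alpha'\in R(v)$, and fix $w_0=(x,\lambda_0,\pi_{2,0})\in\mathcal E_v$ with $(\alpha',x,y)\in E_\Gamma$. Apply the map $f$ of Lemma \ref{le_con} to $(\alpha',x,y)$ to obtain $f(\alpha',x,y)=\bigl((y,\,x^TAy),\,(x,\,x^T\alpha',\,x^T(C+\alpha'\cdot\beta^T)y)\bigr)\in\mathcal N$, and let $w$ denote its $Q'$-component $(x,\,x^T\alpha',\,x^T(C+\alpha'\cdot\beta^T)y)$. The one thing to check is that the $P$-component $(y,x^TAy)$ equals $v=(y,\pi_1)$, i.e.\ that $\pi_1=x^TAy$. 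Since $v\in\mathcal N^P$, there is $w'=(x',\lambda',\pi'_2)$ with $(v,w')\in\mathcal N$; as $\sum_i x'_i=1$ we have $L(w')\cap S_1\subsetneq S_1$, so $L(v)\cap S_1\neq\emptyset$, and because $v\in P$ satisfies $A_ky-\pi_1\le 0$ for all $k$ this forces $\pi_1=\max_k A_k y$. On the other hand $(x,y)$ is a NESP of $G(\alpha')$, so $x_i>0\Rightarrow A_iy=\max_k A_k y$ and hence $x^TAy=\sum_i x_i A_i y=\max_k A_k y=\pi_1$. Therefore $f(\alpha',x,y)=(v,w)\in\mathcal N$, so $w\in\mathcal E_v$, and $H_w(\alpha')=\sum_i x_i\alpha'_i-x^T\alpha'=0$; thus $w$ is the required witness.

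The proof is largely bookkeeping, but the one genuinely load-bearing observation is that in the ``only if'' direction one cannot simply reuse $w_0$: its $\lambda$-coordinate $\lambda_0$ need not equal $x^T\alpha'$, and one must instead move along the edge $\mathcal E_v$ to the point of $\mathcal E_v$ whose $\lambda$-coordinate is $x^T\alpha'$ (this is exactly the point produced by $f$). Verifying that this moved point still forms a fully-labeled pair with $v$ is the substantive step, and it reduces, as above, to the fact that $\pi_1$ is pinned down by $y$ as $\max_k A_k y$. I do not expect any obstacle beyond keeping the roles of $x$ and $y$ straight: in ``$(\alpha',x,y)\in E_\Gamma$'' the vector $x$ must be the $x$-component of the chosen $w\in\mathcal E_v$ and $y$ the $y$-component of $v$, as in the definition of $R(v)$.
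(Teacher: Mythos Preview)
Your proof is correct and follows the same route as the paper: in both directions one lifts between $Q(\alpha')$ and $Q'$ via $\lambda=\sum_i \alpha'_i x_i$ exactly as in Lemma~\ref{le1}, and the witness $w$ you produce via the map $f$ of Lemma~\ref{le_con} is the same point $w'=(x',\lambda',\pi_2)$ with $\lambda'=\sum_i\alpha'_i x'_i$ that the paper constructs directly. The one place you are more careful than the paper is in the ``only if'' direction, where you explicitly verify that the $P$-component of $f(\alpha',x,y)$ equals $v$, i.e.\ that $\pi_1=\max_k A_k y=x^TAy$; the paper simply asserts ``$(v,u')$ makes a fully-labeled pair of $P(\alpha')\times Q(\alpha')$'' without spelling this out, so your added check is a genuine (if small) improvement in rigor rather than a different idea.
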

\begin{proof}
($\Rightarrow$) Suppose $\alpha' \in R(v) \Rightarrow (\alpha',x',y) \in E_\Gamma$ for some $u'=(x',\pi_2) \in Q(\alpha')$.
$(\alpha',x',y) \in E_\Gamma \Rightarrow (v,u')$ makes a fully-labeled pair of $P(\alpha')\times Q(\alpha')$. Let
$w'=(x',\lambda',\pi_2)$, where $\lambda'=\sum_{i=1}^m \alpha'_i x'_i$. Clearly, $w' \in Q'$ and $L(u')=L(w')\Rightarrow 
w' \in \mathcal E_v$, and $H_{w'}(\alpha')=0$.

($\Leftarrow$)
For a $w=(x,\lambda,\pi_2)\in \mathcal E_v$ consider a point $\alpha'\in H_w$. Clearly, the $Q(\alpha')$ of the
game $G(\alpha')$ has a vertex $(x,\pi_2)$ with the same set of tight equations as $w$, and it makes fully-labeled vertex
pair with $v$. This makes $(x,y)$ a NESP of $G(\alpha')\Rightarrow (\alpha',x,y) \in E_\Gamma \Rightarrow \alpha' \in R(v)$. 
\qed
\end{proof}

Lemma \ref{le51} implies that $R(v)=\displaystyle\bigcup_{\forall w \in \mathcal E_v} H_w$. The following lemmas identify the
boundary of $R(v)$. 

\begin{lemma}\label{le6}
Let $(I,J)$ be the support-pair of $v \in P$ with $\mathcal E_v\neq \emptyset$. 
\begin{enumerate}
\item If $|I|=|J|\geq 2$, then $R(v)$ is a union of two convex sets, and it is defined by only two hyper-planes.  
\item If $|I|=|J|=1$, then $R(v)$ is a convex-set. It has one defining hyper-plane if $v$ is either $v_s$ or $v_e$, otherwise
it has two parallel defining hyper-planes.  
\end{enumerate}
\end{lemma}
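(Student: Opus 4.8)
The plan is to combine Lemma~\ref{le51}, which identifies $R(v)$ with $\bigcup_{w\in\mathcal E_v} H_w$, with the explicit descriptions of the edge $\mathcal E_v$ already obtained in the proof of Lemma~\ref{sle4}. The crucial remark is that along the segment $\mathcal E_v$ the normal vector $x$ and the offset $\lambda$ of the hyperplane $H_w=\{\alpha\mid \sum_i x_i\alpha_i-\lambda=0\}$ vary affinely, so the family $\{H_w\}_{w\in\mathcal E_v}$ is a pencil and its union is controlled entirely by the two endpoints of $\mathcal E_v$.

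First I would dispose of part~2, $|I|=|J|=1$. Writing $I=\{i_1\}$, the proof of Lemma~\ref{sle4} shows that $x$ is the constant unit vector $e_{i_1}$ on $\mathcal E_v$ while $\lambda$ sweeps an interval: a bounded closed interval $[\lambda_{lo},\lambda_{hi}]$ when $v\notin\{v_s,v_e\}$, and a half-line $(-\infty,\lambda_s]$ or $[\lambda_e,\infty)$ when $v=v_s$ or $v=v_e$. Hence each $H_w$ is $\{\alpha\mid \alpha_{i_1}=\lambda\}$, and $R(v)=\{\alpha\mid \alpha_{i_1}\in[\lambda_{lo},\lambda_{hi}]\}$ is the convex slab between the two parallel hyperplanes $\alpha_{i_1}=\lambda_{lo}$ and $\alpha_{i_1}=\lambda_{hi}$; in the $v_s$/$v_e$ cases it degenerates to a half-space with a single defining hyperplane. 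This is exactly the claim of part~2.

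For part~1, $|I|=|J|\ge 2$ (so automatically $v\neq v_s,v_e$), the proof of Lemma~\ref{sle4} (Case~2) shows $\mathcal E_v$ is the line segment between two of its vertices $w_l$ and $w_u$; a generic point is the affine combination $w_t=(1-t)w_l+tw_u$, $t\in[0,1]$, with $x_t$ and $\lambda_t$ affine in $t$. Regarding $H_w(\alpha):=\sum_i x_i\alpha_i-\lambda$ as an affine function of $\alpha$, we then get the identity $H_{w_t}(\alpha)=(1-t)H_{w_l}(\alpha)+tH_{w_u}(\alpha)$. By the intermediate value theorem, $\alpha\in R(v)$ iff $H_{w_t}(\alpha)=0$ for some $t\in[0,1]$ iff $H_{w_l}(\alpha)\cdot H_{w_u}(\alpha)\le 0$, so
\[
R(v)=\{\alpha\mid H_{w_l}(\alpha)\ge 0,\ H_{w_u}(\alpha)\le 0\}\ \cup\ \{\alpha\mid H_{w_l}(\alpha)\le 0,\ H_{w_u}(\alpha)\ge 0\},
\]
a union of two convex sets defined by the two hyperplanes $H_{w_l}$ and $H_{w_u}$ only; these two hyperplanes are distinct because, as noted in the proof of Lemma~\ref{sle4}, $x$ is non-constant on $\mathcal E_v$ in this case.

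I expect the only real subtlety to be the pencil/affine-dependence step in part~1: one has to be confident that no intermediate $w\in\mathcal E_v$ contributes a hyperplane poking outside the region cut out by the two extremes, and this is precisely what the convex-combination identity together with the intermediate value theorem guarantee. Everything else is bookkeeping against the descriptions of $\mathcal E_v$ established earlier, so the proof should be short.
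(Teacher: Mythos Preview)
Your proposal is correct and follows essentially the same route as the paper: both use Lemma~\ref{le51} to write $R(v)=\bigcup_{w\in\mathcal E_v}H_w$, invoke the endpoint description of $\mathcal E_v$ from Lemma~\ref{sle4}, observe that $H_{w_t}$ is the convex combination $(1-t)H_{w_l}+tH_{w_u}$, and conclude that $R(v)$ is the opposite-sign region $\{H_{w_l}\ge 0,\ H_{w_u}\le 0\}\cup\{H_{w_l}\le 0,\ H_{w_u}\ge 0\}$ in part~1 and a slab/half-space in part~2. Your write-up is slightly more explicit (the intermediate-value/product-of-signs step and the distinctness of $H_{w_l},H_{w_u}$ via non-constant $x$), but this is presentation, not a different argument.
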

\begin{proof}
For the first part, let the bounding vertices of edge $\mathcal E_v \in \mathcal N^{Q'}$ be $w_1$ and $w_2$ (Lemma
\ref{sle4}). Every point $w \in \mathcal E_v$ may be written as a convex combination of $w_1$ and $w_2$. Therefore, the
corresponding hyper-plane $H_w$ may be written as a convex combination of the hyper-planes $H_{w_1}$ and $H_{w_2}$. This
implies that $\forall w \in \mathcal E_v,\ H_{w_1}\cap H_{w_2} \subset H_w$. Further, it is easy to see that the union of
convex sets $\{H_{w_1}\geq 0,\ H_{w_2}\leq 0\}$ and $\{H_{w_1}\leq 0,\ H_{w_2}\geq 0\}$ forms the region $R(v)$ (Lemma
\ref{le51}), and hence the hyper-plane $H_{w_1}$ and $H_{w_2}$ defines the boundary of $R(v)$. 

For the second part if $v=v_s$ or $v=v_e$ then the corresponding edge $\mathcal E_v \in \mathcal N^{Q'}$ has exactly
one vertex $w_1$ (Lemma \ref{sle4}), and hence there is exactly one defining hyper-plane of $R(v)$, namely $H_{w_1}$.
Moreover, for $v=v_s$ and $v=v_s$ the region $R(v)$ is defined by $H_{w_1}\leq 0$ and $H_{w_1}\geq 0$ respectively. 

If $v\neq v_s$ and $v\neq v_e$, then the edge $\mathcal E_v$ has two bounding vertices $w_1$ and $w_2$, and $x$ remains
constant on $\mathcal E_v$ (Lemma \ref{sle4}). Therefore, the hyper-planes $H_{w_1}$ and $H_{w_2}$ are parallel to each other.
Further, since any point $w \in \mathcal E_v$ may be written as a convex combination of $w_1$ and $w_2$, the hyper-plane
$H_w$ lies between $H_{w_1}$ and $H_{w_2}$. Hence, the hyper-planes $H_{w_1}$ and $H_{w_2}$ define the boundary of the region
$R(v)$ (Lemma \ref{le51}).  \qed
\end{proof}

Lemma \ref{le6} shows that the regions are very simple and they are defined by at most two hyper-planes. Moreover, if
$H_{w_1}$ and $H_{w_2}$ are the two defining hyper-planes of $R(v)$ then $\forall w \in \mathcal E_v,\ H_{w_1}\cap H_{w_2}
\subset H_w$.  Next, we discuss how the adjacency of $v$ in $\mathcal N^P$ carries forward to the adjacency of the regions
through the corresponding defining hyper-planes.

\begin{lemma}\label{le8}
If the edges $\mathcal E_v$ and $\mathcal E_{v'}$ share a common vertex $w \in \mathcal N^{Q'}$, then the hyper-plane $H_w=0$
forms a boundary of both $R(v)$ and $R(v')$.
\end{lemma}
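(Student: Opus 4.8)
The plan is to reduce Lemma~\ref{le8} to Lemma~\ref{le6}, by observing that a vertex shared by the two edges $\mathcal E_v$ and $\mathcal E_{v'}$ is necessarily a \emph{bounding} vertex of each of them, and that the defining hyper-planes of a region $R(v)$ are exactly those attached to the bounding vertices of $\mathcal E_v$.

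First I would argue that $w$ is a bounding vertex of $\mathcal E_v$. By $O_4$ the set $\mathcal E_v$ is an edge of $Q'$, i.e.\ a $1$-dimensional face --- a bounded segment, or in the $v_s,v_e$ cases a ray with a single vertex. Since $w$ is assumed to be a vertex of $Q'$ and lies on $\mathcal E_v$, it must be an extreme point of this segment (or ray), hence one of the at most two bounding vertices $w_1,w_2$ furnished by Lemma~\ref{sle4}. The same reasoning applied to $\mathcal E_{v'}$ shows that $w$ is also a bounding vertex of $\mathcal E_{v'}$.

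Next I would invoke the structure of $R(v)$ from Lemma~\ref{le6} and its proof. There it is shown that every $w'\in\mathcal E_v$ is a convex combination of $w_1$ and $w_2$, so the associated hyper-plane $H_{w'}$ is the corresponding convex combination of $H_{w_1}$ and $H_{w_2}$; combined with $R(v)=\bigcup_{w'\in\mathcal E_v}H_{w'}$ (Lemma~\ref{le51}), this forces $R(v)$ to be exactly the (one or two) convex set(s) cut out by $H_{w_1}$ and $H_{w_2}$, whose boundary is defined by these hyper-planes. (In the $v\in\{v_s,v_e\}$ case $\mathcal E_v$ has a single bounding vertex $w_1$ and a single defining hyper-plane $H_{w_1}$, and the same conclusion holds.) Since $w\in\{w_1,w_2\}$ --- or $w=w_1$ in the unbounded case --- the hyper-plane $H_w$ is one of the defining hyper-planes of $R(v)$, i.e.\ $H_w=0$ forms a boundary of $R(v)$. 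Running the identical argument for $v'$ with the edge $\mathcal E_{v'}$ shows $H_w=0$ is also a boundary of $R(v')$, completing the proof.

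I do not expect a genuine obstacle here: the content is essentially bookkeeping on top of Lemmas~\ref{sle4},~\ref{le51} and~\ref{le6}. The only mild point to check is compatibility with the unbounded edges of Lemma~\ref{sle4}, which have a single bounding vertex and hence correspond to the "single defining hyper-plane" branch of Lemma~\ref{le6} --- a trivial case that still yields $H_w$ as (the) boundary. I would also remark that this lemma is exactly the gluing statement one needs to patch the local regions $R(v)$ across adjacent vertices of $\mathcal N^P$, and thereby assemble the global description of $E_\Gamma$.
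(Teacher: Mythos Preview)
Your proposal is correct and follows essentially the same approach as the paper: both reduce Lemma~\ref{le8} to Lemma~\ref{le6} by identifying the defining hyper-planes of $R(v)$ with the bounding vertices of $\mathcal E_v$. Your version is in fact more explicit than the paper's, which simply cites the one-to-one correspondence from Lemma~\ref{le6} without spelling out (as you do) why a shared vertex $w$ must be a bounding vertex of each edge.
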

\begin{proof}
Lemma \ref{le6} establishes a one-to-one correspondence between the bounding vertices of $\mathcal E_v$ and the defining
hyper-planes of the region $R(v)$. For every bounding vertex $w$ of $\mathcal E_v$, there is a defining hyper-plane $H_w$ of
$R(v)$ and vice-versa, and $H_w\subset R(v)$. 
\qed
\end{proof}

Clearly, $R(v)$ and $R(v')$ are adjacent through a common defining hyper-plane $H_w$, where $w=\mathcal E_v\cap
\mathcal E_{v'}$ is a vertex. Moreover, for every defining hyper-plane of $R(v)$ there is a unique adjacent region. Hence, every region
has at most two adjacent regions and there are exactly two regions with only one adjacent region (Lemmas \ref{le6} and
\ref{le8}). In short adjacency of vertices of $\mathcal N^P$ carries forward to the regions.

Let {\em region graph} be the graph, where for every non-empty region $R(v)$ there is a node in the graph, and two nodes are
connected iff the corresponding regions are adjacent.  Clearly, the degree of every node in this graph is at most two and
there are exactly two nodes with degree one. The region graph consists of a path and a set of cycles, and it is isomorphic to
$\mathcal N^P$ where a vertex $v\in \mathcal N^P$ is mapped to the vertex $R(v)$. Therefore, for every component of $\mathcal
N$, we get a component of the region graph. 

To identify a component of the region graph with a component of $E_\Gamma$, first we distinguish the part of $E_\Gamma$
related to $R(v)$.  For an $\alpha \in R(v)$, let $\mathcal S(\alpha)=E_\Gamma \cap \{(\alpha,x,y)\ |\ y\in \Delta_2 \mbox{
and } (x,\lambda,\pi_2)\in \mathcal E_v\}$. Let $v=(y,\pi_1)\in \mathcal N^P$ be a vertex and $\overline{w_1,w_2}=\mathcal
E_v\in \mathcal N^{Q'}$. Let $(x^1,\lambda_1,\pi_2^1)=w_1$, $(x^2,\lambda_2,\pi_2^2)=w_2$, $\overline{v_1,v}=\mathcal
E_{w_1}\in \mathcal N^P$ and $\overline{v,v_2}=\mathcal E_{w_2}\in \mathcal N^P$. We may easily deduce the following facts. 

\begin{enumerate}
\item Let $w'=(x',\lambda',\pi'_2) \in \mathcal E_v$ be a non-vertex point and $\alpha' \in H_{w'}\setminus (H_{w_1}\cap
H_{w_2})$, then $\mathcal S(\alpha')=\{(\alpha',x',y)\}$.
\item For $\alpha' \in H_{w_1}\setminus H_{w_2}$, $\mathcal S(\alpha')=\{(\alpha',x^1,y')\ |\ (y',\pi'_1)\in
\overline{v_1,v}\}$. Similarly, for $\alpha' \in H_{w_2}\setminus H_{w_1}$, $\mathcal S(\alpha')=\{(\alpha',x^2,y')\ |\
(y',\pi'_1)\in\overline{v,v_2}\}$.  
\item For $\alpha' \in H_{w_1}\cap H_{w_2}$, $\mathcal S(\alpha')=\{(\alpha',x',y')\ |\ ((y',\pi'_1),(x',\lambda',\pi'_2))\in
(\overline{v_1,v},w_1)\cup (v,\overline{w_1,w_2})\cup (\overline{v,v_2},w_2)\}$. From Lemma \ref{le6}, $\forall w\in\mathcal
E_v,\ H_{w_1} \cap H_{w_2} \subset H_w$. Therefore the projection of edges $(\overline{v_1,v},w_1),\ (v,\overline{w_1,w_2}),\
(\overline{v,v_2},w_2)$ on $(y,\pi_1,x,\pi_2)$-space is contained in $P(\alpha')\times Q(\alpha')$ and all the points on them
are fully-labeled. 
\end{enumerate}

The above facts imply that $(x,y)$ changes continuously inside the region ($R(v)$) as well as on the boundary
($H_{w_1},H_{w_2}$), and their values come from the corresponding adjacent edges of $\mathcal N$ ($(\overline{v_1,v},w_1)$,
$(v,\overline{w_1,w_2}),\ (\overline{v,v_2},w_2)$). Moreover, the consistency is maintained across the regions through the
NESPs of the games on the common defining hyper-plane. 

All of these imply that there is a path between two points of $E_\Gamma$ iff the corresponding points in $\mathcal N$ lie on
the same component of $\mathcal N$. Therefore $E_\Gamma$ does not form a single connected component if $\mathcal N$
has more than one component. From the discussion in Section \ref{games_polytopes}, we know that $\mathcal N$ contains at
least a path and may contain some cycles. Hence $E_\Gamma$ forms a single connected component iff $\mathcal N$ contains only
the path. Example \ref{ex1} illustrates that $E_\Gamma$ is not connected in general by illustrating a $\mathcal N$ with
cycles.
\end{appendix}
\end{document}